
\documentclass[10pt]{amsart}

\usepackage{hyperref}

\usepackage{graphicx, enumerate, url}
\usepackage{amssymb,mathtools,amsthm}
\usepackage{algorithm}
\usepackage{algpseudocode}
\usepackage{color}
\usepackage{tikz}
\usetikzlibrary{3d,calc}
\usepackage{bm,bbm}

\numberwithin{equation}{section}
\numberwithin{algorithm}{section}

\theoremstyle{plain}
\newtheorem{theorem}{Theorem}[section]
\newtheorem{proposition}[theorem]{Proposition}
\newtheorem{lemma}[theorem]{Lemma}
\newtheorem{corollary}[theorem]{Corollary}

\theoremstyle{definition}

\theoremstyle{remark}
\newtheorem{remark}[theorem]{Remark}

\DeclareMathOperator{\tr}{Tr}
\DeclareMathOperator{\diag}{diag}
\DeclareMathOperator{\rank}{rank}

\DeclareMathOperator{\range}{\mathrm{range}}



\newcommand{\N}{\mathbb{N}}
\newcommand{\C}{\mathbb{C}}
\newcommand{\I}{\mathbb{I}}

\newcommand{\R}{\mathbb{R}}

\newcommand{\s}{\mathbf{s}}

\newcommand{\x}{\mathbf{x}}

\newcommand{\y}{\mathbf{y}}
\newcommand{\z}{\mathbf{z}}

\newcommand{\cD}{\mathcal{D}}

\newcommand{\cH}{\mathcal{H}}

\newcommand{\ba}{\mathbf{a}}
\newcommand{\bb}{\mathbf{b}}
\newcommand{\bc}{\mathbf{c}}
\newcommand{\bd}{\mathbf{d}}
\newcommand{\be}{\mathbf{e}}
\newcommand{\bbf}{\mathbf{f}}

\newcommand{\bi}{\mathbf{i}}
\newcommand{\bp}{\mathbf{p}}

\newcommand{\bs}{\mathbf{s}}
\newcommand{\bt}{\mathbf{t}}

\newcommand{\bu}{\mathbf{u}}
\newcommand{\bv}{\mathbf{v}}
\newcommand{\bw}{\mathbf{w}}
\newcommand{\bx}{\mathbf{x}}
\newcommand{\by}{\mathbf{y}}

\newcommand{\rA}{\mathrm{A}}
\newcommand{\rB}{\mathrm{B}}

\newcommand{\rF}{\mathrm{F}}
\newcommand{\rH}{\mathrm{H}}
\newcommand{\rI}{\mathrm{I}}

\newcommand{\rM}{\mathrm{M}}
\newcommand{\rP}{\mathrm{P}}

\newcommand{\rS}{\mathrm{S}}
\newcommand{\rT}{\mathrm{T}}
\newcommand{\rU}{\mathrm{U}}

\newcommand{\rW}{\mathrm{W}}

\newcommand{\0}{\mathbf{0}}
\newcommand{\1}{\mathbf{1}}


\newcounter{mnotecount}[section]
\renewcommand{\themnotecount}{\thesection.\arabic{mnotecount}}
\newcommand{\mnote}[1]
{\protect{\stepcounter{mnotecount}}$^{\mbox{\footnotesize
$
\bullet$\themnotecount}}$ \marginpar{
\raggedright\tiny\em
$\!\!\!\!\!\!\,\bullet$\themnotecount: #1} }

\definecolor{dg}{rgb}{0,.5,0} 


\newcommand{\cv}{= \vcentcolon}


\newcommand{\wind}{\mathcal{W}^Q_{C}}
\newcommand{\windd}{\mathcal{W}^Q_{C}}
\newcommand{\windQ}{\mathcal{W}^Q_{C^Q}}

\begin{document}
\title{On Quantum Optimal Transport}
\author{Sam Cole}
\address{Department of Mathematics, University of Missouri, Columbia, Missouri, 65211, USA}
\email{s.cole@missouri.edu}
\author{Micha\l $\,$ Eckstein}
\address{Institute of Theoretical Physics, Jagiellonian University, Krakow, Poland}
\email{michal.eckstein@uj.edu.pl}
\author[Shmuel Friedland]{Shmuel~Friedland}
\address{Department of Mathematics, Statistics, and Computer Science, University of Illinois at Chicago,  Chicago, Illinois, 60607-7045, USA }
\email{friedlan@uic.edu}
\author[K.~{\.Z}yczkowski]{Karol~{\.Z}yczkowski}
\address{Institute of Theoretical  Physics, Jagiellonian University, Krakow, Poland}
\address{Center for Theoretical Physics, Polish Academy of Science, Warsaw, Poland}
\email{karol.zyczkowski@uj.edu.pl}

\date{July 5, 2022}
\begin{abstract}  
We analyze 
a quantum version of the Monge--Kantorovich optimal transport problem.
The quantum transport cost related to a Hermitian cost matrix $C$ is minimized over the set of all bipartite coupling  states
 $\rho^{AB}$ with fixed reduced density matrices $\rho^A$ and $\rho^B$ of  size $m$ and $n$.  The minimum quantum optimal transport 
 cost
 $\rT^Q_{C}(\rho^A,\rho^B)$ can be efficiently computed using semidefinite programming.
 In the case $m=n$ the cost  $\rT^Q_{C}$ 
 gives a semidistance if and only if $C$ is positive semidefinite and vanishes exactly on the subspace of symmetric matrices.  
 Furthermore, if $C$ satisfies the above conditions, then $\sqrt{\rT^Q_{C}}$  induces a quantum analogue of the Wasserstein-2 distance. 
  Taking the quantum cost matrix $C^Q$  
 to be the projector  on the antisymmetric subspace, 
  we provide  a semi-analytic 
expression for $\rT^Q_{C^Q}$  for any pair of single-qubit states
 and show that
its square root 
yields a transport distance on the Bloch ball.
 Numerical simulations 
suggest that this property 
holds also in higher dimensions. 
Assuming that the cost matrix suffers decoherence and that the density matrices 
become diagonal, we
study the quantum-to-classical transition of the Earth mover's distance,
propose a continuous family of interpolating distances,
and demonstrate 
 that the quantum transport is cheaper than the classical one.
Furthermore, we introduce a related quantity --- the SWAP-fidelity --- and compare its properties with the standard Uhlmann--Jozsa fidelity.
We also discuss the quantum optimal transport for general $d$-partite systems.
\end{abstract}
\maketitle
 \noindent {\bf 2020 Mathematics Subject Classification}:  81P40; 90C22; 15A69

\noindent \emph{Keywords}:  Quantum optimal transport, classical optimal transport, coupling of density matrices, semidefinite programming, Wasserstein-2 distance
\maketitle

\tableofcontents

\section{Introduction} \label{sec:intro}
\subsection{Classical and quantum optimal transport and related metrics}
Let us recall the discrete optimal transport problem \cite{Vil09}, 
 as stated in Hitchcock \cite{Hit41} and Kantorovich \cite{Kan60} (prepared in 1939), which is a variation of the classical transport problem initiated by Monge \cite{Mon81}.  Suppose we have $m$ factories producing an amount $G$ of the same product that has to be dispatched  to $n$ customers.  Assume that $x_{ij}^{AB}$ is the proportion of the goods sent from the factory $i$ to consumer $j$.  
Then $x_i^A$ and $x_j^B$ are the proportions of the goods produced by factory $i$ and received by consumer $j$ respectively:
\begin{equation}\label{ccoupl}
x_i^A=\sum_{j=1}^n x_{ij}^{AB},\; i\in[m],\quad x_j^B=\sum_{i=1}^m x_{ij}^{AB}, \; j\in[n],
\end{equation}
where $[m]=\{1,2,\ldots,m\}$.
It is convenient to introduce the random variables $X^A,X^B$  such that
\begin{equation*}
x_i^A=\mathbb{P}(X^A=i),\;i\in[m], \quad x_j^B=\mathbb{P}(X^B=j),\;j\in[n].
\end{equation*}
Then the nonnegative matrix $X^{AB}=[x_{ij}^{AB}]\in\R_+^{m\times n}$ satisfying the above equalities is the joint distribution of the random variable $X^{AB}$: $x_{ij}^{AB}=\mathbb{P}\big(X^{AB}=(i,j)\big)$.  The random variable $X^{AB}$, or the matrix $X^{AB}$, is called a coupling of $X^A$ and $X^B$.  Let $\bx^A=(x_1^A,\ldots,x_m^A)^\top, \bx^B=(x_1^B,\ldots,x_n^B)^\top$ be the probability vectors corresponding to $X^A$ and $X^B$ respectively.
The set of all coupling matrices $X^{AB}$  corresponding to $\bx^A, \bx^B$ is denoted by $\Gamma^{cl}(\bx^A, \bx^B)$.  Note that $X=\bx^A(\bx^B)^\top$, corresponding to the independent coupling of $X^A$ and $X^B$, is in $\Gamma^{cl}(\bx^A, \bx^B)$.  Let $C=[c_{ij}]\in \R^{m\times n}_+$ be a nonnegative matrix where $c_{ij}$ is the transport cost of a unit of goods from the factory $i$ to the consumer $j$.  The classical optimal transport problem, abbreviated as OT, is
\begin{equation}\label{COTP}
\rT_{C}^{cl}(\bx^A,\bx^B)=\min_{ X\in \Gamma^{cl}(\bx^A,\bx^B)}\tr C X ^\top,
\end{equation}
where $\tr $ denotes the trace of a square matrix, and $X^\top$ the transpose of $X$.
The optimal transport problem is a linear programming problem (LP) which can be solved in polynomial time in the size of the inputs $\bx^A,\bx^B$ and  
the matrix $C$ \cite{CCPS}. 

Assume now that $m=n$.  Let $C=[c_{ij}]\in \R^{n\times n}_+$ be a symmetric nonnegative matrix with zero diagonal and positive off-diagonal entries such that $c_{ij}$ induces a distance on $[n]$: dist$(i,j)=c_{ij}$.  That is, in addition to the above conditions one has the triangle inequality $c_{ij}\le c_{ik}+c_{kj}$ for $i,j,k\in[n]$.  
For $p>0 $ denote $(C^{\circ})^p=[c_{ij}^p]\in\R_+^{n\times n}$.
Then the quantity
\begin{equation}\label{Wasserp}
\rW_{C,p}^{cl}(\bx^A,\bx^B)=\big(\rT_{(C^{\circ})^p}^{cl}(\bx^A,\bx^b)\big)^{1/p}, \quad p\ge 1
\end{equation}
is the Wasserstein-$p$ distance on the simplex of probability vectors, $\Pi_n\subset \R_+^n$. This follows from the continuous version of the Wasserstein-$p$ distance, as in \cite{Vas69}.  See \cite{Cut13} for $p=1$. It turns out that  $\rT_{C}^{cl}(\bx^A,\bx^B)$ has many recent applications in machine learning \cite{AWR17,ACB17,LG15,MJ15,SL11}, statistics  \cite{BGKL17,FCCR18,PZ16,SR04} and computer vision \cite{BPPH11,RTG00,SGPCBNDG}.

Several attempts to generalize the notion of the Monge--Kantorovich
distance in quantum information theory are known.
An early contribution defines the distance between any
 two quantum states  by the Monge distance between the corresponding Husimi
 functions \cite{ZS98,ZS01}.   
As this approach depends on the choice of the set of coherent states,
other efforts were undertaken  \cite{AF17,CHW19,PT21,GMP16,GP18}
to introduce  the transport distance between quantum states 
by applying the Kantorovich--Wasserstein optimization
over the set of bipartite quantum states with fixed marginals.
However, none of the proposed `distances' satisfy all of the properties of a genuine distance.
 Even though the matrix transport problem
 has been often investigated in the recent literature 
\cite{BGJ19,BV01,CGGT17,Duv20,Fri20,FECZ,FGZ19},
 this aim has not been fully achieved until now \cite{Ikeda20,Rie18,YZYY19}.
 
Quantum optimal transport has found a number of applications in quantum physics: the measure of proximity of quantum states \cite{CGP20,CM20,DR20,PMTL20},  quantum metrology \cite{BC94,LYLW20,Sa17}, and quantum machine learning \cite{QML,CHW19,KdPMLL21,LW18}.
The aim of this work is to present a constructive solution for
the minimal transport cost 
 in the finite-dimensional quantum setting.
Here we provide the complete account of our mathematical study related to quantum optimal transport.   
We show that, for the case of two arbitrary single-qubit quantum states and 
a projection cost matrix $C^Q$, the root of the optimal transport cost gives the Wasserstein-2 distance.
The physical aspects of quantum optimal transport, along with its potential applications in quantum information processing, are discussed in the companion paper \cite{FECZ}.
 
  Denote by $\Omega_m$ the convex set of density matrices, i.e., the set of $m\times m$ Hermitian positive semidefinite matrices of trace one.
Let $\rho^{A}\in\Omega_m$ and $\rho^B\in \Omega_n$.  A quantum coupling of $\rho^A,\rho^B$ is a density matrix $\rho^{AB}\in\Omega_{mn}$, whose partial traces give $\rho^A, \rho^B$ respectively: $\tr_B\rho^{AB}=\rho^A$ and $\tr_A\rho^{AB}=\rho^B$.  The set of all quantum couplings of $\rho^{AB}$ is denoted by $\Gamma^{Q}(\rho^A, \rho^B)$.  Observe that we always have $\rho^A\otimes\rho^B\in \Gamma^{Q}(\rho^A, \rho^B)$.
Let $C$ be a given positive Hermitian matrix of order $mn$.  The {\em quantum optimal transport} problem,  
 abbreviated as QOT, is defined as follows:
\begin{equation}\label{defkapCAB}
\rT^Q_{C}(\rho^A,\rho^B)=\min_{\rho^{AB}\in \Gamma^{Q}(\rho^A,\rho^B)} \tr  C\rho^{AB}.
\end{equation}
The matrix $C$ can be viewed as a ``cost matrix'' in certain instances that will be explained later.  
%

The quantum optimal transport has a simple operational interpretation. Suppose that Alice and Bob represent two parties who share a bipartite state $\rho^{AB}$. Their local detection statistics are fixed by the marginals $\rho^A = \tr_B\rho^{AB}$ and $\rho^B = \tr_A\rho^{AB}$. If $C$ is an effect, i.e. $0 \leq C \leq 1$, then $\rT^Q_{C}(\rho^A,\rho^B)$ is the minimum probability of observing $C$ with fixed local states $\rho^A$, $\rho^B$. If $C$ is just positive semidefinite, then $\rT^Q_{C}(\rho^A,\rho^B)$ is the minimum expected value of the observable $C$. For more details on the physical interpretation and applications we refer the reader to the companion paper \cite{FECZ} and references therein.

%

Observe that finding the value of $\rT^Q_{C}(\rho^A,\rho^B)$ is a semidefinite programming problem (SDP).  Using standard complexity results for SDP, as in  \cite[Theorem~5.1]{VB96},
we show that the complexity of finding
the value of $\rT^Q_{C}(\rho^A,\rho^B)$ within a given precision $\varepsilon>0$ is polynomial in the size of the given data and $\log\frac{1}{\varepsilon}$.
There are quantum algorithms that offer a speedup for SDP \cite{brandao2017quantum}.

One of the goals of this work is to analyze the properties of $\rT_{C}^Q(\rho^A,\rho^B)$.  It is useful to compare $\rT_{C}^Q$ with 
its classical counterpart $\rT_{C^{cl}}^{cl}$ defined as follows.  Observe that the diagonal entries of $\rho^A$ and $\rho^B$ form two
probability vectors $\bp^A$ and $\bp^B$.  Physically, 
the suppression of the off-diagonal terms in a density matrix of a system corresponds the process of decoherence caused by the interaction of the system with some environment.
For $\bx\in\R^n,X\in\R^{n\times n}$ denote by $\diag(\bx),\diag(X)\in\R^{n\times n}$ the diagonal matrices induced by the entries  of $\bx$ 
and the diagonal entries of $X$,  respectively.
For $\bp^A\in\Pi_m,\bp^B\in\Pi_n$ denote by
$\Gamma_{de}^Q(\diag(\bp^A),\diag(\bp^B))$
the convex subset of diagonal matrices in $\Gamma^Q(\diag(\bp^A),\diag(\bp^B))$.
We show that $\Gamma_{de}^Q(\diag(\bp^A),\diag(\bp^B))$ is isomorphic to 
the set $\Gamma^{cl}(\bp^A,\bp^A)$
of classical coupling matrices.
Let now $C^{cl} \in\R^{m\times n}$ be the matrix induced by the diagonal entries of $C$ (see Section \ref{sec:diagdm}).
  Then 
\begin{equation}\label{qtrancheap}
\rT_{C}^Q(\diag(\bp^A),\diag(\bp^B))\le \rT_{C^{cl}}^{cl}(\bp^A,\bp^B) \; \textrm{ for } \;
\bp^A\in\Pi_m, \bp^B\in\Pi_n.
\end{equation}
We give examples where strict inequality holds.  Specific cases of this inequality were studied in \cite{CGP20}.  

Le us now concentrate on the most important case $m=n$. We would like to find an analog of the Wasserstein-$p$ distance on $\Omega_n$.
A symmetric function sdist$:\Omega_n\times \Omega_n\to [0,\infty)$ is called a semidistance when sdist$(\rho^A,\rho^B)=0$ if and only if $\rho^A=\rho^B$.
We show that $\rT^Q_C$ is a semidistance if and only if $C$ is zero on $\cH_S$ and positive definite on $\cH_A$, where $\cH_S$ and $\cH_A$ are the subspaces of symmetric and skew-symmetric $n\times n$ matrices viewed as subspaces of $\C^n\otimes\C^n=\C^{n\times n}=\cH_S\oplus\cH_A$. 
 If $C$ is zero on $\cH_S$ and positive definite on $\cH_A$, then so is $C^p$ for any $p>0$. Consequently, 
 $\big( \rT_{C^p}^Q \big)^{1/p}$ is also a semidistance for any $p>0$. We further show that $\sqrt{\rT_C^Q}$ is a weak distance, i.e. there exists a distance $D'$ on $\Omega_n$ such that 
$\sqrt{\rT_C^Q(\rho^A,\rho^B)} \ge D'(\rho^A,\rho^B)$ for all $\rho^A,\rho^B\in\Omega_n$  -- see Theorem \ref{Wa2metthm}).
%
%
Then, we prove that for such $C$ there exists a unique maximum distance $D'$ on $\Omega_n$, which we shall call the induced quantum Wasserstein-2 distance, 
 given by the formula:
\begin{equation}\label{defQOTmet}
\wind(\rho^A,\rho^B)=\lim_{N\to\infty} \min_{\substack{\rho^{A_1},\ldots,\rho^{A_N}\in\Omega_n,\\ \rho^{A_0}=\rho^A, \, \rho^{A_{N+1}}=\rho^B}} \; \sum_{i=1}^{N+1}\sqrt{\rT^Q_{C}(\rho^{A_{i-1}},\rho^{A_i})}. 
\end{equation}
Similar construction can be done for any $p \geq 2$. The distance \eqref{defQOTmet} does not seem to be easily computable for a general cost matrix, however for some choices of $C$ formula \eqref{defQOTmet} simplifies significantly.


A simple example of the quantum cost matrix is provided by $C^Q$ --- the orthogonal projection of $\C^{n\times n}$ on $\cH_A$, as advocated
also in \cite{CHW19,Duv20,YZYY19} and \cite{Rie18}.  It is straightforward   to show that $C^Q=\frac{1}{2}(\I -S)$, where $S$ is the SWAP operator, $S(\bx\otimes\by)= \by\otimes\bx$, while 
 $\I$ is the identity operator on $\C^n\otimes\C^n$.  {Note that $C^Q$ is a projection, hence $(C^Q)^p = C^Q$ for any $p > 0$.
We show that $\big(\rT_{C^Q}^Q\big)^{1/p}$ does not satisfy the triangle inequality for $p\in[1,2)$. 
 On the other hand, for the single-qubit case, 
$n=2$, the square root of the optimal quantum transport cost, 
$\sqrt{\rT_{C^Q}^Q}$, does form  a distance.
In fact, we show that $\windQ=\sqrt{\rT_{C^Q}^Q}$ for qubits.
Furthermore,  $\sqrt{\rT_{C^Q}^Q}$ is a distance on pure states for any $n$ 
and numerical simulations strongly suggest that $\sqrt{\rT_{C^Q}^Q}$ satisfies the triangle inequality for all density matrices in $\Omega_n$ for $n=3, \ldots, 8$ (see \cite{FECZ} for the details). 
 It is also remarkable, that the optimal quantum transport cost $\rT_{C^Q}^Q$ is monotonous under quantum channels, at least in the single-qubit case \cite{BEZ22}.

For the specific cost matrix $C^Q$ one can study a related quantity, the {\sl SWAP-fidelity} between any two states, defined as \cite{FECZ},
\begin{equation}\label{defswapf}
\rF_S(\rho^A,\rho^B)=\max_{\rho^{AB}\in \Gamma^{Q}(\rho^A,\rho^B)} \tr  S\rho^{AB} = 1 - 2 \rT^Q_{C^Q}(\rho^A,\rho^B).
\end{equation} 
It is a symmetric function with the following properties: it is continuous, jointly concave, unitarily invariant, and super-multipicative with respect to the tensor product. Moreover, $\rF_S$ is bounded from below by the standard Uhlmann--Jozsa fidelity \cite{Uh76,Jo94}, $\rF$, and from above by its square root, $\sqrt{\rF}$.  

A simple generalization of $C^Q $  is the following operator that vanishes on $\cH_S$ and is positive definite on $\cH_A$:
\begin{equation}\label{defCQE}
\begin{aligned}
C_E^Q=\sum_{1\le i<j\le n} e_{ij}\frac{1}{\sqrt{2}}\big(|i\rangle|j\rangle - |j\rangle|i\rangle\big)\big(\langle i|\langle j| -\langle j|\langle i|\big),\\ 
\text{with } e_{ij}>0 \textrm{ for } 1\le i<j \le n.
\end{aligned}
\end{equation}
Here $|1\rangle,\ldots,|n\rangle$ is any orthonormal basis in $\cH_n$,
while the entries of a fixed symmetric matrix $e_{ij}$ can be interpreted as 
classical distances between the sites $i$ and $j$.
We show that decoherence 
of the marginal states, $\rho \to \diag (\rho)$, decreases the cost of QOT for $C^Q_E$:
\begin{equation}\label{decineq}
\rT_{C^Q_E}^Q(\diag(\rho^A),\diag(\rho^B))\le \rT_{C^Q_E}^Q(\rho^A,\rho^B) \textrm{ for } \rho^A, \rho^B\in\Omega_n.
\end{equation}

As in \cite{FrV18,Fri20}, we show that quantum transport can be defined on $d$-partite states.  In particular, one can define an analog of $C^Q$ for multi-partite systems.  More precisely, $C^Q$ is the projection on the orthogonal complement of the boson subspace --- the subspace of symmetric tensors in $\otimes^d\C^n$.


%
\subsection{A brief survey of the main results}\label{subsec:survey}

To make the paper accessible to a wide mathematical audience
we use a fusion of standard mathematical notation with the
notation of Dirac common in the physics community (see Subsection \ref{sec:not}). 
In Section \ref{sec:prilres} we present some preliminary results that are used in the rest of the paper.   Proposition \ref{exrank1}  shows that the  coupling set $\Gamma^Q(\rho^A,\rho^B)$ contains a matrix  of rank one if and only if $\rho^A$ and $\rho^B$ are isospectral.
Proposition \ref{honconv} shows that the function $\rT^Q_C(\rho^A,\rho^B)$ is continuous and convex on $\Omega_n\times \Omega_n$.  
In Subsection \ref{sec:swap} we discuss QOT with respect to the SWAP operator  
$S \in \rB(\cH_n\otimes \cH_n)$, which swaps the two factors of 
$\cH_n\otimes \cH_n$.  The operator $S$ has two invariant subspaces of $\cH_n\otimes \cH_n$, which is viewed as the set of $n\times n$ complex valued matrices $\C^{n\times n}$: the subspaces of symmetric and skew-symmetric matrices, denoted as $\cH_S$ and $\cH_A$, respectively.  The subspaces $\cH_S$ and $\cH_A$ correspond to the eigenvalues $1$ and $-1$ of $S$, respectively.   
In Subsection \ref{sec:swapfid} we discuss the properties of the SWAP-fidelity function, which shares many properties with the standard fidelity and coincides with the latter if at least one of the states is pure.  

Section \ref{sec:QOT} discusses the connection between QOT and  semidefinite programming.  
Theorem \ref{QOTSDP}  states formally that the computation of $\rT_C^Q$ is a SDP problem.  In particular, the computation time of $\rT_C^Q(\rho^A,\rho^B)$  within a precession of $\varepsilon\in (0,1)$ is polynomial in the size of the data and $\log1/\varepsilon$.  
 Theorem~\ref{dualQOT} establishes the dual problem and shows that its resolution yields the value of $\rT^Q_C$.  This was also shown in \cite{CHW19}
 for the specific projection cost matrix $C^Q$. 
Furthermore, Theorem \ref{dualQOT} states  the complementary conditions in the case that the supremum in the dual problem are achieved. 
 (This condition holds if $\rho^A$ and $\rho^B$ are positive definite.)  We found these complementary conditions to be very useful. 
 
 In Section \ref{sec:diagdm} we compare the  classical and quantum optimal transport
problems for diagonal density matrices.  For a given density matrix $\rho$ the diagonal density matrix $\diag(\rho)$ can be viewed as the decoherence of $\rho$ 
(for a more detailed physical mechanism see Supplemental Material in \cite{FECZ}).  Lemma \ref{diagdecr} shows that decoherence decreases the QOT for $C=C_E^Q$, cf. Formula \eqref{decineq}.  Lemma \ref{diaglemobs} gives a map of $\Gamma^{cl}(\bp^A,\bp^B)$ to $\Gamma^Q(\diag(\bp^A),\diag(\bp^B))$.  The main result of this section, Theorem \ref{clqotdiagdm}, proves three fundamental results.  First, the classical optimal transport is more expensive than the quantum optimal transport \eqref{qtrancheap}. Second, $\rT^Q_{C^Q}(\diag(\bp^A),\diag( \bp^B))$ can be stated as the minimum of a certain convex function on $\Gamma^{cl}(\bp^A,\bp^B)$.  This shows that the computation of $\rT^Q_{C^Q}(\diag(\bp^A), \diag(\bp^B))$ is simpler than the computation of $\rT^Q_{C^Q}(\rho^A,\rho^B)$ for general  states $\rho^A$ and $\rho^B$.   Third, it gives a simple formula for $\rT^Q_{C^Q}(\rho^A,\rho^B)$ for diagonal qubits.  In Subsection \ref{subsec:QOTCOT} we give a simple example of two diagonal qubits for which the cost of the classical OT is seven times higher then the cost of the QOT.  
In Subsection \ref{sec:decoh} we discuss the decoherence of 
 the quantum cost matrix on qubits, and introduce a convex combination 
$C^{Q}_{\alpha}=\alpha C^Q+(1-\alpha)\diag(C^Q)$, where $\alpha\in[0,1]$. Thus $\alpha=1$ and $\alpha=0$ correspond to QOT and OT, respectively.
 It yields that the quantity $\rT^Q_{\alpha}(\diag(\bp^A),\diag(\bp^B))$, which interpolates between the quantum and classical transport costs, is strictly decreasing in $\alpha$ on the interval $[0,1]$, unless either of the states is pure or $\bp^A = \bp^B$.  In particular, the cost of the classical optimal transport is larger than the cost of the quantum optimal transport.

Section \ref{subsec:lbdiagdm} gives lower bounds on the cost of the QOT with the cost matrix
$C^Q$ for any pair of density matrices of any dimension.  Equality holds for qubits.  This result (Theorem \ref{lowbdT}) is one of the main results of the paper, and will allow us to
 derive a semi-analytic
  formula for $\rT_{C^Q}^Q$ for qubits, as discussed in Appendix~\ref{apb:qubits}.

In Section \ref{sec:metrics} we show that any positive semidefinite cost matrix $C\in\rS(\cH_n\otimes\cH_n)$, that vanishes exactly on the on the subspace of symmetric matrices, yields the induced Wasserstein-2 distance \eqref{defQOTmet}.  For the qubit cost matrix $C^Q$ this Wasserstein-2 distance is $\sqrt{\rT_{C^Q}^Q(\rho^A,\rho^B)}$.

Section \ref{sec:QOTdpar} discusses the quantum optimal transport for $d$-partite systems for $d\ge 3$, denoted as $\rT_C^Q(\rho^{A_1},\ldots,\rho^{A_d})$.  The classical optimal transport of $d$-partite systems is discussed in \cite{FrV18,Fri20}.  The most interesting case is when the density matrix is in $\otimes^d\cH_n$.  Then the analog of the cost matrix $C^Q$ is given by  the projection operator $C^B$
on the complement of the subspace of symmetric tensors. 
The computation of $\rT_{C^B}^Q(\rho^{A_1},\ldots,\rho^{A_d})$ is related to the permanent function on positive semidefinite matrices.
If we assume that $d=2\ell$, where $\ell>1$, then, as in \cite{Fri20}, one can define a Wasserstein-2 distance on the space of $\ell$-tuples of density matrices $\Omega_n^\ell$ and on the space of unordered $\ell$-tuples $\{\rho^{A_1},\ldots,\rho^{A_\ell}\}$.

In Appendix \ref{sec:partr} we recall briefly the basic properties of partial traces.  
In Appendix \ref{apb:qubits} we discuss additional properties of the QOT for qubits.
Subsection~\ref{subsec:Blochb}
provides (Theorem \ref{thmC3}) a closed formula for $\rT_{C^Q}^Q(\rho^A,\rho^B)$ in terms of solutions of the trigonometric equation \eqref{Phieq}.  Lemma \ref{6sollem} shows that this trigonometric equation is equivalent to a polynomial equation of degree
at most $6$.  Subsection \ref{subsec:isospectral} gives a nice closed formula for the value of QOT for two isospectral qubit density matrices. In Subsection \ref{subsec:nonexisF} we present a simple example where the supremum of the dual SDP problem to QOT is not achieved. 
Appendix \ref{subsec:diagqut} gives a closed formula for the QOT for some
pairs of diagonal states of a single-qutrit system.
\section{Preliminary results}\label{sec:prilres}
The aim of this section is fivefold. First, we discuss briefly our notation.  Second, 
Proposition \ref{exrank1} shows that the coupling set $\Gamma^Q(\rho^A,\rho^B)$ contains a rank one matrix if and only if $\rho^A$ and $\rho^B$ are isospectral.  
Third, we discuss some basic properties of of $\rT_{C}^Q(\rho^A,\rho^B)$.
Fourth,  we introduce the SWAP operator,  and the corresponding cost matrices $C^Q, C^Q_E$, which are positive semidefinite and vanish on the set of symmetric matrices, the two-qubit bosons. Fifth,  we discuss SWAP fidelity and related quantities.
\subsection{Notation}\label{sec:not}
In what follows we combine 
the standard  mathematical notation with the Dirac notation used in quantum theory.
We view $\C^n$, the vector space of column vectors over the complex field $\C$, as a Hilbert space $\cH_n$ with the inner product 
\begin{equation*}
\langle \y,\x\rangle=\y^\dagger\x=\langle \y |\x\rangle.
\end{equation*}
Then $|i\rangle\in\cH_n$ is identified with the unit vector $\be_i=(\delta_{1i},\ldots,\delta_{ni})^\top$ for $i\in[n]$.
Let $\rB(\cH_n)\supset \rS(\cH_n)\supset \rS_{+}(\cH_n)\supset \Omega_n$ be  the space of linear operators, the real subspace of selfadjoint operators, the cone of positive semidefinite operators, and the convex set of density operators, respectively.   For $\rho\in \rB(\cH_n)$ we denote $|\rho|=\sqrt{\rho\rho^\dagger}\in \rS_+(\cH_n)$.  Then $\|\rho\|_1=\tr |\rho|$.
For $\rho,\sigma\in \rS(\cH_n)$ we write $\rho\geq \sigma$ and $\rho> \sigma$ if if the eigenvalues of $\rho-\sigma$ are all nonnegative or positive respectively.
 
The space of $n\times n$ complex valued matrices, denoted as  $\C^{n\times n}$, is a representation of $\rB(\cH_n)$, where the matrix $\rho=[\rho_{ij}]\in\C^{n\times n}$ represents the operator $\rho\in\rB(\cH_n)$.  The set  of density operators in $\rB(\cH_n)$ are viewed as  $\Omega_n$: the convex set of $n\times n$ Hermitian positive semidefinite trace-one matrices.
The tensor product $\cH_m\otimes\cH_n$ is represented by $\C^{m\times n}$.
An element of $\C^{m\times n}$ is a matrix $X=[x_{ip}]=\sum_{i=p=1}^{m,n} x_{ip} |i\rangle |p\rangle$, which correspond to a bipartite state.  Observe that $\x\otimes\y=|\x\rangle|\y\rangle$ is represented by the rank-one matrix $\x\y^\top$.  
We denote by $X^\dagger=\langle X|$ the complex conjugate of the transpose of  $X\in\C^{m\times n}$.
The inner product of bipartite states $X,Y\in\C^{m\times n}$ is $\langle X,Y\rangle=\langle X|Y\rangle=\tr X^\dagger Y$.
 We identify  $\rB(\cH_m\otimes \cH_n)$ with $\C^{(mn)\times (mn)}$ as follows. An operator $\rho^{AB}\in \rB(\cH_m\otimes\cH_n)$ is represented by a matrix $R\in\C^{(mn)\times (mn)}$, whose entries are indexed with two pairs of indices $r_{(i,p)(j,q)}$ where $i,j\in [m], p,q\in[n]$.  Then the partial traces of $R$ are defined as follows:

\begin{equation}\label{ptfrom}
\tr_A R=[\sum_{i=1}^m r_{(i,p)(i,q)}]=\rho^B\in\C^{n\times n}, \quad \tr_B R=[\sum_{p=1}^n r_{(i,p)(j,p)}]=\rho^A\in\C^{m\times m}.
\end{equation}
Recall that $\tr R=\tr (\tr_A R)=\tr (\tr_B R)$.  
Some more known facts about partial traces that we use in this paper are discussed in  Appendix \ref{sec:partr}.

Let $\rM:\rB(\cH_m\otimes\cH_n)\to \rB(\cH_m)\oplus\rB(\cH_n)$ be the partial trace map: $\rho^{AB}\mapsto (\rho^A, \rho^B)$.  We identify $\rM$ with the map $\rM: \C^{(mn)\times (mn)}\to\C^{m\times m}\oplus\C^{n\times n}$.   
For $ \rho^A\in\Omega_m, \rho^B\in\Omega_n$ we denote by $\Gamma^Q(\rho^A,\rho^B)$ the set of  all {\sl quantum coupling matrices} -- bipartite density matrices
$\rho^{AB}$ whose partial traces are $\rho^A$ and $\rho^B$ respectively,
$$\Gamma^Q(\rho^A,\rho^B)=\{ \rho^{AB}\in\Omega_{mn}, \tr_B \rho^{AB}= \rho^A, \tr_A \rho^{AB}= \rho^B\}.$$
Then $\Omega_{mn}$ fibers over $\Omega_m\times \Omega_n$, that is, $\Omega_{mn}=\bigcup_{( \rho^A, \rho^B)\in \Omega_m\times\Omega_n} \Gamma^Q( \rho^A, \rho^B)$.  The Hausdorff distance between $\Gamma^Q( \rho^A, \rho^B)$ and $\Gamma^Q(\rho^C,\rho^D)$ is a complete distance on the fibers \cite{FGZ19}.

\subsection{Isospectral density matrices}\label{subsec:isospec}
We identify $\cH_n\otimes \cH_n$ as the space of $n\times n$ complex valued matrices
$\C^{n\times n}$ 
as follows.   Let $\be_i=(\delta_{i1},\ldots,\delta_{in})^\top\equiv|i\rangle, i\in[n]$ be the standard basis in $\C^n\equiv \cH_n$.  Then a state $|\psi\rangle\in \cH_n\otimes \cH_n$ is given by $|\psi\rangle=\sum_{i,j=1}^n x_{ij}|i\rangle|j\rangle$.  Thus we associate with $|\psi\rangle$ the matrix $X=[x_{ij}]\in \C^{n\times n}$.  Then $|\psi\rangle$ is a normalized state if and only if $\|X\|^2=\tr X  X^\dagger=1$.
Suppose we change the orthonormal basis $\be_1,\ldots,\be_n$ to an orthonormal basis $\bbf_1,\ldots,\bbf_n$, where $\be_i=\sum_{p=1}^n u_{pi}\bbf_p$.  Here $U=[u_{ip}]\in\C^{n\times n}$ is a unitary matrix. 
 Then $|\psi\rangle=\sum_{p,q=1}^n y_{pq}|\bbf_p\rangle|\bbf_q\rangle$, where $Y=UXU^\top$.

We now consider a pure state density operator 
\begin{equation*}
|\psi\rangle \langle \psi|=\Big(\sum_{i,j=1}^n x_{ij}|i\rangle|j\rangle\Big) \Big(\sum_{p,q=1}^n \bar x_{pq}\langle p|\langle q|\Big)=\sum_{i,j,p,q=1}^n x_{ij}\bar x_{pq} |i\rangle|j\rangle \langle p|\langle q|.
\end{equation*}
We identify the coefficient matrix with the Kronecker product $X\otimes \bar X$.
Then
\begin{align*}
\rho^A & =\tr_B |\psi\rangle \langle \psi|=\sum_{i,p=1}^n (X X^\dagger)_{ip} |i\rangle \langle p|,\\ 
\rho^B & =\tr_A |\psi\rangle \langle \psi|=\sum_{j,q=1}^n (X^\top  \bar X)_{jq} |j\rangle \langle q|.
\end{align*}
Thus in the standard basis of $\cH_n$ we can identify $\rho^A$ and $\rho^B$ with the density matrices
\begin{equation}\label{Xrhosigrel}
\rho^A = X X^\dagger, \quad \rho^B=X^\top  \bar X.
\end{equation}

If we change from the standard basis  to
 the  basis $\bbf_1,\ldots,\bbf_n$,  using a unitary matrix $U$,
 the both partial traces read,
\begin{equation}\label{tilderhosigfor}
\begin{aligned}
\tilde \rho^A &= \tilde X \tilde X^\dagger \, = \, U(X X^\dagger) U^\dagger \, = \, U\rho^A U^\dagger, \\ 
\tilde \rho^B &= \tilde X^\top  \overline{\tilde X}=U(X^\top  \bar X) U^\dagger=U\rho^B U^\dagger.
\end{aligned}
\end{equation}

Note that if $\nu_1\ge \cdots\ge\nu_n\ge 0$ are the singular values of the matrix $X$ then $\lambda_1=\nu_1^2\ge \cdots \ge \lambda_n=\nu_n^2\ge 0$ are the eigenvalues of $\rho^A$ and $\rho^B$.  That is $\rho^A$ and $\rho^B$ are isospectral.  Vice versa: 
\begin{proposition}\label{exrank1}  Let $\rho^A,\rho^B\in\Omega_n$.  Then $\Gamma^Q(\rho^A,\rho^B)$ contains a matrix $R$ of rank one if and only if $\rho^A$ and $\rho^B$ are isospectral.
\end{proposition}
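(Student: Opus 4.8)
The plan is to use the dictionary \eqref{Xrhosigrel} between pure bipartite states $|\psi\rangle\in\cH_n\otimes\cH_n$ and their $n\times n$ coefficient matrices: if $|\psi\rangle=\sum_{i,j}x_{ij}|i\rangle|j\rangle$ and $X=[x_{ij}]$, then the partial traces of $|\psi\rangle\langle\psi|$ are $\tr_B|\psi\rangle\langle\psi|=XX^\dagger$ and $\tr_A|\psi\rangle\langle\psi|=X^\top\bar X=\overline{X^\dagger X}$. Everything reduces to a statement about factorizations of $X$.

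For the forward implication, suppose $R\in\Gamma^Q(\rho^A,\rho^B)$ has rank one. As $R$ is a positive semidefinite trace-one matrix of rank one, $R=|\psi\rangle\langle\psi|$ for a unit vector $|\psi\rangle$, with coefficient matrix $X$ satisfying $\tr X^\dagger X=1$. By \eqref{Xrhosigrel}, $\rho^A=XX^\dagger$ and $\rho^B=\overline{X^\dagger X}$. If $\nu_1\ge\cdots\ge\nu_n\ge 0$ are the singular values of $X$, both $XX^\dagger$ and $\overline{X^\dagger X}$ have eigenvalues $\nu_1^2,\dots,\nu_n^2$, so $\rho^A$ and $\rho^B$ are isospectral.

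For the converse, given isospectral $\rho^A,\rho^B$ I would construct the rank-one coupling explicitly. Let $\lambda_1\ge\cdots\ge\lambda_n\ge 0$ be the common eigenvalues and $\Lambda=\diag(\lambda_1,\dots,\lambda_n)$; since $\rho^B$ is Hermitian, $(\rho^B)^\top=\overline{\rho^B}$ is again isospectral to $\rho^A$. Pick unitaries $U,V$ with $\rho^A=U\Lambda U^\dagger$ and $(\rho^B)^\top=V\Lambda V^\dagger$, and set $X=U\Lambda^{1/2}V^\dagger$. A one-line computation gives $XX^\dagger=U\Lambda U^\dagger=\rho^A$ and $X^\dagger X=V\Lambda V^\dagger=(\rho^B)^\top$, hence $X^\top\bar X=\overline{X^\dagger X}=\overline{(\rho^B)^\top}=\rho^B$. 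Moreover $\tr X^\dagger X=\tr(\rho^B)^\top=\tr\rho^B=1$, so $|\psi\rangle=\sum_{i,j}x_{ij}|i\rangle|j\rangle$ is a unit vector; then $R=|\psi\rangle\langle\psi|$ has rank one, lies in $\Omega_{n^2}$, and by \eqref{Xrhosigrel} its partial traces are $\rho^A$ and $\rho^B$, i.e.\ $R\in\Gamma^Q(\rho^A,\rho^B)$.

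There is no serious obstacle; the only thing to handle carefully is the transpose–conjugate bookkeeping (the identity $X^\top\bar X=\overline{X^\dagger X}$ together with the use of Hermiticity of $\rho^B$ to interchange $(\rho^B)^\top$ and $\overline{\rho^B}$). The one mildly non-obvious ingredient is the standard fact that any two isospectral positive semidefinite matrices $P,Q$ can be written simultaneously as $P=XX^\dagger$ and $Q=X^\dagger X$ for a common $X$, which is exactly what is applied above with $Q=(\rho^B)^\top$.
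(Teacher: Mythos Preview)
Your proof is correct and follows essentially the same route as the paper. Both directions rest on the Schmidt/SVD decomposition: the paper writes the rank-one state as $R=\bigl(\sum_i\sqrt{\lambda_i}\,|\x_i\rangle|\y_i\rangle\bigr)\bigl(\sum_j\sqrt{\lambda_j}\,\langle\x_j|\langle\y_j|\bigr)$ directly in Dirac notation, while you build the same object via the coefficient matrix $X=U\Lambda^{1/2}V^\dagger$ and track the transpose/conjugate through \eqref{Xrhosigrel}; the extra step of diagonalizing $(\rho^B)^\top$ rather than $\rho^B$ is exactly what absorbs the complex conjugation that the paper's eigenvector formulation handles implicitly.
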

\begin{proof}  Suppose first that $\rho^A$ and $\rho^B$ are isospectral, i.e., have the same eigenvalues $\lambda_1\ge\cdots\ge\lambda_n\ge 0$.  Assume that $\rho^A$ and $\rho^B$ have the following spectral decompositions:
\begin{align}\label{specdecrho} 
\begin{split}
\rho^A & =\sum_{i=1}^n \lambda_i |\x_i\rangle\langle \x_i|, \quad \langle \x_i,\x_j\rangle =\delta_{ij},\\
\rho^B & =\sum_{j=1}^n \lambda_i |\y_j\rangle\langle \y_j|, \quad\! \langle \y_i,\y_j\rangle =\delta_{ij}.
\end{split}
\end{align}
Then the set  $\Gamma^Q(\rho^A, \rho^B)$ of coupling matrices 
contains the rank-one matrix
\begin{equation}\label{simpur}
R= \Big(\sum_{i=1}^n \sqrt{\lambda_i} |\x_i\rangle |\y_i\rangle \Big) \Big(\sum_{j=1}^n \sqrt{\lambda_j}\langle \x_j|\langle \y_j| \Big).
\end{equation}
Vice versa, if $R$ represents a projector onto a pure bipartite state 
$|\psi\rangle$ in $\rS_+(\cH_n\otimes \cH_n)$,
 then its  Schmidt decomposition, 
 related to the
 Singular Value Decomposition (SVD) of the above matrix $X=[x_{ij}]$, 
takes the above form \cite{Frb16}.  
Hence $\tr_A R$ and $\tr_B R$ are isospectral density matrices.
\end{proof}


\subsection{Some properties of $\rT_{C}^Q(\rho^A,\rho^B)$}\label{subsec:propQOT}

 We first recall the definition of the Hausdorff distance between non-empty sets in a space $X$ equipped with a distance $D: X\times X \to [0,\infty)$.  Assume that $Y, Z\subset X$.  Then the Hausdorff distance between $Y$ and $Z$ is:
\begin{equation*}
D_H(Y,Z) = \max \Big( \sup_{y\in Y}\inf_{z\in Z}D(y,z), \, \sup_{z\in Z}\inf_{y\in Y}D(y,z) \Big).
\end{equation*}
We will need the Hausdorff distance in the proof of the following result.

\begin{proposition}\label{honconv} For $C\in \rS(\cH_m\otimes \cH_n)$  the function 
$\rT_{C}^Q(\cdot,\cdot)$ is a continuous convex function on $\Omega_m\times\Omega_n$: for any $0<a<1$,
\begin{align*}
\rT_{C}^Q(a\rho^A+(1-a)\sigma^A,a\rho^B+(1-a)\sigma^B))\le 
a\rT_{C}^Q(\rho^A,\rho^B)+(1-a)\rT_{C}^Q(\sigma^A,\sigma^B).
\end{align*}
Furthermore, if $C\ge  0$ then  $\rT_{C}^Q(\cdot,\cdot)$ is nonnegative.
\end{proposition}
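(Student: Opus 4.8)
The plan is to verify the three assertions — convexity, continuity, nonnegativity — essentially from the definition of $\rT_C^Q$ as the value of a minimization over the feasible set $\Gamma^Q(\rho^A,\rho^B)$, using the fact that the objective $\rho^{AB}\mapsto \tr C\rho^{AB}$ is linear and that the constraint is affine in $(\rho^A,\rho^B)$.

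\textbf{Convexity.} First I would fix $0<a<1$ and feasible couplings $\rho^{AB}\in\Gamma^Q(\rho^A,\rho^B)$ and $\sigma^{AB}\in\Gamma^Q(\sigma^A,\sigma^B)$. The key observation is that the partial trace map $\rM$ is linear, so $a\rho^{AB}+(1-a)\sigma^{AB}$ is a density matrix (convex combination of density matrices) whose partial traces are exactly $a\rho^A+(1-a)\sigma^A$ and $a\rho^B+(1-a)\sigma^B$; hence it lies in $\Gamma^Q(a\rho^A+(1-a)\sigma^A,\,a\rho^B+(1-a)\sigma^B)$. Therefore
\[
\rT_C^Q(a\rho^A+(1-a)\sigma^A,\,a\rho^B+(1-a)\sigma^B)\le \tr C\big(a\rho^{AB}+(1-a)\sigma^{AB}\big)=a\tr C\rho^{AB}+(1-a)\tr C\sigma^{AB},
\]
and taking the infimum over $\rho^{AB}$ and $\sigma^{AB}$ separately on the right gives the claimed inequality. (One should note the feasible sets are nonempty — $\rho^A\otimes\rho^B$ always works — and compact, so the minima are attained and finite.)

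\textbf{Nonnegativity.} If $C\ge 0$ then for every $\rho^{AB}\in\Omega_{mn}$ we have $\tr C\rho^{AB}\ge 0$, since the trace of a product of two positive semidefinite operators is nonnegative (e.g.\ write $\tr C\rho^{AB}=\tr (\sqrt{C}\rho^{AB}\sqrt{C})\ge 0$). Taking the minimum over the feasible set preserves this, so $\rT_C^Q(\rho^A,\rho^B)\ge 0$.

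\textbf{Continuity.} This is the part that takes a little care, since the feasible set $\Gamma^Q(\rho^A,\rho^B)$ itself moves with $(\rho^A,\rho^B)$. A convex function that is finite on an open set is automatically continuous on its interior, but $\Omega_m\times\Omega_n$ is not open in the ambient vector space and boundary points (states that are not positive definite) need separate treatment. I would handle continuity directly: given a convergent sequence $(\rho_k^A,\rho_k^B)\to(\rho^A,\rho^B)$, pick optimal couplings $\rho_k^{AB}$; by compactness of $\Omega_{mn}$ a subsequence converges to some $\tau^{AB}$, which by continuity of $\rM$ is feasible for the limit, giving $\limsup$... wait — this gives lower semicontinuity of the limit value, i.e.\ $\rT_C^Q(\rho^A,\rho^B)\le\liminf_k \rT_C^Q(\rho_k^A,\rho_k^B)$. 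For the reverse (upper semicontinuity) I would construct, from an optimal coupling $\rho^{AB}$ for the limit point, a nearby feasible coupling for $(\rho_k^A,\rho_k^B)$ and estimate the change in cost; concretely, one can lift along the affine constraint using a Lipschitz selection — precisely the content of the Hausdorff-distance completeness result for the fibration $\Omega_{mn}\to\Omega_m\times\Omega_n$ cited from \cite{FGZ19} (or, alternatively, invoke the Lipschitz estimate of Appendix \ref{sec:Lip} on $\Omega_{n,a}$ and pass to the boundary by a further limiting argument). Combining the two semicontinuity bounds yields continuity. The main obstacle is exactly this upper-semicontinuity step at boundary states: one needs that small perturbations of the marginals can be matched by small perturbations of some coupling, which is where the structure of the partial-trace map (surjectivity with a continuous right inverse on the relevant affine slices) must be used rather than just abstract convexity.
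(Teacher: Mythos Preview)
Your proof is correct and follows essentially the same approach as the paper. The convexity and nonnegativity arguments are identical; for continuity the paper invokes the Hausdorff convergence of the fibers $\Gamma^Q(\rho^A,\rho^B)$ from \cite{FGZ19} directly (which immediately gives both semicontinuity directions at once), whereas you split into lower and upper semicontinuity and then cite the same result for the harder direction --- but the substance is the same.
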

\begin{proof} 
Assume that 
\begin{align*}
&\rT_{C}^Q(\rho^A,\rho^B)=\tr C \rho^{AB}, &&\hspace*{-2cm} \rho^{AB}\in \Gamma^Q(\rho^A,\rho^B),\\ 
&\rT_{C}^Q(\sigma^A,\sigma^B)=\tr C \sigma^{AB}, &&\hspace*{-2cm} \sigma^{AB}\in \Gamma^Q(\sigma^A,\sigma^B).
\end{align*}
Let $\tau^{AB}=a\rho^{AB}+(1-a)\sigma^{AB}$. Then $\tau^{AB}\in\Gamma^Q(a\rho^A+(1-a)\sigma^A,a\rho^B+(1-a)\sigma^B)$. 
Clearly $\tr C \tau^{AB}=a\rT_{C}^Q(\rho^A,\rho^B)+(1-a)\rT_{C}^Q(\sigma^A,\sigma^B)$.  The minimal characterization~\eqref{defkapCAB} of $\rT$ yields the first inequality of the lemma.  Clearly if $C\ge 0$ then $\rT_{C}^Q(\cdot,\cdot)$ is nonnegative.
This yields the second inequality of the lemma.

The continuity of $\rT_{C}^Q(\cdot,\cdot)$ follows from the following arguments.  
Assume that 
\begin{gather*}
\rho^A=\lim_{k\to\infty} \rho^{A_k},\qquad \rho^B=\lim_{k\to\infty} \rho^{B_k}, \\
\rT_{C}^Q(\rho^{A_k},\rho^{B_k})=\tr C\rho^{A_kB_k}, \qquad \rho^{A_kB_k}\in \Gamma^Q(\rho^{A_k},\rho^{B_k}), \; k\in\N.
\end{gather*}
As the set of density matrices of a fixed dimension is a compact set,  there exists a subsequence $\{l_k, k\in\N\}$  such that $\rho^{A_{l_k}B_{l_k}}$ converges to a density matrix $\rho$.   Hence 
\begin{align*}
\tr_A \rho & =\lim_{k\to\infty}\tr_A\rho^{A_{l_k}B_{l_k}}=\lim_{k\to\infty}\rho^{B_{l_k}}=\rho^B, \\
\tr_B \rho & =\lim_{k\to\infty}\tr_B\rho^{A_{l_k}B_{l_k}}=\lim_{k\to\infty}\rho^{A_k}=\rho^A.
\end{align*}
Hence,  $\rho\in\Gamma^Q(\rho^A,\rho^B)$.  Therefore,
\begin{equation}\label{lowbdQOTlim}
\lim_{k\to\infty} \rT_{C}^Q(\rho^{A_{l_k}},\rho^{B_{l_k}})=\lim_{k\to\infty} \tr C\rho^{A_{l_k}B_{l_k}}=\tr C \rho\ge \rT_{C}^Q(\rho^A,\rho^B)=\tr C\rho^{AB}.
\end{equation}

Observe that for each $\rho^A\in\Omega_m,\rho^B\in\Omega_n$, the set $\Gamma^Q(\rho^A,\rho^B)$, viewed as a fiber over $(\rho_A,\rho_B)$, is a compact convex set.  Hence, one can define the Hausdorff distance (distance) on the fibers.
It is shown in \cite[Theorem 5.2]{FGZ19} that the Hausdorff distance is a complete distance.  Furthermore, the sequence $\Gamma^Q(\rho^{A_k},\rho^{B_k}),k\in\N$ converges to
$\Gamma^Q(\rho^A,\rho^B)$ in the Hausdorff distance if and only if $\lim_{k\to\infty}(\rho^{A_k},\rho^{B_k})=( \rho^A, \rho^B)$.    Hence,  there exists a sequence $\omega_k\in \Gamma^Q(\rho^{A_{l_k}B_{l_k}})$ such that $\lim_{k\to\infty}\|\omega_k-\rho^{AB}\|=0$.  (Here we let $\|\eta\|=\sqrt{\tr \eta^2}$ for any Hermitian operator on $\cH_N$.) 
Clearly,
\begin{align*}
\rT_{C}^Q(\rho^{A_{l_k}},\rho^{B_{l_k}})\le \tr C\omega_{k}& =\tr C\rho^{AB}+\tr C(\omega_k-\rho^{AB}) \\
& = \rT_{C}^Q(\rho^A,\rho^B)+\tr C(\omega_k-\rho^{AB}).
\end{align*}
Let $k\to\infty$ to deduce the inequality $\lim_{k\to\infty} \rT_{C}^Q(\rho^{A_{l_k}},\rho^{B_{l_k}})\le \rT_{C}^Q(\rho^A,\rho^B)$.  Combine that with \eqref{lowbdQOTlim}
to deduce the equality $\lim_{k\to\infty} \rT_{C}^Q(\rho^{A_{l_k}},\rho^{B_{l_k}})= \rT_{C}^Q(\rho^A,\rho^B)$.   Clearly, the sequence $\{\rT_{C}^Q(\rho^{A_{k}},\rho^{B_{k}}), k\in\N\}$ is a bounded sequence.  We showed that from this sequence we can always extract a subsequence which converges to $\rT_{C}^Q(\rho^A,\rho^B)$.
Hence, this sequence converges to $\rT_{C}^Q(\rho^A,\rho^B)$.
\end{proof}
  
The following Proposition shows that to compute $\rT_{C}^Q( \rho^A, \rho^B)$ one can assume that the eigenvalues of $C$ are in the interval $[0,1]$: 
\begin{proposition}\label{propkapCC'}  Assume that $C\in \rS(\cH_m\otimes \cH_n)$ is not a scalar operator, $C\ne c \I$.  Let   
\begin{equation*}
\tilde C=\frac{1}{\lambda_{\max}(C)-\lambda_{\min}(C)} \big(C-\lambda_{\min}(C) \I \big).
\end{equation*}
Then $0\le \tilde C\le \I$.  Furthermore for $\rho^A\in\Omega_m,\rho^B\in\Omega_n$  the following equality holds:
\begin{equation}\label{kapCC'rel}
\rT_{C}^Q(\rho^A,\rho^B)=(\lambda_{\max}(C)-\lambda_{\min}(C))\rT_{\tilde C}^Q(\rho^A,\rho^B)+\lambda_{\min}(C).
\end{equation}
\end{proposition}
\begin{proof} Clearly $C=\bigl(\lambda_{\max}(C)-\lambda_{\min}(C) \bigr)\tilde C+\lambda_{\min}(C)\I$.  Furthermore
\begin{align*}
\tr C \rho^{AB}=\bigl(\lambda_{\max}(C)-\lambda_{\min}(C)\bigr)\tr \tilde C\rho^{AB}+\lambda_{\min}(C), \quad \rho^{AB}\in\Gamma^Q(\rho^A,\rho^B).
\end{align*}
As $\lambda_{\max}(C)-\lambda_{\min}(C)> 0$ we deduce \eqref{kapCC'rel}.
\end{proof}

We next observe that one can reduce the computation of $\rT_{C}^Q(\rho^A,\rho^B)$ to a smaller 
dimension problem if either $\rho^A$ or $\rho^B$ are not positive definite:
\begin{proposition}\label{redprop}    Assume that  $\rho^A\in\Omega_m, \rho^B\in\Omega_n$. Let $m'$ and $n'$ be the dimensions of\, $\range\rho^A=\cH_{m'}$ and\, $\range\rho^B=\cH_{n'}$ respectively.  Denote by $\,\rho^{A'}\in\Omega_{m'}$, and $\rho^{B'}\in\Omega_{n'}$ the restrictions of $\rho^A$ and $\rho^B$ to $\cH_{m'}$ and $\cH_{n'}$ respectively.  Assume that  $C\in\rS(\cH_m\otimes \cH_n)$, and denote by $C'\in\rS(\cH_{m'}\otimes \cH_{n'})$ the restriction of $C$ to $\cH_{m'}\otimes \cH_{n'}$.  Then
\begin{equation*}
\rT^Q_{C}(\rho^A,\rho^B)=\rT^Q_{C'}(\rho^{A'},\rho^{B'}).
\end{equation*}
\end{proposition}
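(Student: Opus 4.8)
The plan is to show that the partial-trace constraints force any coupling $\rho^{AB} \in \Gamma^Q(\rho^A,\rho^B)$ to be supported on the subspace $\cH_{m'} \otimes \cH_{n'}$, so that the minimization over $\Gamma^Q(\rho^A,\rho^B)$ reduces verbatim to the minimization over $\Gamma^Q(\rho^{A'},\rho^{B'})$, on which $C$ acts as $C'$. The key fact I would use is the following standard observation about partial traces: if $\sigma \in \rS_+(\cH_p \otimes \cH_q)$ and $\tr_B \sigma$ has range contained in a subspace $\cK \subseteq \cH_p$, then the range of $\sigma$ is contained in $\cK \otimes \cH_q$ (and symmetrically for $\tr_A$). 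This is because $\langle v \otimes w, \sigma\, v \otimes w\rangle \le \|w\|^2 \langle v, (\tr_B \sigma) v\rangle = 0$ whenever $v \perp \cK$, and a positive semidefinite operator annihilates every vector on which its quadratic form vanishes. I expect this lemma is already in (or easily extracted from) Appendix \ref{sec:partr}, so I would cite it there rather than reprove it.

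\textbf{First step: reduction of the feasible set.} Given $\rho^{AB} \in \Gamma^Q(\rho^A,\rho^B)$, apply the observation above twice. Since $\tr_B \rho^{AB} = \rho^A$ has range $\cH_{m'}$, the range of $\rho^{AB}$ lies in $\cH_{m'} \otimes \cH_n$; since $\tr_A \rho^{AB} = \rho^B$ has range $\cH_{n'}$, it lies in $\cH_m \otimes \cH_{n'}$; intersecting, $\range \rho^{AB} \subseteq \cH_{m'} \otimes \cH_{n'}$. Hence $\rho^{AB}$ may be identified with an element $\rho^{A'B'} \in \Omega_{m'n'}$ supported on $\cH_{m'} \otimes \cH_{n'}$. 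One then checks that the partial traces are compatible with restriction: $\tr_{B'} \rho^{A'B'} = \rho^{A'}$ and $\tr_{A'} \rho^{A'B'} = \rho^{B'}$, where $\rho^{A'},\rho^{B'}$ are the stated restrictions (which are themselves density matrices, being the compressions of $\rho^A,\rho^B$ to their own ranges). This gives a bijection between $\Gamma^Q(\rho^A,\rho^B)$ and $\Gamma^Q(\rho^{A'},\rho^{B'})$: the forward map is restriction, and the inverse map is the inclusion $\rho^{A'B'} \mapsto \iota \rho^{A'B'} \iota^\dagger$, where $\iota : \cH_{m'} \otimes \cH_{n'} \hookrightarrow \cH_m \otimes \cH_n$ is the inclusion; that this inverse lands in $\Gamma^Q(\rho^A,\rho^B)$ uses that $\rho^A$ is zero outside $\cH_{m'}$ and likewise for $\rho^B$.

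\textbf{Second step: the cost is preserved.} For $\rho^{AB}$ supported on $\cH_{m'} \otimes \cH_{n'}$ with restriction $\rho^{A'B'}$, one has $\tr C \rho^{AB} = \tr (\iota^\dagger C \iota) \rho^{A'B'} = \tr C' \rho^{A'B'}$, since only the block of $C$ on $\cH_{m'} \otimes \cH_{n'}$ — which is exactly $C'$ — interacts with $\rho^{AB}$. Taking the minimum over the two feasible sets, which the bijection identifies, yields $\rT^Q_C(\rho^A,\rho^B) = \rT^Q_{C'}(\rho^{A'},\rho^{B'})$. The only part requiring any care is the support lemma for partial traces, and its application to intersect the two range constraints; everything else is bookkeeping about restrictions and inclusions, so I do not anticipate a genuine obstacle here, only the need to state the range lemma cleanly (most likely as a one-line citation to Appendix \ref{sec:partr}).
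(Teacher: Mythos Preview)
Your proposal is correct and follows essentially the same approach as the paper's proof: both establish that any $\rho^{AB}\in\Gamma^Q(\rho^A,\rho^B)$ is supported on $\cH_{m'}\otimes\cH_{n'}$ via the positive-semidefiniteness argument (zero diagonal entry forces zero row and column), then identify the feasible sets and observe $\tr C\rho^{AB}=\tr C'\rho^{A'B'}$. The paper carries out the support reduction by hand in a chosen eigenbasis and one factor at a time, whereas you invoke the range lemma (indeed present as Lemma~\ref{rangecont} in Appendix~\ref{sec:partr}) and treat both factors simultaneously; these are cosmetic differences, not a genuinely different route.
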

\begin{proof}
Without loss of generality we can assume that we chose orthonormal bases in $\cH_m$ and $\cH_n$ to be the eigenvectors of $\rho^A$ and $\rho^B$ respectively.
Thus to prove the lemma it is enough to consider the following case: $\rho^A=\rho^C\oplus 0_{m-l}$ where $\rho^C\in \Omega_l, l<m$ and $0_l$ is an $l\times l$ zero matrix.  Let $\tilde C\in \rS(\cH_l\otimes\cH_n)$ be the restriction of $C$ to $\cH_l\otimes\cH_n$.  We claim that
\begin{equation}\label{redprop1}  
\rT^Q_{C}(\rho^A,\rho^B)=\rT^Q_{\tilde C}(\rho^{C},\rho^{B}).
\end{equation}
Let $R=[R_{(i,p)(j,q)}]\in\Gamma^Q(\rho^A,\rho^B)$.  As $R\ge 0$ it follows that the submatrix $R_{ii}=[R_{(i,p)(i,q)}], p,q\in[n]$ is positive semidefinite for each $i\in[m]$.
Since $\tr _B R=\rho^A$ we deduce that $\rho^A_{ii}=\sum_{p\in[n]} R_{(i,p)(i,p)}=\tr R_{ii}=0$ for $i>l$.   Therefore $R_{ii}=0$, that is, $R_{(i,p)(i,q)}=0$ for $p,q\in[n]$ and $i>l$.  Let $R'$ be the following submatrix of $R$: $[R_{(i,p)(j,q)}], i,j\in [l], p,q\in[n]$.  Then $R'\in\Gamma^Q(\rho^C,\rho^B)$.  Vice versa, given $R'\in\Gamma^Q(\rho^C,\rho^B)$, one can enlarge trivially $R'$ to $R$ in $\Gamma^Q(\rho^C, \rho^B) $.  Clearly $\tr CR=\tr \tilde C R'$.  Repeating the same process with $\rho^B$ establishes \eqref{redprop1}.  
\end{proof}

As we point out in the next subsection, it is natural to consider the case $m=n$.  However, if either $\rho^A$ or $\rho^B$ are singular density matrices then we can reduce the computation of $T_{C}^Q(\rho^A,\rho^B)$ to a lower-dimensional problem, and after this reduction it may happen that the dimensions are no longer equal.

\subsection{Quantum transport problem induced by the SWAP operator}\label{sec:swap}
When describing any two distinguishable physical objects one can 
introduce an operation $S$ which exchanges them.
On the composite space $\cH_n\otimes \cH_n$ 
it corresponds to
 a natural isometry induced by swapping the two factors $\x\otimes \y\mapsto \y\otimes \x$. On the space of square matrices  the SWAP operator is the map $S:X\mapsto X^\top$.  This map is of fundamental importance in quantum information theory.  It allows to observe some interesting properties of bipartite system
 and is useful 
 in the criterion for separability by Peres and Horodecki \cite{Per96, Hor96}. 
 It is shown below that  the SWAP operator $S$ 
 induces a cost matrix 
 \begin{equation}\label{defT} 
C^Q=\frac{1}{2}(\I -S),
\end{equation}
for the quantum transport problem, which enjoys several nice properties.

For $\cH_n\otimes\cH_n$ the SWAP operation $S\in\rB(\cH_n\otimes\cH_n)$ acts on the product states as follows: $S(|\x\rangle|\bu\rangle)=|\bu\rangle| \x\rangle$.  So $S$ is both unitary and an involution operator:
$S^\dagger S=I$ and $S^2=I$. Hence the eigenvalues of $S$ are $\pm 1$ and $S$ is selfadjoint, $S^\dagger=S$.  The invariant subspaces of $S$ corresponding to the eigenvalues $1$ and $-1$ are the symmetric and skew-symmetric tensors respectively,
which can be identified with the symmetric  $\cH_S=\rS^2\C^n$ and skew-symmetric $\cH_A=\rA^2\C^n$  matrices in $\C^{n\times n}$, respectively.   
Note that the decomposition of a  matrix $X$ into a sum of symmetric and skew-symmetric matrices $X=(1/2)(X+X^\top)+(1/2)(X-X^\top)$ is an orthogonal decomposition.  That is 
\begin{equation*}
\cH_n\otimes\cH_n=\cH_S\oplus \cH_A=\C^{n\times n}=\rS^2\C^n\oplus \rA^2\C^n
\end{equation*}
is an orthogonal decomposition.  Observe that $S(X)=X^\top$.   Hence the action of $S$ on a rank-one operator $|X\rangle\langle Y|$ in $\rB(\cH_n\otimes \cH_n)$ is $S(|X\rangle \langle Y|)=|X^\top\rangle \langle Y|$.  Therefore the action of $S$ on rank one product operator in $\rB(\cH_n\otimes\cH_n)$ is given by
\begin{eqnarray*}
S(|\x\rangle|\bu\rangle\langle \by|\langle\bv|)=S(|\x\rangle|\bu\rangle)\langle \by|\langle \bv|=|\bu\rangle |\x\rangle \langle \by|\langle\bv|.
\end{eqnarray*}
Hence
\begin{eqnarray*}
\tr S(|\x\rangle|\bu\rangle\langle \by|\langle\bv|)=
(\langle \by|\langle\bv|)(|\bu\rangle|\x\rangle)=\langle \y|\bu\rangle\langle \bv|\x\rangle.
\end{eqnarray*}
Similarly
\begin{eqnarray*}
S(|\x\rangle|\bu\rangle\langle \by|\langle\bv|)S^\dagger=
|\bu\rangle |\x\rangle\langle\bv|\langle\y|.
\end{eqnarray*}
Use the identity \eqref{tenprodiden} and the above results to deduce that 
\begin{eqnarray*}
&&\tr S(|\x\rangle|\bu\rangle\langle \by|\langle\bv|)=
\langle \y|\bu\rangle \langle \bv|\x\rangle=\tr \big( (|\bu\rangle \langle\y|)\otimes(|\x\rangle \langle\bv|) \big),\\
&&\tr_A S(|\x\rangle|\bu\rangle\langle \by|\langle\bv|)S^\dagger=
\langle \bv|\bu\rangle|\x\rangle\langle \by|=\tr_B |\x\rangle|\bu\rangle\langle \by|\langle\bv|,\\
&&\tr_B S(|\x\rangle|\bu\rangle\langle \by|\langle\bv|)S^\dagger=
\langle \y|\x\rangle|\bu\rangle\langle \bv|=\tr_A |\x\rangle|\bu\rangle\langle \by|\langle\bv|.
\end{eqnarray*}
Use \eqref{tenprodiden} to deduce 
\begin{equation*}
S((|\x\rangle \langle \y|)\otimes (|\bu\rangle\langle \bv|))=|\bu\rangle |\x\rangle \langle\by|\langle\bv|=(|\bu\rangle\langle\y|)\otimes (|\bx\rangle\langle \bv|).
\end{equation*}
Combine the above equalities to obtain the following identities: 
\begin{align}\label{Werid}
\begin{split}
& \tr S(\rho^A\otimes\rho^B)=\tr \rho^A\rho^B, \quad
\rho^A,\rho^B\in\rB(\cH_n),\\
& \tr_A S\rho^{AB}S^\dagger=\tr_B \rho^{AB}, \;\; \tr_B S\rho^{AB}S^\dagger=\tr_A \rho^{AB}, \quad \rho^{AB}\in\rB(\cH_n\otimes\cH_n).
\end{split}
\end{align}
The first identity is due to Werner \cite{Wer89}, see also \cite{MPHUZ}.

Denote by $\ker C$ the kernel of a linear operator $C:\cH_n\otimes\cH_n\to \cH_n\otimes\cH_n$.  An operator $C$  is said to vanish exactly on symmetric matrices if $\ker C=\cH_S$. Thus a positive semidefinite $C$  vanishes exactly on $\cH_S$ if and only if it has $n(n-1)/2$ positive eigenvalues (counting with multiplicities) with the corresponding skew-symmetric eigenvectors.

Let $|1\rangle,\ldots,|n\rangle$ be an orthonormal basis in $\cH_n$.  Define (as in \cite{FECZ}) the maximally entangled singlet states spanned on two dimensional subspaces:
\begin{equation}\label{SOBHA}
|\psi_{ij}^-\rangle=\frac{1}{\sqrt{2}}\big(|i\rangle|j\rangle -|j\rangle|i\rangle\big) \textrm{ for } 1\le i<j\le n.
\end{equation}
Given a classical distance matrix $E = [e_{ij}]_{i, j = 1}^n$ with $e_{ij} > 0$ for all $1 \le i < j \le n$, the following operator is positive semidefinite and vanishes exactly on
the symmetric subspace, $\rS^2\C^n$
\cite[(11)]{FECZ}:
\begin{equation}\label{defCEop}
C^Q_E=\sum_{1\le i<j\le n} e_{ij} |\psi_{ij}^-\rangle \langle \psi_{ij}^-|, 
\end{equation}
Consider the operator $C^Q$ given by \eqref{defT}.
Then $C^Q$ is an orthogonal projection of $\C^{n\times n}$ onto 
antisymmetric subspace, $\rA^2\C^n$. 
 Hence $C^Q$ is of the form \eqref{defCEop}, where $e_{ij}=1$ for all $i < j$,
so such a distance matrix $E$ represents 
 the simplex configuration.
Denote by $\rU(n)\subset \C^{n\times n}$ the group of unitary matrices. The following lemma shows that $\rT_{C^Q}^Q$ is invariant under conjugation by a unitary matrix: 
\begin{proposition}\label{conjunit}  Assume that $\rho^A,\rho^B\in\Omega_n$ and $\rho^{AB}\in\Gamma^Q(\rho^A,\rho^B)$.  Then for $U\in\rU(n)$ the following equalities hold:
\begin{equation}\label{conjunit1}
\begin{aligned}
& \tr_B((U\otimes U) \rho^{AB} (U^\dagger\otimes U^\dagger)) = U\rho^A U^\dagger, \\ 
& \tr_A ((U\otimes U) \rho^{AB} (U^\dagger\otimes U^\dagger)) = U\rho^B U^\dagger,\\
& (U\otimes U) \Gamma^Q(\rho^A,\rho^B) (U^\dagger\otimes U^\dagger) =\Gamma^Q(U\rho^A U^\dagger,U \rho^B U^\dagger),\\
& \rT_{C}^Q(\rho^A,\rho^B) =\rT_{(U\otimes U)C(U^\dagger\otimes U^\dagger)}^Q(U\rho^A U^\dagger,  U\rho^B U^\dagger).
\end{aligned}
\end{equation}
In particular, the optimal transport cost for $C$ given by \eqref{defT} is unitarily invariant,
\begin{equation}\label{CQinvQOT}
\rT_{C^Q}^Q(\rho^A,\rho^B)=\rT_{C^Q}^Q(U\rho^A U^\dagger,  U\rho^B U^\dagger).
\end{equation}
\end{proposition}
\begin{proof}
Assume that $R$ is a pure state $R=|\psi\rangle \langle \psi|$.  The state $|\psi\rangle$ corresponds to a matrix $X\in\C^{n\times n}$ with $\tr X X^\dagger=1$.  Then $\tr_B R=X X^\dagger$ and $\tr_A R=X^\top \bar X$.
 Recall that $(U\otimes U)|\psi\rangle $ is represented by $\tilde X=UX U^\top$
   Now use \eqref{tilderhosigfor} to deduce the first two equalities in \eqref{conjunit1} if $R\in\Gamma^Q(\rho^A,\rho^B)$.
Recall that any $\rho^{AB}\in \Gamma^Q(\rho^A,\rho^B)$ is a convex combination of pure states $R_i=|\psi_i\rangle \langle \psi_i|, i \in[k]$.  That is $R=\sum_{i=1}^k a_i R_i$, where $a_i>0$ and $\sum_{i=1}^k a_i=1$.  Then $\tr_B R_i=\rho_i^A, \tr_A R_i=\rho^B_i$.
Now use the above results for $R_i$ to deduce the first two equalities in \eqref{conjunit1},
which imply  two following equalities.
Equality \eqref{CQinvQOT} is deduced from the fact
\begin{equation}\label{invCqt}
(U\otimes U) C^{Q}(U^\dagger\otimes U^\dagger)=C^{Q}.\qedhere
\end{equation}
\end{proof}
\subsection{SWAP fidelity and related quantities}\label{sec:swapfid}
Recall the definitions of \emph{fidelity}, \emph{Bures distance} and \emph{root infidelity} \cite{BZ17}. For $\rho^A,\rho^B\in\Omega_n$ these are defined, respectively, as

 \begin{align}
\mathrm{F}( \rho^A,  \rho^B) & = \Big(\rm {Tr} \big|\sqrt{ \rho^A} \sqrt{ \rho^B}\big|  \Big)^2, \label{fidelity}\\
\rB( \rho^A,  \rho^B) & =\sqrt{2} \sqrt{1 -\sqrt{\rF( \rho^A, \rho^B)}}, \label{Bures}\\
\rI( \rho^A,  \rho^B) & = \sqrt{1-\rF( \rho^A, \rho^B)}. \label{Ifidelity}
\end{align}
   
It is known that the root infidelity \cite{GLN05} and Bures distance lead to metrics on $\Omega_n$.
As $0\le \rF(\rho^A,\rho^B)\le 1$ it follows that 
\begin{equation*}
\rB(\rho^A,\rho^B)\le \sqrt{2} \; \rI(\rho^A,\rho^B).
\end{equation*}
{
We now describe basic properties of the SWAP-fidelity $\rF_S(\rho^A,\rho^B)$ given by \eqref{defswapf}:
\begin{proposition}\label{swapfidprop}  Assume that $\rho^A,\rho^B\in\Omega_n$ and $\rho^{AB}\in\Gamma^Q(\rho^A,\rho^B)$.   Then the SWAP-fidelity function has the following properties:
\begin{enumerate}[(a)]
\item 
\begin{equation}\label{eqTFS}
\rF_S(\rho^A,\rho^B)=1-2\,\rT^Q_{C^Q}(\rho^A,\rho^B).
\end{equation}
\item The function $\rF_S$ is a symmetric concave function on $\Omega_n\times \Omega_n$ with values in the interval $[0,1]$. 
\item $\rF_S(\rho^A,\rho^B)=1$ if and only if $\rho^A=\rho^B$.
\item $\rF_S(\rho^A,\rho^B)=0$ if and only if $\tr \rho^A\rho^B=0$.
\item The following equalities and inequalities hold 
\begin{align}
& \rF_S(\rho^A,\rho^B)=\rF_S(U\rho^AU^\dagger,U\rho^B U^\dagger), && \textrm{ for } U\in\rU(n),\label{swapfideU}\\
& \rF(\rho^A,\rho^B)\le \rF_S(\rho^A,\rho^B)\le \sqrt{\rF(\rho^A,\rho^B)}, && \label{swapfideqin}\\
& \rF_S(\rho^A\otimes\sigma^C, \rho^B\otimes\sigma^D)\ge \rF_S(\rho^A,\rho^B)\rF_S(\sigma^C,\rho^D), && \textrm{ for } \sigma^C,\sigma^D\in \Omega_m.\label{swapfidM}
\end{align}
\item $\rF_S(\rho^A,\rho^B)=\rF(\rho^A,\rho^B)$ if either $\rho^A$ or $\rho^B$ is a pure state.
\end{enumerate}
\end{proposition}

\medskip

\begin{proof}
(a)  Observe that $\tr S\rho^{AB}=1 -2\tr\big((1/2)(\I-S)\rho^{AB}\big)$.  
Then the maximal characterization of $\rF_S(\rho^A,\rho^B)$ in \eqref{defswapf} and the minimum characterization of $\rT^Q_{C^Q}(\rho^A,\rho^B)$ yield equality  \eqref{eqTFS}.

\medskip

\noindent 
(b) Proposition \ref{honconv} and equality \eqref{eqTFS} yield that $\rF_S$ is a concave function on $\Omega_n\times \Omega_n$ whose value is at most $1$.  Part (d) of Theorem \ref{kapTABprop} yields that $\rT_{C^Q}^Q(\rho^A,\rho^B)\le 1/2$.  
Equality \eqref{eqTFS} yields that $\rF_S\ge 0$.

\medskip

\noindent 
(c) Clearly, $\rF_S(\rho^A,\rho^B)=1\iff \rT_{C^Q}^Q(\rho^A,\rho^B)=0$.  Part (c) of Theorem \ref{kapTABprop} yields that this happens if and only if $\rho^A=\rho^B$.

\medskip

\noindent
(e)  Equation \eqref{swapfideU} follows from the equalities \eqref{eqTFS} and \eqref{CQinvQOT}.  Inequality \eqref{swapfideqin} is stated and proved in Theorem 10 in \cite{YZYY19}.  Finally, inequality \eqref{swapfidM} follows from 
Lemma \ref{ptcombS} and the following observations. Let $S_n$ and $S_{n,m}$ be the SWAP operators on $\cH_n\otimes \cH_n$ and $\cH_{(nm)^2}:=(\cH_n\otimes \cH_m)\otimes (\cH_n\otimes \cH_m)$, respectively.  Denote by $R_{n,m}:\cH_{n}\otimes \cH_{n}\otimes\cH_m\otimes \cH_m\to \cH_{(mn)^2}$ the SWAP of the two middle factors.
 Assume that 
\begin{equation*}
\rF_S(\rho^A,\rho^B)=\tr S_n \rho^{AB}, \quad
\rF_S(\sigma^C,\sigma^D)=\tr S_m\sigma^{CD}.
\end{equation*}
Let $\tau^{ACBD}:=R_{n,m}(\rho^{AB}\otimes \sigma^{CD})R_{n,m}\in \Omega_{nm}$.   The first line of \eqref{ptcombS1} yields 
 $\tau^{ACBD}\in\Gamma^Q(\rho^A\otimes \sigma^C,\rho^B\otimes \sigma^D)$.  
 The second line of \eqref{ptcombS1} yields 
 \begin{equation*}
 \tr S_{n,m}\tau^{ACBD}=(\tr S_n\rho^{AB})(\tr S_m\sigma^{CD})=
 \rF_S(\rho^A,\rho^B)\rF_S(\sigma^A,\sigma^B).
 \end{equation*}
 The maximum characterization of $\rF_S(\rho^A\otimes\sigma^C, \rho^B\otimes\sigma^D)$ yields inequality \eqref{swapfidM}.

  \medskip
 
 \noindent
 (d) Recall that $\tr \rho^A\rho^B=0$ if and only if the eigenvectors of $\rho^A$ and $\rho^B$ corresponding to positive eigenvalues are orthogonal.  This is equivalent to $\rF(\rho^A,\rho^B)=0$.  Use inequality \eqref{swapfideqin} to deduce part (d).
 
\medskip 
 
 \noindent
 (f) Assume that $\rho^A$ is a pure state.   Then $\sqrt{\rho^A}=\rho^A$.   Hence, 
 \begin{align*}
 & \big|\sqrt{ \rho^A} \sqrt{ \rho^B}\big|^2= \sqrt{\rho^A}\rho^B\sqrt{\rho^A}=\rho^A\rho^B\rho^A=(\tr\rho^A\rho^B)\rho^A \\ 
 \Rightarrow \quad &
 \big|\sqrt{ \rho^A} \sqrt{ \rho^B}\big|=
 \sqrt{\tr\rho^A\rho^B}\rho^A \; \Rightarrow \; \rF_S(\rho^A,\rho^B)=\tr \rho^A\rho^B.
 \end{align*}
 Use \eqref{eqTFS} and \eqref{QOTrankone} to deduce $\rF_S(\rho^A,\rho^B)=\rF(\rho^A,\rho^B)$.
\end{proof}
Recall the definition of $\windQ$ \eqref{defQOTmet}. 
\begin{corollary}\label{compmetric} Let $\rho^A,\rho^B\in\Omega_n$.  Then the following inequalities hold:
\begin{gather}\label{lowupbdsQT1}
\frac{1-\sqrt{\mathrm{F}(\rho^A,\rho^B)}}{2}  \le \rT_{C^Q}^Q(\rho^A,\rho^B)\le \frac{1-\mathrm{F}(\rho^A,\rho^B)}{2},\\
\frac{1}{2}\rB(\rho^A,\rho^B) \le  \windQ(\rho^A,\rho^B) \le \sqrt{\rT_{C^Q}^Q(\rho^A,\rho^B)} \le \frac{1}{\sqrt{2}}\rI(\rho^A,\rho^B)
\label{lowupbdsQT2}
%
\end{gather}
Moreover, if either of the states is pure, then
$\rT_{C^Q}^Q(\rho^A,\rho^B) = (1-\mathrm{F}(\rho^A,\rho^B))/2$. 

\end{corollary}

\begin{proof}
The bound \eqref{lowupbdsQT1} follows directly from inequalities \eqref{swapfideqin}. The third inequality in \eqref{lowupbdsQT2} is an immediate consequence of \eqref{lowupbdsQT1}, the first and second one follow from Theorem \ref{Wa2metthm}.
 The last statement is a consequence of equality \eqref{QOTrankone}, which holds if either of the states is pure, and point $(f)$ of Proposition \ref{swapfidprop}.
\end{proof}

Proposition \ref{swapfidprop} shows that the introduced SWAP-fidelity shares many features with the standard quantum fidelity introduced by Uhlmann \cite{Uh76} and Jozsa \cite{Jo94}. Notably, $\mathrm{F_S}$ is jointly concave, as is the root fidelity, $\sqrt{\mathrm{F}}$, but not the fidelity itself \cite{BZ17}. It is also remarkable that the SWAP-fidelity is supermultiplicative with respect to the tensor product, as is the superfidelity introduced in \cite{MPHUZ}. This feature might prove relevant for applications in quantum machine learning -- see \cite{FECZ} and references therein.

}

\section{Quantum Optimal Transport as  a  Semidefinite programming  problem}\label{sec:QOT}
One of the main results of this paper is the observation that the computation of the quantum transport is carried out efficiently using semidefinite programming (SDP) \cite{VB96}.  The main results of this section are the statement of QOT as the direct and the dual semidefinite programs:
\begin{theorem}\label{QOTSDP} Assume that $C\in \rS(\cH_m\otimes \cH_n)$, $\rho^A\in\Omega_m, \rho^B\in\Omega_n$.
Then the computation of $\rT_{C}^Q(\rho^A,\rho^B)$ is a semidefinite programming problem.  The value  of $\rT_{C}^Q(\rho^A,\rho^B)$ can be approximated within precision  $\varepsilon>0$ in polynomial time in the size of the data and $\log1/\varepsilon$.
\end{theorem}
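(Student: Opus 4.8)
The plan is to recast \eqref{defkapCAB} explicitly as a semidefinite program in standard form and then invoke known complexity bounds for SDP. First I would set up the program: the optimization variable is the Hermitian matrix $\rho^{AB}\in\rS(\cH_m\otimes\cH_n)$, regarded as a point of the real vector space of $mn\times mn$ Hermitian matrices (of real dimension $(mn)^2$); the objective $\rho^{AB}\mapsto\tr C\rho^{AB}$ is real-linear; and the membership $\rho^{AB}\in\Gamma^Q(\rho^A,\rho^B)$ is the conjunction of the conic constraint $\rho^{AB}\ge 0$ with the affine constraints $\tr_B\rho^{AB}=\rho^A$ and $\tr_A\rho^{AB}=\rho^B$. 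By \eqref{ptfrom} each of the latter two is a finite linear system in the entries of $\rho^{AB}$, equivalently a family of equations $\langle G_k,\rho^{AB}\rangle=b_k$ with explicit Hermitian operators $G_k$ (built from $\I_m\otimes|p\rangle\langle q|$ and $|i\rangle\langle j|\otimes\I_n$) and explicit scalars $b_k$ read off from the entries of $\rho^A$ and $\rho^B$. Thus \eqref{defkapCAB} is the minimization of a linear functional over the intersection of an affine subspace with the positive semidefinite cone, i.e.\ an SDP. Since complexity statements for SDP are usually phrased for real symmetric matrices, I would pass through the standard embedding of the Hermitian PSD cone into the real symmetric PSD cone of twice the size: it is linear, preserves (semi)definiteness, carries the objective and constraints to real-linear ones, and changes neither the dimensions nor the data sizes by more than a constant factor.

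Next I would record the regularity that the complexity analysis needs. The feasible set is nonempty, since $\rho^A\otimes\rho^B\in\Gamma^Q(\rho^A,\rho^B)$, and it is compact: any feasible $\rho^{AB}$ is positive semidefinite with $\tr\rho^{AB}=\tr\rho^A=1$, hence lies in the Euclidean unit ball. By Proposition \ref{propkapCC'} we may moreover assume $0\le C\le\I$, so that the objective takes values in $[0,1]$ on the feasible set; this pins down the meaning of ``precision $\varepsilon$'' and supplies the a priori output bound used in interior-point analyses. Finally, for the strong (interior-point) iteration bound one wants a strictly feasible point: if $\rho^A>0$ and $\rho^B>0$ then $\rho^A\otimes\rho^B>0$ is one, and in general Proposition \ref{redprop} reduces the problem to one whose marginals are positive definite on their ranges, after which such an interior point is again available.

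With these in place I would invoke the complexity theory of SDP. For rational data $C,\rho^A,\rho^B$ of total bit-size $L$, the program has $O((mn)^2)$ real variables and $O(m^2+n^2)$ affine constraints, all of bit-size polynomial in $L$, together with the explicit bounds above on the objective and on the norm of feasible points. By \cite[Theorem 5.1]{VB96}, a path-following interior-point method then computes the optimal value to additive error $\varepsilon$ in $O(\sqrt{mn}\,\log(1/\varepsilon))$ iterations, each of cost polynomial in $mn$ and $L$, so that the total running time is polynomial in the input size and in $\log(1/\varepsilon)$; alternatively one may use the ellipsoid method of Gr\"otschel--Lov\'asz--Schrijver together with the same explicit bounds.

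The step I expect to be the real obstacle is not the modeling, which is routine, but the regularity/conditioning that underlies the clean $\log(1/\varepsilon)$ iteration count: one must guarantee a strictly feasible point (and, for a fully rigorous bound, control the dual feasible set as well), which is precisely where the reductions of Propositions \ref{propkapCC'} and \ref{redprop} and the compactness of $\Gamma^Q(\rho^A,\rho^B)$ come in. Granting those, the statement follows from the cited SDP complexity results.
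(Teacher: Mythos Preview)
Your proposal is correct and follows essentially the same route as the paper: model the partial-trace constraints as real-linear equations, pass to real symmetric matrices via the standard doubling embedding, use $\rho^A\otimes\rho^B$ as a (strictly) feasible point when $\rho^A,\rho^B>0$, reduce to that case via Proposition~\ref{redprop}, and invoke \cite[Theorem~5.1]{VB96}. Your additional remarks on compactness and on normalizing $C$ via Proposition~\ref{propkapCC'} are harmless elaborations but not needed for the paper's argument.
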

\begin{theorem}\label{dualQOT} Assume that $\rho^A\in\Omega_m,\rho^B\in\Omega_n$ and $C\in \rS(\cH_m\otimes \cH_n)$.  Then the dual problem to \eqref{defkapCAB} is 
\begin{equation}\label{dualQOT1}
\sup\{\tr \sigma^A  \rho^A +\tr \sigma^B  \rho^B, \; \sigma^A\in\rS(\cH_m), \: \sigma^B\in \rS(\cH_n), \:  C-\sigma^A\otimes \mathbb{I}_n-\mathbb{I}_m\otimes \sigma^B\ge 0\}.
\end{equation}
Furthermore, the above supremum is equal to $\rT_{C}^Q( \rho^A, \rho^B)$.
Moreover, for a coupling matrix $\rho^{AB}\in \Gamma^Q(\rho^A,\rho^B)$ and $F=C-\sigma^A\otimes \mathbb{I}_n-\mathbb{I}_m\otimes \sigma^B\ge 0$ the following complementary implication holds: 
\begin{equation}\label{orthcond}
\tr F\rho^{AB}=0 \iff \tr C \rho^{AB}=\tr \sigma^A  \rho^A +\tr \sigma^B  \rho^B=\rT_{C}^Q( \rho^A, \rho^B).
\end{equation}
In particular, if $\tr F\rho^{AB}=0$ then $\rank F\le mn -\rank \rho^{AB}$. 

Assume that $ \rho^A, \rho^B> 0$. Then the above supremum is achieved:  There exist $\sigma^A\in\rS(\cH_m), \sigma^B\in \rS(\cH_n)$ such that
\begin{equation}\label{maxsig12}
\rT_{C}^Q( \rho^A, \rho^B)=\tr (\sigma^A\rho^A+\sigma^B\rho^B),\quad C-\sigma^A\otimes \mathbb{I}_n-\mathbb{I}_m\otimes\sigma^B\ge  0.
\end{equation}
\end{theorem} 
We remark that the equality \eqref{dualQOT1} is stated in \cite[(4.2)]{CHW19}.\subsection{Proofs of Theorems \ref{QOTSDP} and \ref{dualQOT}}\label{subsec:prfthm3.12}
{\textsc{Proof of Theorem \ref{QOTSDP}.}
Assume that $\rho^A=[a_{ij}]\in\Omega_m, \rho^B=[b_{pq}]\in\Omega_n$.
Denote the entries of the Hermitian matrix $C$ by $c_{(i,p)(j,q)}$, i.e.,  $c_{(i,p)(j,q)}=\overline{c_{(j,q)(i,p)}}$. Let $\bi=\sqrt{-1}$, and
\begin{align*}
& E_{ij}^A=|i\rangle\langle j|, && G_{ij}^A=\frac{1}{2}(E_{ij}^A+E_{ji}^A), && H_{ij}^A=\frac{1}{2}\bi (E_{ij}^A-E_{ji}^A),&& i,j\in[m],\\
& E_{pq}^B=|p\rangle\langle q|, && G_{pq}^B=\frac{1}{2}(E_{pq}^B+E_{qp}^B), && H_{pq}^B=\frac{1}{2}\bi (E_{pq}^B-E_{qp}^A), && p,q\in[n].
\end{align*}
Thus $|i\rangle,i\in[m]$, $E_{ij}^A,i,j\in[m]$,  $G_{ij}^A, 1\le i\le j\le m, H_{ij}^A , 1\le i< j\le m$ are the standard bases in $\C^m$,  $\C^{m\times m}$, and in the subspace of $m\times m$ Hermitian matrices respectively.  A similar observation applies when we replace $A$ and $m$ by $B$ and $n$.
The conditions $\tr_B \rho^{AB}=\rho^A, \tr_A \rho_{AB}=\rho^B$ are stated as the following linear conditions:
\begin{equation}\label{margcond}
\begin{aligned}
&\tr \rho^{AB}(G_{ij}\otimes \mathbb{I}_n)=\Re a_{ij}, && i\le j, && \tr \rho^{AB} (H_{ij}\otimes \mathbb{I}_n)=\Im a_{ij}, && i<j,\\
&\tr \rho^{AB}(\mathbb{I}_m\otimes G_{pq})=\Re b_{pq}, && p\le q, && \tr \rho^{AB} (\mathbb{I}_m\otimes H_{pq})=\Im b_{pq}, && p<q.
\end{aligned}
\end{equation} 
Here $\Re z,\Im z$ are the real and the imaginary part of the complex number $z\in\C$.
We assume that $ \rho^{AB}\geq 0$.  
Hence $\rT_{C}^Q(\rho^A,\rho^B)$ is a semidefinite problem for $\rho^{AB}$.

Assume first that $\rho^A, \rho^B$ are positive definite.  Then $ \rho^A\otimes  \rho^B$, viewed as a Kronecker tensor product, is positive definite. Thus $\Gamma^Q(\rho^A,\rho^B)$ contains a positive definite operator $ \rho^A\otimes  \rho^B$.  The standard SDP theory \cite[Theorem 5.1]{VB96} yields that $\rT_{C}^Q(\rho^A,\rho^B)$ can be computed in polynomial time with precision $\varepsilon>0$.  

(Note that the standard SDP is stated for real symmetric positive semidefinite matrices.  It is well known that Hermitian positive semidefinite matrices can be encoded as special real symmetric matrices of double dimension.  See the proof of Theorem \ref{dualQOT} for details.)

Assume that $\rho^A,\rho^B\ge 0$.  Then the restrictions 
$\rho^{A'}=\rho^A|_{\range\rho^A}$ and $\rho^{B'}=\rho^B|_{\range\rho^B}$ are positive definite.  Use
Proposition \ref{redprop}
to deduce that $\rT_{C}^Q(\rho^A,\rho^B)$ can be computed in polynomial time in precision $\varepsilon>0$.\qed

{\textsc{Proof of Theorem \ref{dualQOT}. }}
Let us first consider the simplified case where $ \rho^A, \rho^B,  C$ are real symmetric.  Let $\rS_k\supset \rS_{k,+}\supset \rS_{k,+,1}$ be the space of $k\times k$ real symmetric matrices, the cone of positive semidefinite matrices and the convex set of real density matrices.
Define 
\begin{align*}
& \Gamma^Q( \rho^A, \rho^B,\R)=\rS_{mn,+,1}\cap \Gamma^Q( \rho^A, \rho^B),\\ 
& \rT_{C}^Q( \rho^A, \rho^B,\R)=\min_{  \rho^{AB}\in \Gamma^Q( \rho^A, \rho^B,\R)}\tr  C  \rho^{AB}.
\end{align*}
We claim that the dual problem to $\rT_{C}^Q( \rho^A, \rho^B,\R)$ is given by
\begin{equation}\label{dualQOT1R}
\sup\{\tr \sigma^A \rho^A +\tr \sigma^B \rho^B, \sigma^A\in\rS_m, \sigma^B\in \rS_n, C-\sigma^A\otimes \mathbb{I}_n-\mathbb{I}_m\otimes \sigma^B\ge 0\}.
\end{equation}
Indeed, the conditions  $\tr_B  \rho^{AB}= \rho^A, \tr_A  \rho^{AB}= \rho^B$ for $ \rho^{AB}\in \rS_{mn,+}$ are stated as the linear conditions given by the first part of \eqref{margcond}.  Assume that $ \rho^A=[a_{ij}]\in \Omega_m,  \rho^B=[b_{ij}]\in\Omega_n$.  Recall the definition of the matrices $G_{ij,m}$ introduced in the beginning of the proof of Theorem \ref{QOTSDP}.
Then the standard dual characterization of the above semidefinite problem over   $\Gamma^Q( \rho^A, \rho^B,\R)$ has the following form (see~\cite[Theorem 3.1]{VB96} or~\cite[(2.4)]{FriSDP}):
\begin{align*}
 \max \Big\{ & \sum_{1\le i \le j\le m} a_{ij} \tilde u_{ij} + \sum_{1\le p \le q\le n} b_{pq}\tilde v_{pq}, \quad \tilde u_{ij},\tilde v_{pq}\in\R, \\
& \Big(\sum_{1\le i \le j\le m} \tilde u_{ij}(G_{ij,m}\otimes \mathbb{I}_n) + \sum_{1\le p \le q\le n}\tilde v_{pq}(\mathbb{I}_m\otimes G_{pq,n})\Big) \le   C \Big\}.
\end{align*}
Let 
\begin{eqnarray*}
\sigma^A=\sum_{1\le i \le j\le m} \tilde u_{ij}G_{ij,m}, \quad \sigma^B= \sum_{1\le p \le q\le n}\tilde v_{pq} G_{pq,n}.
\end{eqnarray*}
Then the last condition of the above maximum is $\sigma^A\otimes \mathbb{I}_n+ \mathbb{I}_m\otimes \sigma^B\le  C$.  Next observe that
\begin{eqnarray*}
\tr \sigma^A \rho^A+\tr \sigma^B \rho^B=
\bigl(\sum_{1\le i \le j\le m} a_{ij} \tilde u_{ij}\bigr) + \bigl(\sum_{1\le p \le q\le n} b_{pq}\tilde v_{pq}\bigr)
\end{eqnarray*}
Hence the dual to $\rT_{C}^Q( \rho^A, \rho^B,\R)$ is given by \eqref{dualQOT1R}.
Observe that we can choose $\sigma^A=-a\mathbb{I}_m, \sigma^B=0$, where $a$ is a positive big 
number such that $$C -\sigma^A \otimes \mathbb{I}_n-\mathbb{I}_m\otimes\sigma^B=C+a\I_{mn}>0 .$$ Hence the duality theorem \cite[Theorem 3.1]{VB96} yields that the supremum \eqref{dualQOT1R} is equal  to $\rT^Q_{C}( \rho^A, \rho^B,\R)$.  Assume that $ \rho^A, \rho^B> 0$.  Then $0< \rho^A\otimes  \rho^B\in\Gamma^Q( \rho^A, \rho^B,\R)$.  Theorem 3.1 in \cite{VB96} yields that the supremum \eqref{dualQOT1R} is achieved.

We now discuss the Hermitian case.  Let $\bi=\sqrt{-1}$. There is a standard injective map $L:\rS(\cH_m) \to \rS_{2m}$: 
\begin{eqnarray*}
L(X+\bi Y)=\begin{bmatrix} X&Y\\-Y&X\end{bmatrix}, \quad X,Y\in \R^{m\times m}, X^\top=X, Y^\top=-Y.
\end{eqnarray*}
Note that $L(X+\bi Y) \ge 0\iff X+\bi Y \ge 0$ and $L(X+\bi Y)> 0\iff X+\bi Y> 0$. 
Hence it is possible to translate an SDP problem over Hermitian matrices to an SDP problem over reals.  This yields the proof that the supremum in \eqref{dualQOT1} is equal to $\rT_{C}^Q( \rho^A, \rho^B)$.

Assume that  $\rho^{AB}\in \Gamma^Q(\rho^A,\rho^B)$ and $F=C-\sigma^A\otimes \mathbb{I}_n-\mathbb{I}_m\otimes \sigma^B\ge 0$.  As $\rho^{AB}$ and $F$ are positive semidefinite we obtain
\begin{eqnarray*}
0\le \tr F\rho^{AB}=\tr C \rho^{AB} -\tr \sigma^A\rho^A-\tr\sigma^B\rho^B
\end{eqnarray*}
The characterization \eqref{dualQOT1} yields the implication \eqref{orthcond}.  
As $F$ and $\rho^{AB}$ are positive semidefinite the condition $\tr F\rho^{AB}=0$  yields that 
$\rank F+\rank \rho^{AB}\le mn$.

Assume that $ \rho^A, \rho^B> 0$.  Then the above arguments show that the supremum in \eqref{dualQOT1} is achieved. \qed

 In Subsection \ref{subsec:nonexisF} we give an example of $\rho^A,\rho^B\in\Omega_2$, where $\rho^A$ is a pure state,  for which the supremum  \eqref{dualQOT1} is not achieved.
Note that the dual problem has an advantage over the original problem, as we are not constrained by linear conditions \eqref{margcond}.   Also the number of variables is smaller, as the supremum is restricted to $\rS(\cH_m)\times \rS(\cH_n)$.  However we have to deal with the condition $  \sigma^A\otimes \mathbb{I}_n+\mathbb{I}_m\otimes  \sigma^B\le C$.  

%
%
\section{Comparison of classical and quantum optimal transports for diagonal density matrices}\label{sec:diagdm}
The main result of this section is a technical elaboration  of the statement from \cite{CGP20} that the cost of quantum optimal transport is cheaper than that of the classical optimal transport.  This result is stated in Theorem \ref{clqotdiagdm}.  In Subsection \ref{subsec:QOTCOT}  we give a simple example in which the cost 
of the classical optimal transport is seven times
 higher than
  the cost of the quantum optimal transport.
In Subsection \ref{sec:decoh} we construct a family of 
matrices $C^Q_{\alpha}=\alpha C^Q +(1-\alpha)\diag(C^Q)$ for $\alpha\in[0,1]$, which interpolates between the quantum cost matrix $\alpha=1$ and its classical counterpart $\alpha=0$.  We show that in general the optimal transport cost $\rT^Q_{C_{\alpha}^Q}$ strictly decreases as a function of 
the coherence parameter $\alpha$.

\begin{lemma}\label{diagdecr} Assume that $\rho^A,\rho^B\in\Omega_n$ and $C_E^Q$ is defined by \eqref{defCEop}.
Then
\begin{equation}\label{diagdecr1}
 \rT_{C^Q_E}^Q(\diag(\rho^A),\diag(\rho^B))\le\rT_{C^Q_E}(\rho^A,\rho^B).
\end{equation}
\end{lemma}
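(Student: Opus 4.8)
The plan is to produce, from an optimal quantum coupling of $\rho^A$ and $\rho^B$, a coupling of $\diag(\rho^A)$ and $\diag(\rho^B)$ with the \emph{same} cost, by dephasing through a simultaneous diagonal twirl on the two tensor factors. For a phase vector $\et=(\theta_1,\dots,\theta_n)$ set $D_{\et}=\diag(e^{\bi\theta_1},\dots,e^{\bi\theta_n})\in\rU(n)$, and for a given $\rho^{AB}\in\Gamma^Q(\rho^A,\rho^B)$ define
\[
\sigma^{AB}=\int (D_{\et}\otimes D_{\et})\,\rho^{AB}\,(D_{\et}^\dagger\otimes D_{\et}^\dagger)\,d\et,
\]
the average over the normalized Haar measure on the $n$-torus. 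First I would check that $\sigma^{AB}\in\Gamma^Q(\diag(\rho^A),\diag(\rho^B))$: it is an average of positive semidefinite trace-one operators, hence lies in $\Omega_{n^2}$, and by the standard properties of partial traces (Appendix~\ref{sec:partr}) one has $\tr_B\big((D_{\et}\otimes D_{\et})\rho^{AB}(D_{\et}^\dagger\otimes D_{\et}^\dagger)\big)=D_{\et}\rho^A D_{\et}^\dagger$ and, symmetrically, $\tr_A(\cdot)=D_{\et}\rho^B D_{\et}^\dagger$; since averaging $D_{\et}\,\rho\, D_{\et}^\dagger$ over $\et$ annihilates every off-diagonal entry of $\rho$ and fixes its diagonal, it follows that $\tr_B\sigma^{AB}=\diag(\rho^A)$ and $\tr_A\sigma^{AB}=\diag(\rho^B)$.

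The crux is that the twirl does not change the cost. The operator $D_{\et}\otimes D_{\et}$ scales the basis vector $|i\rangle|j\rangle$ by $e^{\bi(\theta_i+\theta_j)}$, and this factor is symmetric in $i$ and $j$; therefore each singlet $|\psi^-_{ij}\rangle=\tfrac1{\sqrt2}\big(|i\rangle|j\rangle-|j\rangle|i\rangle\big)$ from \eqref{SOBHA} is an eigenvector of $D_{\et}\otimes D_{\et}$, with eigenvalue $e^{\bi(\theta_i+\theta_j)}$. Hence $D_{\et}\otimes D_{\et}$ commutes with each projector $|\psi^-_{ij}\rangle\langle\psi^-_{ij}|$, and so with $C^Q_E=\sum_{1\le i<j\le n}e_{ij}\,|\psi^-_{ij}\rangle\langle\psi^-_{ij}|$, which gives $(D_{\et}^\dagger\otimes D_{\et}^\dagger)\,C^Q_E\,(D_{\et}\otimes D_{\et})=C^Q_E$ for every $\et$. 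Consequently
\[
\tr\big(C^Q_E\,\sigma^{AB}\big)=\int\tr\big((D_{\et}^\dagger\otimes D_{\et}^\dagger)\,C^Q_E\,(D_{\et}\otimes D_{\et})\,\rho^{AB}\big)\,d\et=\tr\big(C^Q_E\,\rho^{AB}\big).
\]

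To conclude, take $\rho^{AB}$ to be an optimal coupling for $(\rho^A,\rho^B)$, which exists because $\Gamma^Q(\rho^A,\rho^B)$ is compact and $\rho^{AB}\mapsto\tr(C^Q_E\rho^{AB})$ is continuous. Feeding $\sigma^{AB}\in\Gamma^Q(\diag(\rho^A),\diag(\rho^B))$ into the minimality definition \eqref{defkapCAB} and using the displayed cost identity yields
\[
\rT^Q_{C^Q_E}(\diag(\rho^A),\diag(\rho^B))\le\tr\big(C^Q_E\,\sigma^{AB}\big)=\tr\big(C^Q_E\,\rho^{AB}\big)=\rT^Q_{C^Q_E}(\rho^A,\rho^B),
\]
which is \eqref{diagdecr1}. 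I do not anticipate a genuine obstacle; the only step demanding care is the eigenvector computation behind $[\,D_{\et}\otimes D_{\et},\,C^Q_E\,]=0$ — that is, the observation that the singlet vectors are ``diagonal enough'' that the simultaneous diagonal twirl fixes $C^Q_E$ — whereas the partial-trace bookkeeping and the compactness/continuity points are routine. (The $n$-torus average may equally well be replaced by the finite average over diagonal unitaries whose entries are $n$-th roots of unity, with the same dephasing effect.)
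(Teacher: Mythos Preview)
Your proof is correct and follows essentially the same idea as the paper's: both exploit that $C^Q_E$ is invariant under conjugation by $D\otimes D$ for diagonal unitaries $D$, and that averaging $D\rho D^\dagger$ over such $D$ yields $\diag(\rho)$. The only cosmetic difference is that the paper averages over the finite group of diagonal $\pm1$ matrices and invokes the convexity of $\rT^Q_{C^Q_E}$ (Proposition~\ref{honconv}) together with the invariance $\rT^Q_{C^Q_E}(D\rho^A D,D\rho^B D)=\rT^Q_{C^Q_E}(\rho^A,\rho^B)$, whereas you average the coupling itself over the full torus and bypass the convexity step entirely.
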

\begin{proof}  Without loss of generality we can assume that the basis 
$|1\rangle,\ldots,|n\rangle$ used in \eqref{defCEop} is the standard orthonormal basis in $\cH_n=\C^n$.
Denote  by $\cD_n\subset \C^{n\times n}$ the subgroup of diagonal matrices whose diagonal entries are $\pm 1$.  Note that $|\cD_n|=2^n$ and $\cD_n$ is a subgroup of unitary matrices.   Observe next that, for $D\in\cD_n$,
\begin{align*}
(D\otimes D) |\psi_{ij}^-\rangle \langle\psi_{ij}^-| (D\otimes D)= |\psi_{ij}^-\rangle \langle\psi_{ij}^-| && \Rightarrow && (D\otimes D)C^Q_E(D\otimes D)=C^Q_E.
\end{align*}
Hence $\rT_{C^{Q}_E}^Q(\rho^A,\rho^B)=\rT_{C^{Q}_E}^Q(D\rho^A D,D\rho^B D)$ for each $D\in\cD_n$.  Clearly,
\begin{equation*}
\diag(\rho^A)=2^{-n}\sum_{D\in\cD_n} D\rho^A D, \qquad \diag(\rho^B)=2^{-n}\sum_{D\in\cD_n} D\rho^B D.
\end{equation*}
Use the convexity of $\rT_{C^{Q}_E}^Q(\rho^A,\rho^B)$ to obtain 
\begin{equation*}
\rT_{C^{Q}_E}(\diag(\rho^A),\diag(\rho^B))\le 2^{-n}\sum_{D\in\cD_n} \rT_{C^{Q}_E}^Q(D\rho^A D,D\rho^B D)=\rT_{C^{Q}_E}^Q(\rho^A ,\rho^B ).\qedhere
\end{equation*}
\end{proof}

Assume that $\bp^A\in\Pi_m,\bp^B\in \Pi_n$.
The following lemma gives the  isomorphism of $\Gamma^{cl}(\bp^A,\bp^B)$
and $\Gamma^Q_{de}(\diag(\bp^A),\diag(\bp^A))$ 
mentioned
 in the Introduction.
Furthermore, it describes a special $\rho^{AB}\in\Gamma^Q(\diag(\bp^A), \diag(\bp^B))$ induced by $\bp^{AB}\in \Gamma^{cl}(\bp^A,\bp^B)$.
\begin{lemma}\label{diaglemobs}  Let $\rho^A \in\Omega_m, \rho^B\in\Omega_n$ and assume that  $\bp^A\in\Pi^m,\bp^B\in\Pi_n$ are induced by the diagonal entries of $\rho^A,\rho^B$ respectively.  Then 
\begin{enumerate}[(a)]
\item Each matrix $X=[x_{ip}]_{i\in[m],p\in[n]}\in \Gamma^{cl}(\bp^A,\bp^B)$ induces the following two matrices 
\begin{flalign*}
&& R=[r_{(i,p)(j,q)}],\tilde R=[\tilde r_{(i,p)(j,q)}] \in\Gamma^Q(\diag(\bp^A),\diag(\bp^B)), \;\;\, i,j\in[m], p,q\in[n].
\end{flalign*} 
The matrix $R$ is diagonal with $r_{(i,p)(i,p)}=x_{ip}$ for $i\in[m], \, p\in[n]$, and $ \tilde R- R$ is a matrix whose only possible nonzero entries are the entries $((i,p)(p,i))$ for $i,p\in[\min(m,n)]$ and  $i\ne p$ which are equal to $\sqrt{x_{ip}x_{pi}}$. Furthermore, $\rank \tilde R\le mn-\min(m,n)(\min(m,n)-1)/2$.
\item Each matrix $R=[r_{(i,p)(j,q)}]\in\Gamma^Q(\rho^A,\rho^B)$ induces the following two matrices: first,
$X=[x_{ip}]\in \Gamma^{cl}(\bp^A,\bp^B)$, where $x_{ip}=r_{(i,p)(i,p)}$ for $i\in [m],p\in[n]$.
Second, $\hat R\in \Gamma^Q(\diag(\rho^A),\diag(\rho^B))$, which is obtained by replacing the entries of $R$ at places $((i,p)(j,q))$ by zero unless either $((i,p)(j,q))= ((i,p)(i,p))$ for $i\in[m], p\in[n]$ or $((i,p)(j,q))= ((i,p)(p,i))$ for $i,p\in[\min(m,n)], i\ne p$.
\end{enumerate}
\end{lemma}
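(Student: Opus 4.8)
The plan is to verify directly, for each matrix produced, the conditions defining a classical or quantum coupling. The whole argument hinges on one structural observation: reorder the index set $[m]\times[n]$ so that the indices $(i,p)$ with $i,p\le\min(m,n)$ and $i\ne p$ are grouped in consecutive pairs $\{(i,p),(p,i)\}$, while every other index is left alone. With respect to this ordering the matrices $\tilde R$ of part (a) and $\hat R$ of part (b) become block diagonal, with $2\times 2$ blocks on the paired indices and $1\times 1$ blocks elsewhere. Note that $(i,p)$ admits a partner $(p,i)$ exactly when $p\le m$ and $i\le n$, i.e.\ when $i,p\le\min(m,n)$; pinning down this index bookkeeping, and confirming that the unpaired indices really do form singleton blocks (these include all diagonal indices $(i,i)$), is the only place in the proof that calls for a little care.

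For part (a): since $X\in\Gamma^{cl}(\bp^A,\bp^B)$ has nonnegative entries summing to $1$, the diagonal matrix $R$ with $r_{(i,p)(i,p)}=x_{ip}$ lies in $\Omega_{mn}$, and evaluating $(\tr_B R)_{ij}=\sum_p r_{(i,p)(j,p)}$ and $(\tr_A R)_{pq}=\sum_i r_{(i,p)(i,q)}$ with $R$ diagonal gives $\tr_B R=\diag(\bp^A)$, $\tr_A R=\diag(\bp^B)$; hence $R\in\Gamma^Q(\diag(\bp^A),\diag(\bp^B))$. The matrix $\tilde R$ is obtained by adding to $R$ the entries $\sqrt{x_{ip}x_{pi}}$ at positions $((i,p),(p,i))$ and $((p,i),(i,p))$ for $i\ne p$, $i,p\le\min(m,n)$; it is real symmetric, has the same trace as $R$, and each of its $2\times 2$ blocks $\left[\begin{smallmatrix} x_{ip}&\sqrt{x_{ip}x_{pi}}\\ \sqrt{x_{ip}x_{pi}}&x_{pi}\end{smallmatrix}\right]$ is the rank-one Gram matrix of $(\sqrt{x_{ip}},\sqrt{x_{pi}})^{\top}$, hence positive semidefinite, so $\tilde R\ge 0$. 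The partial traces of $\tilde R$ coincide with those of $R$, because a swap entry at $((a,b),(b,a))$ with $a\ne b$ cannot enter $(\tr_B\tilde R)_{ij}$ or $(\tr_A\tilde R)_{pq}$: a contribution would force $a=b$. Finally, with $k=\min(m,n)$, the matrix $\tilde R$ is a direct sum of $\binom{k}{2}$ blocks of size $2$ and rank at most $1$ together with $mn-k(k-1)$ blocks of size $1$, so $\rank\tilde R\le\binom{k}{2}+(mn-k(k-1))=mn-\tfrac{k(k-1)}{2}$.

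For part (b): given $R\in\Gamma^Q(\rho^A,\rho^B)$, positivity of $R$ makes its diagonal entries $x_{ip}:=r_{(i,p)(i,p)}$ nonnegative, and reading off the diagonals of $\tr_B R=\rho^A$ and $\tr_A R=\rho^B$ yields $\sum_p x_{ip}=\rho^A_{ii}=\bp^A_i$ and $\sum_i x_{ip}=\rho^B_{pp}=\bp^B_p$, so $X=[x_{ip}]\in\Gamma^{cl}(\bp^A,\bp^B)$. For $\hat R$, obtained from $R$ by retaining only the diagonal entries and the swap entries at $((i,p),(p,i))$ with $i\ne p$, $i,p\le\min(m,n)$ and zeroing the rest, the block pattern is the same as in part (a), but now each $2\times 2$ block $\left[\begin{smallmatrix} r_{(i,p)(i,p)}&r_{(i,p)(p,i)}\\ r_{(p,i)(i,p)}&r_{(p,i)(p,i)}\end{smallmatrix}\right]$ is a principal submatrix of $R\ge 0$, hence positive semidefinite; thus $\hat R\ge 0$ and $\tr\hat R=\tr R=1$, and the same ``$a=b$'' argument gives $\tr_B\hat R=\diag(\rho^A)$, $\tr_A\hat R=\diag(\rho^B)$, so $\hat R\in\Gamma^Q(\diag(\rho^A),\diag(\rho^B))$. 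When $R$ is itself diagonal one has $\hat R=R$ and the two constructions of parts (a) and (b) are mutually inverse, which is the isomorphism between $\Gamma^{cl}(\bp^A,\bp^B)$ and the diagonal quantum couplings mentioned before the lemma. I expect no genuine mathematical obstacle here: once the block structure and the valid-index bookkeeping are fixed, positive semidefiniteness is immediate (Gram matrices in (a), principal submatrices in (b)), trace preservation is obvious, and the partial-trace identities reduce to the remark that a lone off-diagonal swap entry is invisible to both partial traces.
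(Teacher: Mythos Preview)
Your proof is correct and follows essentially the same route as the paper's: both arguments identify the block-diagonal structure (singleton blocks and $2\times 2$ swap blocks), verify positive semidefiniteness of the $2\times 2$ blocks via the rank-one Gram form in (a) and via principal submatrices in (b), and check the partial-trace conditions directly. Your explicit remark that a swap entry at $((a,b),(b,a))$ with $a\ne b$ cannot appear in either partial trace is a slightly more transparent justification than the paper's terse appeal to the partial-trace formula, but the substance is the same.
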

\begin{proof} (a) As $X\in \Gamma^{cl}(\bp^A,\bp^B)$ we deduce
\begin{equation*}
\sum_{j=1}^n x_{ij}=p^{A}_i,\,i\in[m], \qquad \sum_{i=1}^m x_{ij}=p^B_j, j\in[n].
\end{equation*}
Assume that $R$ is a diagonal matrix with $r_{(i,p)(i,p)}=x_{ip}$.  Use 
\eqref{ptfrom} to deduce that  $R\in\Gamma^Q(\diag(\bp^A),\diag(\bp^B))$.

Consider now the matrix $\tilde R$.  In view of \eqref{ptfrom} we deduce that $\tr_B \tilde R=\diag(\bp^A)$ and  $\tr_A \tilde R=\diag(\bp^B)$.  It is left to show that $\tilde R$ is positive semidefinite.
Observe that $\tilde R$ is a direct sum of $\big(mn-\min(m,n)(\min(m,n)-1)\big)$ blocks of size one and 
$\big(\min(m,n)(\min(m,n)-1)/2\big)$ blocks of size two: $[x_{ii}]$ for $i\in[\min(m,n)]$, $[x_{ip}]$ for $i\in[m],p\in[n],\max(i,p)>\min(m,n)$, and
\begin{equation}\label{defXij}
 X_{ip}= \begin{bmatrix}x_{ip}&\sqrt{x_{ip}x_{pi}}\\\sqrt{x_{ip}x_{pi}}&x_{pi}\end{bmatrix}, \quad \textrm{ for } 1\le i<p\le \min(m,n). 
 \end{equation}
   As $X\ge 0$ each block is positive semidefinite 
    and has rank at most $1$.  Hence $\rank \tilde R\le mn- \min(n,n)(\min(m,n)-1)/2$. 

\noindent
(b) Assume that $R\in\Gamma^Q(\rho^A,\rho^B)$.  
As $R$ is positive semidefinite we deduce that $r_{(i,p)(i,p)}\ge 0$.  The above arguments yield that 
the matrix $X=[r_{(i,p)(i,p)}]\in\Gamma^{cl}(\bp^A,\bp^B)$.  Observe next that $\hat R$ is a direct sum of blocks of size one and two: $[r_{(i,i)(i,i)}]$ for $i\in[\min(m,n)]$, 
$[r_{(i,p)(i,p)}]$ for $\max(i,p) >[\min(m,n)]$, and
\begin{equation}\label{defRij}
R_{ip}=\begin{bmatrix}r_{(i,p)(i,p)}&r_{(i,p)(p,i)}\\r_{(p,i)(i,p)}&r_{(p,i)(p,i)}\end{bmatrix}, \quad  \textrm{ for } 1\le i<p\le \min(m,n).  
\end{equation}
Clearly all these blocks of size one and two are principal submatrices of $R$.  As $R$ is positive semidefinite, each such submatrix is positive semidefinite.  Hence $\hat R$ is positive semidefinite.  
Use \eqref{ptfrom} to deduce that $\tr_B \hat R=
 \diag(\bp^A)$ and 
  $\tr_A \hat R=
 \diag(\bp^B)$.
%
\end{proof}
\begin{theorem}\label{clqotdiagdm}  Assume that $\bp^A\in\Pi_m, \bp^B\in\Pi_n$ are induced by the diagonal entries of  $\rho^A\in\Omega_m,\rho^B\in\Omega_n$ respectively.   Let $C=[C_{(i,p)(j,q)}]$ for $i,j\in[m],p,q\in[n]$ be a Hermitian matrix.  Define $C^{cl}=[C^{cl}_{ip}]$ by $C^{cl}_{ip}=C_{(i,p)(i,p)}$ for $i\in[m],p\in[n]$.   Let $\Gamma_{de}^Q(\diag(\bp^A),\diag(\bp^B)) \subset \Gamma^Q(\diag(\bp^A),\diag(\bp^B))$ be the subset of diagonal matrices.  Define 
\begin{equation*}
\rT^Q_{C,de}(\diag(\bp^A),\diag(\bp^B))=\min_{R\in\Gamma^Q_{de}(\bp^A,\bp^B)}\tr CR.
\end{equation*}
Then 
\begin{enumerate}[(a)]
\item
\begin{equation}\label{OTgeQOT}
\begin{aligned}
\rT_{C^{cl}}^{cl}(\bp^A,\bp^B) & =\rT_{C,de}^Q(\diag(\bp^A),\diag(\bp^B))=
\rT_{\diag(C)}^Q(\rho^A,\rho^B) \\
& \ge \rT_{C}^Q(\diag(\bp^A),\diag(\bp^B)).
\end{aligned}
\end{equation}
\item
Assume that $m\le n$, and $C^Q=[C_{(i,p)(j,q)}^Q]\in \rS_+(\cH_n\otimes \cH_n)$.  
Denote by $C^Q_{m,n}\in\rS_+(\cH_m\otimes\cH_n)$ the submatrix of $C^Q$ whose entries are $C_{(i,p)(j,q)}^Q$ for $i,j\in[m], p,q\in[n]$.  Let $C^{cl}_{m,n}$ be the $m\times n$ nonnegative matrix induced by the diagonal entries of $C^Q_{m,n}$.
Then
\begin{align}\label{clqotdiagdm0}
& \rT_{C^{cl}_{m,n}}^{cl}(\bp^A,\bp^B) =
\frac{1}{2}\min_{X\in\Gamma^{cl}(\bp^A,\bp^B)}\Big(\sum_{1\le i<p\le m} ( x_{ip}+x_{pi} ) \;\, +\sum_{\substack{1\le i \le m,\\ m+1\le p\le n}} x_{ip}\Big) , \\
& \rT_{C^Q_{m,n}}^Q \big(\diag(\bp^A),\diag(\bp^B) \big)  \notag\\
& \qquad\;\; =\frac{1}{2}\min_{X\in\Gamma^{cl}(\bp^A,\bp^B)}\Big(\sum_{1\le i<p\le m} \big(x_{ip}+x_{pi}-2\sqrt{x_{ip}x_{pi}} \big) \; +\sum_{\substack{1\le i \le m, \\m+1\le p\le n}} x_{ip}\Big). \notag
\end{align}
\item Suppose that $m=n=2$.  Assume that $\bs=(s_1,s_2)^\top, \bt=(t_1,t_2)^\top$ are two probability vectors.  Then 
\begin{equation}\label{QOTfdiagorm2b}
\rT_{C^{Q}}^Q(\diag(\bs),\diag(\bt))=\frac{1}{2}
\begin{cases}
(\sqrt{s_1}-\sqrt{t_1})^2, \; \textrm{ if } s_2\ge t_1,\\
(\sqrt{s_2}-\sqrt{t_2})^2, \; \textrm{ if } s_2< t_1. 
\end{cases}
\end{equation}
Furthermore
\begin{equation}\label{QOTfdiagorm2a}
\rT_{C^{Q}}^Q(\diag(\bs),\diag(\bt))=\frac{1}{2}\max\big((\sqrt{s_1}-\sqrt{t_1})^2,(\sqrt{s_2}-\sqrt{t_2})^2\big).
\end{equation}
\end{enumerate}
\end{theorem}
\begin{proof} (a)  Let $X=[x_{ij}]\in\Gamma^{cl}(\bp^A,\bp^B))$ correspond to a diagonal matrix 

\noindent
$R\in \Gamma_{de}^Q(\diag(\bp^A),\diag(\bp^B))$ as in Lemma \ref{diaglemobs}.  Then $\tr C^{cl}X^\top=\tr C R$. This shows the first equality in \eqref{OTgeQOT}.
To show the second equality in \eqref{OTgeQOT} observe that for $R\in \Gamma^Q(\rho^A,\rho^B)$ we have  $\tr \diag(C) R=\tr \diag(C)\diag(R)$.
Next observe that $\diag(R)\in\Gamma_{de}^Q(\diag(\bp^A),\diag(\bp^B))$.
As $$\Gamma_{de}^Q(\diag(\bp^A),\diag(\bp^B))\,\subset\, \Gamma^Q(\diag(\bp^A),\diag(\bp^B))$$ we deduce the inequality 
\begin{equation*}
\rT_{C,de}^Q(\diag(\bp^A),\diag(\bp^B))\,\ge\, \rT_{C}^Q(\diag(\bp^A),\diag(\bp^B)).
\end{equation*}
The proof of \eqref{OTgeQOT} is complete.

\noindent
(b)  Let  $R\in \Gamma^Q(\diag(\bp^A),\diag(\bp^B))$.  Define $X\in \Gamma^{cl}(\bp^A,\bp^B)$ and $\hat R\in\Gamma^Q(\diag(\bp^A),\diag(\bp^B))$  as in part (b) of Lemma \ref{diaglemobs}.  Furthermore, let $\tilde R\in\Gamma^Q(\diag(\bp^A),\diag(\bp^B))$  be defined as in part (a) of Lemma \ref{diaglemobs}.
It is straightforward to show that  
\begin{eqnarray*}
\tr \diag(C^Q_{m,n})R=\tr C^{cl}_{m,n} X^\top,\quad
\tr C^Q_{m,n} R=\tr C^Q_{m,n}\hat R.
\end{eqnarray*}
Use the equalities in \eqref{OTgeQOT} to deduce the first equality in  \eqref{clqotdiagdm0}.

We now show the second equality in  \eqref{clqotdiagdm0}.
 As each $R_{ip}$ in \eqref{defRij} is positive semidefinite we deduce
\begin{align*}
& \tr C^Q_{m,n}\hat R \\ 
& \quad = \frac{1}{2}\Big( \sum_{1\le i<p\le m} \big( r_{(i,p)(i,p)} +r_{(p,i)(p,i)}-2\Re r_{(i,p)(p,i)}\big) \: +\!\!\!\!
\sum_{\substack{1\le i\le m,\\ m+1\le p\le n}} r_{(i,p)(i,p)}\Big) \\
& \quad \ge \frac{1}{2}\Big( \sum_{1\le i<j\le m} \big( r_{(i,p)(i,p)}+ r_{(p,i)(p,i)}-2\sqrt{r_{(i,p)(i,p)}r_{(p,i)(p,i)}}\big) \: + \!\!\!\!
\sum_{\substack{1\le i\le m,\\ m+1\le p\le n}} r_{(i,p)(i,p)}\Big) \\
& \quad \ge \frac{1}{2}\Big(\sum_{1\le i<p\le m} \big( x_{ij}+x_{ji}-2\sqrt{x_{ij}x_{ji}}\big) \: + \!\!\!\!
\sum_{\substack{1\le i\le m,\\ m+1\le p\le n}} x_{ip}\Big)=\tr C^Q_{m,n}\tilde R.
\end{align*}
This establishes the second equality in \eqref{clqotdiagdm0}.

\noindent
(c)  Assume that $s_2\ge t_1$.  Then $A=\begin{bmatrix}0&s_1\\t_1&s_2-t_1\end{bmatrix}\in\Gamma^{cl}(\bs,\bt)$.  Therefore 
\begin{equation*}
\rT_{C^{Q}}^Q(\diag(\bs),\diag(\bt))\le \frac{1}{2}(t_1+s_1-2\sqrt{t_1s_1})=\frac{1}{2}(\sqrt{s_1}-\sqrt{t_1})^2.
\end{equation*}  
If $s_1t_1=0$ then $\Gamma^{cl}(\bs,\bt)=\{A\}$, and $\rT_{C^{Q}}^Q(\bs,\bt)= \frac{1}{2}(\sqrt{s_1}-\sqrt{t_1})^2$.

Assume that $s_1t_1>0$.  Then $\Gamma^{cl}(\bs,\bt)$ is an interval $[A,B]$.
Indeed, let $C=\begin{bmatrix}1&-1\\-1&1\end{bmatrix}$.  So $A+tC\in \Gamma_{cl}(\bs,\bt)$ for $t$ small and positive, and $B=A+t_0C$ for some $t_0>0$.  
Let $g(t)=f(A+tC)$ for $t\in[0,t_0]$.  Recall that $g(t)$ is a convex function on $[0,t_0]$. 
Observe next that 
$$g'(0+)=\frac{1}{2}\big(-2+s_1^{-1/2}t_1^{1/2}+s_1^{1/2}t_1^{-1/2}\big)=\frac{1}{2}s_1^{-1/2}t_1^{-1/2}\big(\sqrt{s_1}-
\sqrt{t_1} \big)^2\ge 0.$$
Hence $g(t)\ge g(0)$ for $t\in[0,t_0]$.   This proves \eqref{QOTfdiagorm2b} for $s_2\ge t_1$.

To show the equality \eqref{QOTfdiagorm2a}  for $s_2\ge t_1$ we need 
 to show that  $\big(\sqrt{s_1}-\sqrt{t_1})^2\ge \big(\sqrt{s_2}-\sqrt{t_2})^2$.  Let $x\in[0,1/2]$.  Observe that the function $\sqrt{1/2+x}+\sqrt{1/2-x}$ is strictly  decreasing on $[0,1/2]$.   Hence
\begin{align*}
&\sqrt{s_1}+\sqrt{s_2}\le \sqrt{t_1}+\sqrt{t_2} \iff \max(s_1,s_2)\ge \max(t_1,t_2),\\
&\sqrt{s_1}+\sqrt{s_2}\ge \sqrt{t_1}+\sqrt{t_2} \iff \max(s_1,s_2)\le \max(t_1,t_2).
\end{align*}

Suppose first that $s_2\ge t_2$.  Hence $s_2\ge \max(t_1,t_2)$, and $s_1=1-s_2\le1-t_2=t_1$.  Thus 
\begin{equation*}
|\sqrt{s_1}-\sqrt{t_1}|=\sqrt{t_1}-\sqrt{s_1}\ge \sqrt{s_2}-\sqrt{t_2}=|\sqrt{s_2}-\sqrt{t_2}|.
\end{equation*}

Suppose second that $s_2<t_2$.  Hence $t_2\ge s_1>t_1$.  Thus $\max(t_1,t_2)\ge \max(s_1,t_1)$.  Hence
 \begin{equation*}
|\sqrt{s_1}-\sqrt{t_1}|=\sqrt{s_1}-\sqrt{t_1}\ge \sqrt{t_2}-\sqrt{s_2}=|\sqrt{s_2}-\sqrt{t_2}|.
\end{equation*}
This proves \eqref{QOTfdiagorm2a} in the case $s_2\geq t_1$.   
Similar arguments proves \eqref{QOTfdiagorm2b} and \eqref{QOTfdiagorm2a} in the case $s_2<t_1$.
\end{proof}

On the set of rectangular matrices $\R^{m\times n}$, where $m\le n$, define 
\begin{equation}\label{deff(X)}
\begin{aligned}
f(X)=\frac{1}{2}\Big( \sum_{1\le i<p\le m} \big( x_{ip}+x_{pi}-2\sqrt{x_{ip}x_{pi}}\big)+\sum_{\substack{1\le i\le m,\\ m+1\le p\le n}} x_{ip} \Big), \\
X=[x_{ip}]\in \R_+^{m\times n}.
\end{aligned}
\end{equation} 
Note that the second sum is zero if $m=n$.
As the function $\sqrt{xy}$ is a concave function on $\R_+^2$ it follows that $f(X)$ is a convex function on $\R^{m\times n}_+$.  Hence $\rT_{C^{Q}_{m,n}}^Q(\diag(\bp^A),\diag(\bp^B))$ is the minimum of the convex function $f(X)$ on $\Gamma^{cl}(\bp^A,\bp^B)$.  Therefore this minimum can be computed in polynomial time within precision $\varepsilon>0$.
\begin{remark}\label{remextCQE}  Note that  the second equality in \eqref{clqotdiagdm0} can be extended for a broad class of cost matrices
$C^Q_E$ introduced in \eqref{defCEop}.
\end{remark}

Lemma 11 in \cite{YZYY19} shows that 
\begin{equation}\label{YZYYdub}
\rT_{C^Q}^Q(\diag(\bs),\diag(\bt))\le\frac{1}{2}\Big( \sum_{i=1}^n (\sqrt{s_i}-\sqrt{t_i})^2-\min_{j\in[n]} (\sqrt{s_j}-\sqrt{t_j})^2\Big),
\end{equation}
 for $\bs,\bt\in\Pi_n$.
Moreover, Algorithm 1 in \cite{YZYY19} gives $X\in \Gamma^{cl}(\bs,\bt)$ such that $f(X)$ is bounded from above by the right hand side of \eqref{YZYYdub}.
Note \eqref{QOTfdiagorm2a} yields that for $n=2$ the inequality \eqref{YZYYdub} is sharp.
\subsection{Quantum optimal transport is less than classical optimal transport}\label{subsec:QOTCOT}
In this subsection we give a simple example on $\cH_2\otimes\cH_2$ that shows that
the quantum optimal transport is less than the classical optimal transport by a factor of seven.   That is, strict inequality holds in \eqref{OTgeQOT}.
A different example is given in \cite{CGP20}.
It is straightforward to show that $C^Q$ on $\cH_2\otimes\cH_2$, given by \eqref{defT}, is equal to
\begin{equation}\label{CQ2form}
\frac{1}{2}\begin{bmatrix}0&0&0&0\\0&1&-1&0\\0&-1&1&0\\0&0&0&0\end{bmatrix}.
\end{equation}

Let $\bs=(16/25,9/25)^\top, \bt=(9/25,16/25)^\top$.   The equality \eqref{QOTfdiagorm2a}
yields
\begin{equation*}
\rT_{C^Q}^Q(\diag(\bs),\diag(\bt))=\frac{1}{2}\Big(\sqrt{\frac{16}{25}}-\sqrt{\frac{9}{25}}\Big)^2=\frac{1}{50}.
\end{equation*}
Then the set of all coupling matrices is
\begin{equation*}
\Gamma^{cl}(\bs,\bt)=\bigg\{ \begin{bmatrix}x&16/25-x\\9/25-x&x\end{bmatrix}, \quad 0\le x\le 9/25 \bigg\}.
\end{equation*}
The classical cost matrix induced by 
the diagonal entries of  
 $C^Q$  is
\begin{equation}\label{ClascostC2}
C^{cl}=\frac{1}{2}\begin{bmatrix}0&1\\1&0\end{bmatrix}.
\end{equation}
Then, Formula \eqref{COTP} yields
\begin{equation*}
\rT_{C^{cl}}^{cl}(\bs,\bt)=\min_{X\in \Gamma^{cl}(\bs,\bt)} \tr C^{cl} X^\top=\frac{1}{2}\min_{x\in[0,9/25]}(1-2x)=\frac{7}{50}.
\end{equation*}

\subsection{Decoherence of the quantum cost matrix}
\label{sec:decoh}
Let us denote
\begin{equation}\label{defCqalpha}
C^{Q}_{\alpha}=\frac{1}{2}\begin{bmatrix}0&0&0&0\\0&1&-\alpha&0\\0&-\alpha&1&0\\0&0&0&0
\end{bmatrix}=\alpha C^Q+(1-\alpha)\diag(C^Q), \quad \alpha\in[0,1].
\end{equation}
Since $C^Q_1$ reduces to $C^Q$ defined in Eq. \eqref{CQ2form}
the number $\alpha$ can be called the coherence parameter.
The case $\alpha=0$ corresponds to the full decoherence, 
as the matrix $C^Q_0$ is diagonal, so that $\tr C^Q_0 \rho^{AB}=\tr C^Q_0\diag(\rho^{AB})$.
The entries of  $\diag(\rho^{AB})$ correspond to  the elements of $\Gamma^{cl}(\bx^A,\bx^B)$, where $\bx^A,\bx^B$ are the probability vectors induced by the diagonal entries of $\rho^A$ and $\rho^B$ respectively. 
 Note that the cost matrix $C^{cl}$ induced by $C^Q_0$ is given by \eqref{ClascostC2}.
Then the quantity
\begin{equation}\label{Talpha}
\rT^Q_\alpha(\rho^A,\rho^B)=\min_{\rho^{AB}\in \Gamma^{Q}(\rho^A,\rho^B)} \tr  C^Q_{\alpha}\rho^{AB}.
\end{equation}
describes a continuous interpolation between the quantum and classical optimal transports, related to the gradual decoherence of the quantum state $\vert \psi^{-} \rangle\langle \psi^-\vert$, which plays the role of the quantum cost matrix $C^Q$. 
We will show that, for two diagonal states, 
  $\rT^Q_\alpha$
  is a  decreasing  function of  the coherence parameter $\alpha$ on $[0,1]$ 
 and provide an exact expression for it.
\begin{lemma}\label{Pstalphaform} Let $\bs,\bt$ be two probability vectors in $\R^2$. 
Assume that $0\le \alpha\le 1$ and denote
\begin{align*}
& \rT^Q(\bs,\bt,\alpha) = \rT^Q_{\alpha} \big(\diag(\bs),\diag(\bt) \big),\\
& f_\alpha(X)=\frac{1}{2} \big( x_{12}+x_{21}-2\alpha\sqrt{x_{12}x_{21}} \, \big), \quad X=[x_{ij}]\in  \Gamma_{cl}(\bs,\bt).
\end{align*}
Then 
\begin{equation}\label{defPalphast}
\rT^Q(\bs,\bt,\alpha)=\min_{X\in \Gamma^{cl}(\bs,\bt)} f_\alpha(X).
\end{equation}
Let $\rT^Q(\bs,\bt,1)=\rT_{C^{Q}}^Q(\diag(\bs),\diag(\bt))$ be given by \eqref{QOTfdiagorm2a}.
Assume that $\rT^Q(\bs,\bt,1)=(\sqrt{s_i}-\sqrt{t_i})^2$.
If either $\min(s_i,t_i)=0$ or $\bs=\bt$ then 
\begin{eqnarray*}
\rT^Q(\bs,\bt,\alpha)= \rT^Q(\bs,\bt,1) \textrm{ for all }\alpha\in[0,1].  
\end{eqnarray*}
Otherwise $\rT^Q(\bs,\bt,\alpha)$ is a strictly decreasing function for $\alpha\in[0,1]$ given by the formula
\begin{equation}\label{Pstalphaform1}
\rT^Q(\bs,\bt,\alpha)=\frac{1}{2}\begin{cases}
\sqrt{1-\alpha^2}|s_i-t_i|, &  \textrm{ for } 0\le \alpha<\frac{2\sqrt{s_it_i}}{s_i+t_i},\\
2\rT^Q(\bs,\bt,1)+2(1-\alpha)\sqrt{s_it_i}, & \textrm{ for } \frac{2\sqrt{s_it_i}}{s_i+t_i}\le \alpha\le 1.
\end{cases}
\end{equation}
\end{lemma}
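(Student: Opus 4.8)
The plan has two halves: first reduce the semidefinite program \eqref{defkapCAB} for $C=C^Q_\alpha$ to the classical optimisation $\min_{X\in\Gamma^{cl}(\bs,\bt)}f_\alpha(X)$, thereby proving \eqref{defPalphast}, and then perform that one-variable minimisation explicitly. For the reduction, observe that $C^Q_\alpha$ has nonzero entries only at the four positions $((1,2)(1,2))$, $((2,1)(2,1))$, $((1,2)(2,1))$, $((2,1)(1,2))$, so for any $R=[r_{(i,p)(j,q)}]\in\Gamma^Q(\diag(\bs),\diag(\bt))$ one has $\tr C^Q_\alpha R=\tfrac12\big(r_{(1,2)(1,2)}+r_{(2,1)(2,1)}-2\alpha\,\Re r_{(1,2)(2,1)}\big)$. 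The $2\times2$ principal submatrix of $R$ on the rows and columns indexed by $(1,2)$ and $(2,1)$ is positive semidefinite, hence $\Re r_{(1,2)(2,1)}\le\sqrt{r_{(1,2)(1,2)}\,r_{(2,1)(2,1)}}$; since $\alpha\ge0$ and, by \eqref{ptfrom}, the matrix $X=[x_{ip}]$ with $x_{ip}:=r_{(i,p)(i,p)}$ lies in $\Gamma^{cl}(\bs,\bt)$, this gives $\tr C^Q_\alpha R\ge f_\alpha(X)\ge\min_{Y\in\Gamma^{cl}(\bs,\bt)}f_\alpha(Y)$. Conversely, for $X\in\Gamma^{cl}(\bs,\bt)$ the matrix $\tilde R$ of Lemma~\ref{diaglemobs}(a), whose only off-diagonal entries are $\tilde r_{(1,2)(2,1)}=\tilde r_{(2,1)(1,2)}=\sqrt{x_{12}x_{21}}$, lies in $\Gamma^Q(\diag(\bs),\diag(\bt))$ and satisfies $\tr C^Q_\alpha\tilde R=f_\alpha(X)$. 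This proves \eqref{defPalphast}, paralleling the proof of \eqref{clqotdiagdm0}. We also record that $C^Q_\alpha$ has eigenvalues $0,0,\tfrac{1-\alpha}2,\tfrac{1+\alpha}2\ge0$, so $C^Q_\alpha\ge0$ and $\rT^Q(\bs,\bt,\alpha)\ge0$.

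By relabelling the two coordinates (which preserves every quantity appearing in the statement) we may assume $s_2\ge t_1$, so that $i=1$ by Lemma~\ref{QOTdiagform2}. The two degenerate cases are then quick. If $\bs=\bt$, the diagonal coupling $X=\diag(\bs)$ has $x_{12}=x_{21}=0$, so $f_\alpha(X)=0$ and hence $\rT^Q(\bs,\bt,\alpha)=0=\rT^Q(\bs,\bt,1)$ for all $\alpha\in[0,1]$. If $\min(s_1,t_1)=0$, then either row~$1$ (when $s_1=0$) or column~$1$ (when $t_1=0$) of every $X\in\Gamma^{cl}(\bs,\bt)$ vanishes, so $\Gamma^{cl}(\bs,\bt)$ consists of a single matrix $X_0$ with $x_{12}x_{21}=0$; hence $f_\alpha(X_0)=\tfrac12(x_{12}+x_{21})=\tfrac12(\sqrt{s_1}-\sqrt{t_1})^2$ is independent of $\alpha$ and equals $\rT^Q(\bs,\bt,1)$.

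In the remaining case $\min(s_1,t_1)>0$ and $\bs\neq\bt$ (equivalently $s_1\neq t_1$), parametrise $\Gamma^{cl}(\bs,\bt)$ by $u=x_{11}$: then $x_{12}=s_1-u$, $x_{21}=t_1-u$, $x_{22}=s_2-t_1+u$, and $u$ ranges over $[0,\min(s_1,t_1)]$, the lower endpoint being $0$ because $s_2\ge t_1$. Put $g(u)=f_\alpha(X(u))=\tfrac12\big(s_1+t_1-2u-2\alpha\sqrt{(s_1-u)(t_1-u)}\big)$. Since $\sqrt{(s_1-u)(t_1-u)}$ is the geometric mean of two affine functions it is concave, so $g$ is convex. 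Because $(s_1-u)-(t_1-u)=s_1-t_1$ is $u$-independent, the stationarity equation $g'(u)=0$, that is $\alpha\big((s_1-u)+(t_1-u)\big)=2\sqrt{(s_1-u)(t_1-u)}$, squares to $(1-\alpha^2)(s_1+t_1-2u)^2=(s_1-t_1)^2$, giving the unique critical point $u^\ast=\tfrac12\big(s_1+t_1-|s_1-t_1|/\sqrt{1-\alpha^2}\big)$. One checks $u^\ast\le\min(s_1,t_1)$ always, while $u^\ast\ge0\iff\sqrt{1-\alpha^2}\,(s_1+t_1)\ge|s_1-t_1|\iff\alpha\le\tfrac{2\sqrt{s_1t_1}}{s_1+t_1}$. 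If $\alpha\le\tfrac{2\sqrt{s_1t_1}}{s_1+t_1}$ the minimum is $g(u^\ast)$, and using $2\alpha\sqrt{(s_1-u^\ast)(t_1-u^\ast)}=\alpha^2(s_1+t_1-2u^\ast)=\alpha^2|s_1-t_1|/\sqrt{1-\alpha^2}$ one obtains $g(u^\ast)=\tfrac12\sqrt{1-\alpha^2}\,|s_1-t_1|$. If $\alpha>\tfrac{2\sqrt{s_1t_1}}{s_1+t_1}$ then $u^\ast<0$, so the convex $g$ is increasing on $[0,\min(s_1,t_1)]$ and the minimum is $g(0)=\tfrac12\big(s_1+t_1-2\alpha\sqrt{s_1t_1}\big)=\tfrac12\big((\sqrt{s_1}-\sqrt{t_1})^2+2(1-\alpha)\sqrt{s_1t_1}\big)$. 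These are exactly the two branches of \eqref{Pstalphaform1} with $i=1$.

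Finally, on $\big[0,\tfrac{2\sqrt{s_1t_1}}{s_1+t_1}\big]$ the value $\tfrac12|s_1-t_1|\sqrt{1-\alpha^2}$ is strictly decreasing because $|s_1-t_1|>0$; on $\big[\tfrac{2\sqrt{s_1t_1}}{s_1+t_1},1\big]$ the value $\tfrac12(s_1+t_1-2\alpha\sqrt{s_1t_1})$ is strictly decreasing because $s_1t_1>0$; and the two branches agree at $\alpha=\tfrac{2\sqrt{s_1t_1}}{s_1+t_1}$ (both equal $\tfrac{(s_1-t_1)^2}{2(s_1+t_1)}$), so $\rT^Q(\bs,\bt,\alpha)$ is strictly decreasing on $[0,1]$. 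The argument is entirely elementary; the only spots requiring care are the first step (which is the $n=2$ instance of the technique behind \eqref{clqotdiagdm0} and Remark~\ref{remextCQE}) and the bookkeeping in the last step — keeping the feasible interval for $u$ correct under the normalisation $s_2\ge t_1$, noticing that the stationarity equation is $u$-free because $s_1-t_1$ is, and matching the regime threshold $\alpha=\tfrac{2\sqrt{s_1t_1}}{s_1+t_1}$ with the endpoint value $u^\ast=0$.
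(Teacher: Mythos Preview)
Your proof is correct and follows essentially the same route as the paper's. Both reduce \eqref{defPalphast} to the argument behind \eqref{clqotdiagdm0} (you spell out the $2\times2$ principal-submatrix bound $\Re r_{(1,2)(2,1)}\le\sqrt{x_{12}x_{21}}$ explicitly, the paper just cites that earlier proof), normalise to $s_2\ge t_1$ so that $i=1$, parametrise $\Gamma^{cl}(\bs,\bt)$ by $u=x_{11}\in[0,\min(s_1,t_1)]$, and minimise the convex one-variable function $g(u)=\tfrac12\big(s_1+t_1-2u-2\alpha\sqrt{(s_1-u)(t_1-u)}\big)$. The only cosmetic differences are that you solve the stationarity equation directly via the identity $4(s_1-u)(t_1-u)=(s_1+t_1-2u)^2-(s_1-t_1)^2$ to obtain $u^\ast$, whereas the paper states $x(\alpha)$ and verifies it; and you add the explicit check that the two branches of \eqref{Pstalphaform1} agree at $\alpha=\tfrac{2\sqrt{s_1t_1}}{s_1+t_1}$ (both equal $\tfrac{(s_1-t_1)^2}{2(s_1+t_1)}$), which the paper leaves implicit.
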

\begin{proof}  The equality \eqref{defPalphast} is deduced as the second equality in \eqref{clqotdiagdm0}.
Observe next that $C^{Q}_{\alpha}$ is positive semidefinite.  Hence $\rT^Q(\bs,\bt,\alpha)\ge 0$.  Therefore for $\bs=\bt$ we choose $X=\I\in\Gamma^{cl}(\bs,\bt)$ to deduce from  \eqref{defPalphast} that $\rT^Q(\bs,\bt,\alpha)=0$.  Assume that $\min(s_i,t_i)=0$.  Then $\Gamma^{cl}(\bs,\bt)=\{B\}$, where $B$ has one zero off-diagonal element, and $\rT^Q(\bs,\bt,\alpha)= \rT^Q(\bs,\bt,1)$.

Assume that $\min(s_i,t_i)>0$ and $\bs\ne \bt$.  Suppose first 
that $s_2\ge t_1$.    Then for $\alpha=1$ \eqref{QOTfdiagorm2b} yields that $\rT^Q(\bs,\bt,1)=\frac{1}{2}(\sqrt{s_1}-\sqrt{t_1})^2$, i.e., $i=1$.  Thus $\min(s_1,t_1)>0$.  The proof of part (c) of Theorem \ref{clqotdiagdm} implies that the minimum of $f_1(X)$ is achieved at the matrix $A=\begin{bmatrix}0&s_1\\t_1&s_2-t_1\end{bmatrix}$, which is an extreme point of $\Gamma^{cl}(\bs,\bt)$.  As $s_1,t_1>0$ it follows that $\Gamma^{cl}(\bp,\bt)$ is an interval, where
 the second extreme matrix is $C=\begin{bmatrix}\min(s_1,t_1)&s_1-\min(s_1,t_1)\\t_1-\min(s_1,t_1)&s_2-t_1+\min(s_1,t_1)\end{bmatrix}$.  Thus we can move from $A$ to the relative interior of $\Gamma_{cl}(\bs,\bt)$ by considering $A(x)=A+xB$,
 where $B=\begin{bmatrix}1&-1\\-1&1\end{bmatrix}$ and $x>0$. 
Denoting 
\begin{equation*}
g_{\alpha}(x)=f_{\alpha}(A(x))=\frac{1}{2} \big(s_1+t_1-2x -2\alpha\sqrt{s_1-x}\sqrt{t_1-x} \big),
\end{equation*}
one obtains
\begin{equation*}
g_\alpha'(0^+)=\frac{1}{2}\Big[-2+\alpha\Big(\frac{\sqrt{t_1}}{\sqrt{s_1}} +\frac{\sqrt{s_1}}{\sqrt{t_1}}\Big)\Big]. 
\end{equation*}
(Here $h'(x^-)$ and $h'(x^+)$ are the one-sided derivatives of a function $h$ as $x$ is approached from the left and right, respectively.)
Hence this derivative is nonnegative for 
$\alpha\ge \frac{2\sqrt{s_1t_1}}{s_1+t_1}$
and negative for $0\le \alpha<\frac{2\sqrt{s_1t_1}}{s_1+t_1}$.
As $g_{\alpha}(x)$ is convex on the interval $[0, \min(s_1,t_1)]$ we obtain that for $ \frac{2\sqrt{s_it_i}}{s_i+t_i}\le \alpha\le 1$ the minimum of $g_{\alpha}$ for $\frac{2\sqrt{s_1t_1}}{s_1+t_1}$
is achieved at $x=0$.  This proves the second part of \eqref{Pstalphaform1}.
So assume that $0\le \alpha<
\frac{2\sqrt{s_1t_1}}{s_1+t_1}$.  Clearly the minimum of $f_0(X)$ on $\Gamma^{cl}(\bs,\bt)$ is achieved at $A(\min(s_1,t_1))$. 
For $\alpha>0$ we have $g'_{\alpha}(\min(s_1,t_1)^-)=\infty$.

 Hence for $0< \alpha<\frac{2\sqrt{s_1t_1}}{s_1+t_1}$ the minimum $g_{\alpha}(x)$ is achieved at a critical point $x\in (0, \min(s_1,t_1))$.  This critical point is unique, as $g_{\alpha}(x)$ is strictly convex on $(0,\min(s_1,t_1))$ and satisfies the quadratic equation 
\begin{equation}\label{defxalpha}
4(s_1-x)(t_1-x)-\alpha^2(s_1+t_1-2x)^2=0, \quad  0\le \alpha<\frac{2\sqrt{s_1t_1}}{s_1+t_1}.
\end{equation}
We claim that the critical point is given by
\begin{equation*}
x(\alpha)=\frac{1}{2}\Big(s_1+t_1 -\frac{|s_1-t_1|}{\sqrt{1-\alpha^2}}\Big), \quad 
0\le \alpha<\frac{2\sqrt{s_1t_1}}{s_1+t_1}.
\end{equation*}
A direct computation shows that $x(\alpha)$ satisfies \eqref{defxalpha}.
Next observe that as $s_1\ne t_1$ the function $x(\alpha)$ is a strictly decreasing function on $[0,1)$.  Clearly
\begin{equation*}
x(0)=\min(s_1,t_1), \quad
x \Big(\frac{2\sqrt{s_it_i}}{s_i+t_i} \Big)=0.  
\end{equation*}
Hence $x(\alpha)\in (0,\min(s_1,t_1)]$.
Note that for $x(\alpha)$ we have equality
\begin{equation*}
2\sqrt{s_1-x(\alpha)}\sqrt{t_1-x(\alpha)}=\alpha\big(s_1+t_1 -2x(\alpha)\big).
\end{equation*}
This proves the first part of \eqref{Pstalphaform1} in the case for $i=1$.  Similar arguments show the first part of \eqref{Pstalphaform1} in the case for $i=2$.
Clearly for $s_i\ne t_i$ and $\min(s_i,t_i)>0$ the function $\rT^Q(\bs,\bt,\alpha)$ is strictly decreasing on the interval $[0,1]$.
\end{proof}
\section{A lower bound on  the quantum transport cost $\rT_{C^{Q}}^Q(\rho^A,\rho^B)$}\label{subsec:lowbdT}
The main result of this section is  Theorem \ref{lowbdT}.  Inequatity  \eqref{lowbdT0}
will show that  $\sqrt{\rT^Q_{C^Q}(\rho^A,\rho^B)}$ is a weak distance, and
\eqref{lowbdT0a} will allow us to obtain an explicit formula for $\rT_{C^Q}^Q$ for qubits.
\begin{theorem}\label{lowbdT}
Let $\rho^A,\rho^B\in\Omega_n$.
Then the following statements hold:
\begin{enumerate}[(a)]
\item 
 For any $n$ we have 
\begin{equation}\label{lowbdT0} 
\rT_{C^{Q}}^Q(\rho^A,\rho^B)\ge \frac{1}{2}\max_{U\in\rU(n),i\in[n]}\bigg(\sqrt{(U^\dagger\rho^A U)_{ii}}-\sqrt{(U^\dagger\rho^B U)_{ii}}\bigg)^2.
\end{equation}
\item For $n=2$ equality holds in \eqref{lowbdT0}: 
\begin{equation}\label{lowbdT0a} 
\rT_{C^{Q}}^Q(\rho^A,\rho^B)= \frac{1}{2}\max_{U\in\rU(2),i\in[2]}\bigg(\sqrt{(U^\dagger\rho^A U)_{ii}}-\sqrt{(U^\dagger\rho^B U)_{ii}}\bigg)^2.
\end{equation}
\end{enumerate}
\end{theorem}

We first start with the diagonal density matrices.
\subsection{A lower bound on $\rT^Q_{C^Q}\big(\diag(\bs),\diag(\bt)\big)$}\label{subsec:lbdiagdm}
\begin{lemma}\label{compcondlem}  Assume that $\bs,\bt\in\R^n$ are nonnegative probability vectors and $\rho^A = \diag(\bs)$, $\rho^B = \diag(\bt)$.  Then the dual supremum problem \eqref{dualQOT1} can be restricted to diagonal matrices $\sigma^A=-\diag(\ba),\sigma^B=-\diag(\bb)$ for $\ba, \bb\in\R^n$ which satisfy the condition that $F=C^{Q}+\diag(\ba)\otimes \mathbb{I}_n +\mathbb{I}_n\otimes \diag(\bb)$ is positive semidefinite.
 
Let $X^\star=[x_{ij}^\star]\in \Gamma^{cl}(\bs,\bt)$ be a solution to the second minimum problem in \eqref{clqotdiagdm0}, where $\bp^A=\bs,\bp^B=\bt$ and $m=n$.  
Assume that the maximum in the dual supremum problem \eqref{dualQOT1} is achieved by a matrix
of the form $F^\star=C^{Q}+\diag(\ba^\star)\otimes \mathbb{I}_n +\mathbb{I}_n\otimes \diag(\bb^\star)$, where $\rho^A=\diag(\bs),\rho^B=\diag(\bt),\sigma^A=-\diag(\ba),\sigma^B=-\diag(\bb)$.
Then the following equalities hold: 
\begin{equation}\label{compcond}
\begin{aligned}
& x^\star_{ii} (a_i^\star+b_i^\star)=0, \, \textrm{ for }i\in[n],\\
& x_{ij}^\star(a_{i}^\star +b_{j}^\star +1/2) + x_{ji}^\star(a_{j}^\star +b_{i}^\star+1/2) -\sqrt{x_{ij}^\star x_{ji}^\star}=0, \, \textrm{ for }1\le i< j\le n.
\end{aligned}
\end{equation}
Furthermore the following conditions are satisfied 
\begin{enumerate}[(a)]
\item For $i\ne j$ either $x^\star_{ij} x^\star_{ji}>0$ or $x^\star_{ij} =x^\star_{ji}=0$.
\item  Assume that $x^\star_{ii} x^\star_{jj}>0$.  Then $x^\star_{ij} =x^\star_{ji}$.
Let $X(t)$ be obtained from $X^\star$ by replacing the entries $x^\star_{ii}, x^\star_{ij},x^\star_{ji}, x^\star_{jj}$ with $x^\star_{ii}-t, x^\star_{ij}+t,x^\star_{ji}+t, x^\star_{jj}-t$.  Then $X(t)$ is also a solution to the second minimum problem in \eqref{clqotdiagdm0} for $t\in[-x_{ij}^\star, \min(x^\star_{ii},x^\star_{jj})]$.  Furthermore, $a_i^\star=a_j^\star=-b_i^\star=-b_j^\star$.
\item Suppose that $x_{ip}^\star,x_{iq}^\star,x_{jp}^\star,x_{jq}^\star$ are positive for $i\ne j, p\ne q$, where $i,j,p,q\in[n]$.  Then
\begin{equation}\label{compcond1}
\begin{aligned}
& \frac{\sqrt{x_{pi}^\star}}{\sqrt{x_{ip}^\star}}+\frac{\sqrt{x_{qj}^\star}}{\sqrt{x_{jq}^\star}}-\frac{\sqrt{x_{qi}^\star}}{\sqrt{x_{iq}^\star}}-\frac{\sqrt{x_{jp}^\star}}{\sqrt{x_{pj}^\star}}=0, && \textrm{ if } i\ne p, i\ne q, j\ne i, j\ne q,\\
& 1+\frac{\sqrt{x_{qj}^\star}}{\sqrt{x_{jq}^\star}}-\frac{\sqrt{x_{qi}^\star}}{\sqrt{x_{iq}^\star}}-\frac{\sqrt{x_{jp}^\star}}{\sqrt{x_{pj}^\star}}=0, && \textrm{ if } i=p,i\ne q, i\ne j, j\ne q.
\end{aligned}
\end{equation}
\end{enumerate}
Furthermore, there exists a minimizing matrix $X^\star$, satisfying the above conditions, such that it has at most one nonzero diagonal entry even if a maximizing $F^\star$ does not exist.
\end{lemma}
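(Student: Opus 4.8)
## Proof Proposal

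The plan is to prove Lemma \ref{compcondlem} in stages, starting with the reduction to diagonal dual variables, then establishing the complementarity equations \eqref{compcond}, and finally deriving the structural consequences (a)--(c) and the last claim about a minimizer with at most one nonzero diagonal entry.

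\textbf{Reduction to diagonal dual variables.} First I would exploit the symmetry group $\cD_n$ of diagonal $\pm 1$ matrices used in Lemma \ref{diagdecr}. Since $C^Q$ is invariant under conjugation by $D\otimes D$ for $D\in\cD_n$, and since $\rho^A=\diag(\bs),\rho^B=\diag(\bt)$ are likewise invariant, any feasible dual pair $(\sigma^A,\sigma^B)$ can be averaged over $\cD_n$: the map $(\sigma^A,\sigma^B)\mapsto 2^{-n}\sum_{D\in\cD_n}(D\sigma^A D, D\sigma^B D)$ preserves feasibility (by convexity of the PSD cone and $D$-invariance of $F$'s constituents) and the objective $\tr(\sigma^A\rho^A+\sigma^B\rho^B)$, while projecting $\sigma^A,\sigma^B$ onto diagonal matrices. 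Hence the supremum is unchanged if we restrict to $\sigma^A=-\diag(\ba),\sigma^B=-\diag(\bb)$.

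\textbf{Complementarity equations.} Here I would invoke \eqref{orthcond} of Theorem \ref{dualQOT}: if $F^\star$ achieves the dual supremum and $R^\star\in\Gamma^Q(\rho^A,\rho^B)$ achieves the primal minimum, then $\tr F^\star R^\star=0$, so $F^\star R^\star=0$ (both PSD). I would take $R^\star$ to be the specific optimal matrix $\tilde R$ built in Lemma \ref{diaglemobs}(a) from the optimal classical coupling $X^\star$ of the second minimum in \eqref{clqotdiagdm0}; recall $\tilde R$ is block-diagonal with $1\times 1$ blocks $[x^\star_{ii}]$ and $2\times 2$ blocks $X_{ij}$ of \eqref{defXij} on the coordinate pairs $\{(i,j),(j,i)\}$. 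With $F^\star=C^Q+\diag(\ba^\star)\otimes\mathbb{I}_n+\mathbb{I}_n\otimes\diag(\bb^\star)$, the matrix $F^\star$ is itself block-diagonal in exactly the same pattern, with $(i,i)$-block equal to $[a^\star_i+b^\star_i]$ and $\{(i,j),(j,i)\}$-block equal to $\begin{bmatrix} a^\star_i+b^\star_j+\tfrac12 & -\tfrac12\\ -\tfrac12 & a^\star_j+b^\star_i+\tfrac12\end{bmatrix}$. Forcing each block product to vanish gives exactly the two lines of \eqref{compcond}: the $1\times 1$ blocks give $x^\star_{ii}(a^\star_i+b^\star_i)=0$, and the $2\times 2$ block product $\begin{bmatrix}a^\star_i+b^\star_j+\frac12 & -\frac12\\-\frac12 & a^\star_j+b^\star_i+\frac12\end{bmatrix}\begin{bmatrix}x^\star_{ij}&\sqrt{x^\star_{ij}x^\star_{ji}}\\\sqrt{x^\star_{ij}x^\star_{ji}}&x^\star_{ji}\end{bmatrix}=0$ yields, reading off the $(1,1)$ entry, the second displayed equation in \eqref{compcond}. (I should be a little careful when $\min(m,n)$ leaves some index unpaired, but in the square case $m=n$ every off-diagonal index is paired.)

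\textbf{Structural consequences.} For (a): the $2\times 2$ block of $\tilde R$ has the form $\begin{bmatrix}x^\star_{ij}&\sqrt{x^\star_{ij}x^\star_{ji}}\\\sqrt{x^\star_{ij}x^\star_{ji}}&x^\star_{ji}\end{bmatrix}$, but $X^\star$ being optimal for the convex function $f$ of \eqref{deff(X)} forces a local optimality condition; perturbing $x^\star_{ij},x^\star_{ji}$ along a cycle shows that on the support one cannot have $x^\star_{ij}=0<x^\star_{ji}$, since then the gradient $\partial f/\partial x_{ji}$ would be $-\infty$ (the $\sqrt{\cdot}$ term), forcing improvement by increasing $x^\star_{ij}$ along a feasible direction. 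For (b): if $x^\star_{ii}x^\star_{jj}>0$, the $t$-perturbation $X(t)$ stays feasible (marginals preserved) for $t$ in the stated interval, and one checks $f(X(t))$ is differentiable at $t=0$ with derivative forced to vanish, which after simplification gives $x^\star_{ij}=x^\star_{ji}$; feeding this plus $x^\star_{ii},x^\star_{jj}>0$ into \eqref{compcond} (first line applied to $i$ and $j$) gives $a^\star_i=-b^\star_i$, $a^\star_j=-b^\star_j$, and then the second line of \eqref{compcond} with $x^\star_{ij}=x^\star_{ji}$ yields $a^\star_i+b^\star_j=a^\star_j+b^\star_i$, hence $a^\star_i=a^\star_j=-b^\star_i=-b^\star_j$. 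For (c): write the second equation of \eqref{compcond} for each of the four pairs among $\{i,j\}\times\{p,q\}$ (using that all four entries are positive), divide through by the appropriate $\sqrt{x^\star}$, and take the alternating sum; the $(a^\star+b^\star+\tfrac12)$-terms cancel telescopically, leaving \eqref{compcond1}, with the variant in the $i=p$ case coming from the fact that the diagonal relation $x^\star_{ii}(a^\star_i+b^\star_i)=0$ replaces one of the four off-diagonal relations.

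\textbf{Existence of a minimizer with at most one nonzero diagonal entry.} Finally, even if no maximizing $F^\star$ exists, I would argue directly on the primal side. Starting from any optimal $X^\star\in\Gamma^{cl}(\bs,\bt)$ for the second minimum of \eqref{clqotdiagdm0}, I would repeatedly apply the move of part (b): whenever two diagonal entries $x^\star_{ii},x^\star_{jj}$ are both positive, by (b) we have $x^\star_{ij}=x^\star_{ji}$ and $X(t)$ is optimal for all $t\in[-x^\star_{ij},\min(x^\star_{ii},x^\star_{jj})]$; choosing $t=\min(x^\star_{ii},x^\star_{jj})$ drives one of the two diagonal entries to zero without increasing the objective or changing the marginals. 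Iterating (there are at most $n$ diagonal entries, and each step strictly decreases the count of positive ones) produces a minimizer with at most one nonzero diagonal entry. The one subtlety is that part (b)'s conclusion $x^\star_{ij}=x^\star_{ji}$ and the optimality of $X(t)$ were derived \emph{assuming} a maximizing $F^\star$ exists, so for this last step I would re-derive the needed facts purely from first-order optimality of $X^\star$ for the convex function $f$ on the polytope $\Gamma^{cl}(\bs,\bt)$: the direction $E_{ii}\mapsto -1, E_{ij}\mapsto +1, E_{ji}\mapsto +1, E_{jj}\mapsto -1$ is feasible at $X^\star$ when $x^\star_{ii},x^\star_{jj}>0$, and computing the directional derivative of $f$ in $\pm$ this direction and setting it $\ge 0$ gives both $x^\star_{ij}=x^\star_{ji}$ and the flatness of $f$ along the segment.

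\textbf{Main obstacle.} The principal difficulty is the last paragraph: decoupling the structural argument from the existence of the dual maximizer, since \eqref{subsec:nonexisF} warns that $F^\star$ need not exist. This requires a self-contained convex-analysis argument (subgradients of $f$ at boundary points of $\Gamma^{cl}(\bs,\bt)$, where $\sqrt{x_{ij}x_{ji}}$ has infinite one-sided derivatives) rather than an appeal to complementary slackness. The bookkeeping in (c) — tracking which of the six possible index-coincidence patterns occurs and how the diagonal relations substitute for off-diagonal ones — is the other place where care is needed, though it is routine once the block structure of $F^\star$ is in hand.
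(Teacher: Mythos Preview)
Your overall architecture is right and matches the paper's (block-diagonal structure of $F^\star$ and $R^\star$, complementarity $\tr F^\star R^\star=0$), but two steps do not go through as written.

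\textbf{Part (b), the dual conclusion.} The claim that ``the second line of \eqref{compcond} with $x^\star_{ij}=x^\star_{ji}$ yields $a^\star_i+b^\star_j=a^\star_j+b^\star_i$'' is false. With $x^\star_{ij}=x^\star_{ji}$ that equation reads $x^\star_{ij}\bigl(a^\star_i+b^\star_j+a^\star_j+b^\star_i\bigr)=0$, and once you have already extracted $a^\star_i=-b^\star_i$, $a^\star_j=-b^\star_j$ from the first line, the bracket is identically zero --- no information. Using the full matrix equation $F^\star R^\star=0$ on the $\{(i,j),(j,i)\}$ block would give $a^\star_i+b^\star_j=0=a^\star_j+b^\star_i$ separately, but only when $x^\star_{ij}>0$; if $x^\star_{ij}=x^\star_{ji}=0$ the $R^\star$-block vanishes and complementarity says nothing. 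The paper's argument is different and covers both cases: normalize so that $a^\star_j=b^\star_j=0$; then $M^\star_{ij}$ has determinant $(\tfrac12+a^\star_i)(\tfrac12-a^\star_i)-\tfrac14=-(a^\star_i)^2$, and positive semidefiniteness of $F^\star$ forces $a^\star_i=0$.

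\textbf{Part (a).} Your primal ``infinite derivative'' argument needs a feasible direction in $\Gamma^{cl}(\bs,\bt)$ that increases $x^\star_{ij}$, and you do not produce one; if $s_i=0$ none exists, and indeed (a) can genuinely fail then (it is the assumption that $F^\star$ is attained which rules this out --- compare the paper's nonattainment example). The paper's proof is purely dual: setting $x^\star_{ij}=0$ in the second line of \eqref{compcond} gives $x^\star_{ji}(a^\star_j+b^\star_i+\tfrac12)=0$, and since $M^\star_{ij}\succeq 0$ with off-diagonal entry $-\tfrac12$ forces $a^\star_j+b^\star_i+\tfrac12>0$, one concludes $x^\star_{ji}=0$.

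\textbf{Part (c).} A smaller point: the paper derives \eqref{compcond1} by the primal first-order condition (differentiate $f$ along the four-cycle perturbation), not by telescoping dual relations. Your telescoping idea can be made to work, but it needs the individual identities $a^\star_i+b^\star_p+\tfrac12=\tfrac12\sqrt{x^\star_{pi}/x^\star_{ip}}$, which come from the \emph{matrix} equation $M^\star_{ip}R^\star_{ip}=0$ (equivalently, the trace equation plus $\det M^\star_{ip}=0$), not from \eqref{compcond} alone.

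Your $\cD_n$-averaging for the diagonal reduction and your direct primal iteration for the final claim are both valid and are alternative routes to the paper's arguments (the paper compares principal submatrices for the first, and uses approximation by strictly positive $\bs,\bt$ plus a limiting argument for the second).
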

\begin{proof} Let $\ba=(a_1,\ldots,a_n)^\top,\bb=(b_1,\ldots,b_n)^\top\in\R^n$, and consider the matrix $F=C^{Q}+\diag(\ba)\otimes\I_n +\mathbb{I}_n\otimes \diag(\bb)$ .  Then 
$F$ is a direct sum of $n$ blocks of size one of the form $a_i+b_i$ corresponding to the diagonal entries $((i,i),(i,i))$ and $n(n-1)/2$ blocks of size two corresponding to 
the entries $((i,j)(i,j)), ((i,j)(j,i))$, $((j,i)(i,j)), ((j,i),(j,i))$:
\begin{equation}\label{defMij}
M_{ij}=\begin{bmatrix}a_i+b_j+1/2& -1/2\\-1/2& a_j+b_i+1/2\end{bmatrix},  \quad i\in[n]
\end{equation}
Hence $F\ge 0$ if and only if the following inequalities hold:
\begin{align}\label{posdeFcond}
& a_i+b_i\ge 0, \quad \textrm{ for } i\in[n],\\
& a_i+b_j+1/2\ge 0, \, a_j+b_i+1/2\ge 0,\, (a_i+b_j+1/2)(a_j+b_i+1/2)\ge 1/4, \quad i\ne j. \notag
\end{align}
Assume that $G=C^{Q}-\sigma^A\otimes \mathbb{I}_n - \mathbb{I}_n\otimes \sigma^B\ge 0$.  Let $\ba,\bb\in\R^n$ be the vectors obtained from the diagonal entries of $-\sigma^A,-\sigma^B$ respectively.  Observe that the $n$ $1\times 1$  and $n(n-1)/2$   diagonal blocks of $F$ and $G$ discussed above are identical.  As $G$ is positive semidefinite 
then $F$ is positive semidefinite.  Clearly 
\begin{equation*}
\tr \sigma^A\diag(\bs)=-\tr\diag(\ba)\diag(\bs), \quad \tr \sigma^B\diag(\bt)=-\tr\diag(\bb)\diag(\bt).
\end{equation*}
Hence the dual supremum problem \eqref{dualQOT1} can be restricted to diagonal matrices $\sigma^A=-\diag(\ba),\sigma^B=-\diag(\bb)$ for $\ba, \bb\in\R^n$ that satisfy the condition that $F$ is positive semidefinite.

Recall that $X^\star$ induces a solution to the original  SDP
$R^\star\in \Gamma^Q(\diag(\bs),\diag(\bt))$ of the form
described in part (a) of Lemma \ref{diaglemobs}.   That is, the diagonal entries of $R^\star$ 
are $R^\star_{(i,j)(i,j)}=x^\star_{ij}$ with additional nonnegative entries:
$R^\star_{(i,j)(j,i)}=\sqrt{x^\star_{ij}x^\star_{ji}}$ for $i\ne j$. 
Clearly, $R^\star $ is a direct sum of $n$ submatrices of order $1$ and $n(n-1)/2$ of order $2$ as above.
The implication \eqref{orthcond} yields that $\tr F^\star R^\star=0$.
 
 As $F^\star$ is positive semidefinite we deduce
the conditions \eqref{posdeFcond} for $\ba^\star, \bb^\star$.
The blocks  $[x_{ii}^\star]$ and $[a_i^\star+b_i^\star]$  contribute $1$ to the ranks of $R^\star$ and $F^\star$ if and only if $x_{ii}^\star>0$ and $a_i^\star+b_i^\star>0$.
Each $2\times 2$ block of $R^\star$ is of the form $\begin{bmatrix} x_{ij}^\star&\sqrt{x^\star_{ij}x^\star_{ji}}\\\sqrt{x^\star_{ij}x^\star_{ji}}& x_{ji}^\star\end{bmatrix}$ for $1\le i <j\le n$.   Note that the rank of this block is either zero or one.
Each corresponding $2\times 2$ submatrix  of $F^\star$ is of the form 
$M_{ij}^\star$ given by \eqref{defMij}.
Thus $M_{ij}^\star$ is positive semidefinite with rank at least one.
This matrix has rank one if and only if  the following quadratic condition holds:
\begin{equation}\label{3quadcond}
(a_i^\star+ b_j^\star+1/2)(a_j^\star+b_i^\star+1/2)-1/4=0, \textrm{ for } 1\le i < j\le n.
\end{equation}

Recall the complementary condition
\begin{align*}
0 & =\tr R^\star F^\star \\
& =\sum_{i=1}^n x_{ii}^\star(a_{i}^\star+b_{i}^\star)+\!\!\sum_{1\le i <j\le n}\!\big(x_{ij}^\star(a_{i}^\star +b_{j}^\star +1/2) + x_{ji}^\star(a_{j}^\star +b_{i}^\star+1/2) -\sqrt{x_{ij}^\star x_{ji}^\star}\big).
\end{align*}
As all three $1\times 1$ and  $2\times 2 $ corresponding blocks of $R^\star$ and $F^\star$ are positive semidefinite, it follows that we have the complementary conditions \eqref{compcond}.

We now show the second part of the lemma.

\noindent  
(a) Assume that $x_{ij}^\star=0$ for $i\ne j$.  Then the second part of \eqref{compcond} yields $x_{ji}^\star(a_{j}^\star +b_{i}^\star+1/2)=0$.  The second condition in \eqref{defMij} yield that $x^\star_{ji}=0$.

\noindent
(b) Observe that $X(t)\in \Gamma^{cl}(\bs,\bt)$ for $t\in[-\min(x_{ij}^\star,x_{ji}^\star), \min(x^\star_{ii},x^\star_{jj})]$.  Assume first that $x_{ij}^\star x_{ji}^\star>0$.
Let $f(X)$ be defined as in \eqref{deff(X)}.  
As $t=0$ is an interior point of this interval, and $X(0)=X^\star$ we have the critical condition $\left.\frac d{dt}f(X(t))\right|_{t = 0}$, with $f$ given by \eqref{deff(X)}.  This yields the equality
$2 -\frac{\sqrt{x_{ij}^\star}}{\sqrt{x_{ji}^\star}} -\frac{\sqrt{x_{ji}^\star}}{\sqrt{x_{ij}^\star}}=0$.  Hence $x_{ij}^\star=x_{ji}^\star$ and thus $f(X(t))=f(X(0))$ for 
$t\in[-x_{ij}^\star, \min(x^\star_{ii},x^\star_{jj})]$.

Assume now that $x_{ij}^\star=x_{ji}^\star=0$.   
Then $f(X(t))=f(X(0))$ for $t\in [0,\min(x^\star_{ii},x^\star_{jj})]$.

It is left to show that $a_i^{\star}=a_j^\star=-b_i^\star=-b_j^\star$.  First observe that 
the first set of conditions of \eqref{compcond} yield that $a_i^\star+b_i^\star=a_j^\star+b_j^\star=0$.  By replacing $\ba^\star, \bb^\star$ by
$\ba^\star-c\1, \bb^\star+c\1$ we do not change $F^\star$.  Hence we can assume that $a_j^\star=b_j^\star=0$.  Set $b_i^\star=-a_i^\star$.  Then the assumption that the diagonal entries of $M_{ij}^\star$ are nonnegative yields that $|a_i^\star|\le 1/2$.  Use the assumption that $\det M_{ij}^\star\ge 0$ to deduce that $0=a_i^\star=-b_i^\star$.  

\noindent
(c)  Let $X(t)$ be the matrix obtained from $X^\star$ by replacing  $x_{ip}^\star,x_{iq}^\star,x_{jp}^\star,x_{jq}^\star$ with  $x_{ip}^\star-t,x_{iq}^\star+t,x_{jp}^\star +t,x_{jq}^\star-t$. Then for $t\in [-\min(x_{iq}^\star,x_{jp}^\star), \min(x^\star_{ip},x^\star_{jq})]$ we have $X(t)\in\Gamma^{cl}(\bs,\bt)$.  As $t=0$ is an interior point of this interval we deduce that $\left.\frac d{dt}f(X(t))\right|_{t = 0}$.

Suppose first that $i\ne p, i\ne q, j\ne i, j\ne q$. Then Eq. \eqref{deff(X)} yields
\begin{multline*}
\!\!\!\!\! f(X(t))=-\Big(\sqrt{(x^\star_{ip}-t)x^\star_{pi}}+\sqrt{(x^\star_{iq}+t)x^\star_{qi}}+\sqrt{(x^\star_{jp}+t)x^\star_{pj}}+\sqrt{(x^\star_{jq}-t)x^\star_{qj}} \,\Big) \\
\qquad + C,
\end{multline*}
where $C$ is a term that does not depend on $t$.
The condition $\left.\frac d{dt}f(X(t))\right|_{t = 0}$ yields the first condition \eqref{compcond1}.

Assume now that $i=p$ and $i\ne q, j\ne i, j\ne q$.  
Then we have  
\begin{eqnarray*}
f(X(t))=t/2-\big(\sqrt{(x_{iq}^\star+t)x^\star_{qi}}+\sqrt{(x^\star_{jp}+t)x^\star_{pj}}+\sqrt{(x^\star_{jq}-t)x^\star_{qj}} \,\,\big)+ C,
\end{eqnarray*}
where $C$ does not depend on $t$.
Now, the condition $\left.\frac d{dt}f(X(t))\right|_{t = 0}$ yields the second condition in \eqref{compcond1}.

Finally, we need to prove the existence of an $X^\star$ with at most one nonzero entry that satisfies the conditions of the lemma.
Assume first that $\bs,\bt>\0$.  Then Theorem~\ref{dualQOT} yields that there exists a maximizing matrix $F^\star$ to the dual supremum problem.  As we showed above we can assume that $F^\star=C^{Q}+\diag(\ba^\star)\otimes \mathbb{I}_n +\mathbb{I}_n\otimes \diag(\bb^\star)$.  Let $X^\star$ be a minimizing matrix with 
at most $k$ zeros on the diagonal.
Assume to the contrary that  $x^\star_{ii} x^\star_{jj}>0$ for $1\le i < j\le n$.
Part (b) yields that for $t\in[-\min(x_{ij}^\star,x_{ji}^\star), \min(x^\star_{ii},x^\star_{jj})]$ the matrix  $X(t)$ minimizes $f$.  Choose $t^\star=\min(x^\star_{ii},x^\star_{jj})$.  Then $X(t^\star)$ is a minimizing matrix with at least $k+1$ 
zeros on the diagonal,
which contradicts our choice of $X^\star$.

Assume now that $\bs,\bt$ are nonnegative. Let $\bs_k,\bt_k>\0,k\in\N$ be two sequences that converge to $\bs,\bt$ respectively.  Let $X_k^\star$ be a minimizing matrix of $f(X)$ corresponding to $\bs_k,\bt_k$ that has at most one nonzero diagonal element.  Clearly, there exists a subsequence $X^\star_{k_l}$ which has either all zero diagonal elements or exactly one positive diagonal element in a fixed diagonal entry.  Choose a subsequence $[\tilde x^\star_{ij,l}], l\in\N$ of this subsequence which converges to $X^\star$.  Clearly $X^\star$ is a minimizing matrix of $f(X)$ corresponding to $\bs,\bt$.  If $x^\star_{ij}>0$ then $\tilde x^\star_{ij,l}>0$ for $l\gg 1$.  Hence $X^\star$ satisfies the conditions of the lemma.
\end{proof}
\begin{theorem}\label{lowbdTdiagn} Assume that $\bs=(s_1,\ldots,s_n)^\top$, $\bt=(t_1,\ldots,t_n)^\top\in\R^n_+$ are probability vectors and $U\in\rU(n)$.  Then
\begin{equation}\label{eqlowbdiagn}
\rT_{C^{Q}}^Q \big(U^\dagger \diag(\bs) U, U^\dagger\diag(\bt) U \big)\ge\frac{1}{2} \max_{i\in[n]} \, \big( \sqrt{s_i}-\sqrt{t_i} \big)^2
\end{equation}
Equality holds
if and only there exists $i\in[n]$ such that
\begin{equation}\label{eqlowbdiagn1}
\begin{aligned}
\textrm{ either } s_j\ge t_j \textrm{ and } t_it_j\ge s_i s_j \textrm{ for all } j\ne i, \\
\textrm{ or } t_j\ge s_j \textrm{ and } s_is_j\ge t_i t_j \textrm{ for all } j\ne i.
\end{aligned}
\end{equation}
In particular,  for $n=2$ equality in \eqref{eqlowbdiagn} holds:
\begin{equation}\label{eqlowbdiagn2}
\begin{aligned}
\rT_{C^{Q}}^Q \big(U^\dagger \diag((s_1,s_2)^\top) U, U^\dagger\diag((t_1,t_2)^\top) U \big)=\\
\frac{1}{2} \max((\sqrt{s_1}-\sqrt{t_1})^2,(\sqrt{s_2}-\sqrt{t_2})^2).
\end{aligned}
\end{equation}
\end{theorem}
\begin{proof}  Without loss of generality we can assume that  $U=\mathbb{I}_n$.  Suppose first that $\bs,\bt>\0$.  Lemma \ref{compcondlem}  yields that $\rT_{C^{Q}}^Q$ is the maximum of the dual problem where $F=C^{Q}+\diag(\ba)\otimes \mathbb{I}_n + \mathbb{I}_n\otimes \diag(\bb)$ is positive semidefinite.
Choose $i\in[n]$.  Assume that the coordinates of $\ba,\bb$ are given as follows: 
\begin{equation}\label{abichoice}
a_{i}=\frac{1}{2}\Big(\frac{\sqrt{t_i}}{\sqrt{s_i}}-1\Big), \, b_{i}=\frac{1}{2}\Big(\frac{\sqrt{s_i}}{\sqrt{t_i}}-1\Big), \quad a_{j}=b_{j}=0 \textrm{ for } j\ne i.
\end{equation}
Clearly
\begin{align*}
& a_{i}+b_{i}=\frac{(\sqrt{s_i}-\sqrt{t_i})^2}{2\sqrt{s_it_i}}\ge 0,\quad a_{j}+b_{j}=0, && \textrm{ for } j\ne i,\\
& 1/2+a_{i}>0, \,1/2+b_{i}>0, \quad 1/2+a_{j}=1/2+b_{j}=1/2, && \textrm{ for } j\ne i,\\
& (a_{i}+b_{j}+1/2)(a_{j}+b_{i}+1/2)=(a_{i}+1/2)(b_{i}+1/2)=1/4, && \textrm{ for } j\ne i,\\
& (a_{j}+b_{p}+1/2)(a_{p}+b_{j}+1/2)=1/2\times 1/2=1/4, && \textrm{ for }p\ne j\in [n]\setminus\{i\}.
\end{align*}
Thus $F\ge 0$.  Therefore 
\begin{align*}
\rT_{C^{Q}}^Q(\diag(\bs),\diag(\bt)) &\ge  -\tr\big(\diag(\ba)\diag(\bs)+\diag(\bb)\diag(\bt)\big) \\ 
& = \frac{1}{2}\Big[ \Big(1-\frac{\sqrt{t_i}}{\sqrt{s_i}} \Big)s_i + \Big(1-\frac{\sqrt{s_i}}{\sqrt{t_i}}\Big) t_i\Big]=\frac{1}{2} \big(\sqrt{s_i}-\sqrt{t_i} \big)^2. 
\end{align*}
As we let $i\in[n]$ we deduce the inequality \eqref{eqlowbdiagn}.
Since $\rT_{C^{Q}}^Q(\diag(\bs),\diag(\bt))$ is continuous on $\Pi_n\times \Pi_n$ we deduce the inequality  \eqref{eqlowbdiagn} for all $(\bs,\bt)\in \Pi_n\times \Pi_n$.

 We now discuss the equality case in \eqref{eqlowbdiagn}.
Clearly $\max_{i\in[n]}(\sqrt{s_i}-\sqrt{t_i})^2=0$ if and only if $\bs=\bt$, in which case $\rT_{C^{Q}}^Q(\diag(\bs),\diag(\bt))=0$.  Assume that  $\rT_{C^{Q}}^Q(\diag(\bs),\diag(\bt))> 0$.  Suppose first that equality holds in \eqref{eqlowbdiagn}. Then there exists an index $i\in[n]$ such that \mbox{$\rT_{C^Q}(\diag(\bs),\diag(\bt))=\frac{1}{2}(\sqrt{s_i}-\sqrt{t_i})^2>0$}.  By renaming indices and interchanging $\bs$ and $\bt$ if needed we can assume that $t_1>s_1$ and  $\rT_{C^{Q}}^Q(\diag(\bs),\diag(\bt))=\frac{1}{2}(\sqrt{s_1}-\sqrt{t_1})^2$.  
Let $X=X^\star$ be a solution to the second minimum problem in \eqref{clqotdiagdm0}.    
Recall that $f(X^\star)=\frac{1}{2}(\sqrt{t_1}-\sqrt{s_1})^2$.
Suppose first that $s_1=0$.  Then the first row of each $X\in \Gamma^{cl}(\bs,\bt)$ is zero.  Hence
\begin{eqnarray*}
2f(X)=\sum_{j=2}^n x_{j1}+\sum_{2\le j<k\le n}(\sqrt{x_{jk}}-\sqrt{x_{kj}})^2=t_1 +
\sum_{2\le j <k\le n}(\sqrt{x_{jk}}-\sqrt{x_{jk}})^2,
\end{eqnarray*}
for $ X\in\Gamma^{cl}(\bs,\bt)$.  As $f(X^\star)=t_1$ we deduce that the submatrix $Y=[x_{jk}^\star]_{j,k\ge 2}$ is a nonnegative symmetric matrix.  Thus for $j\ge 2$
\begin{eqnarray*}
s_j=\sum_{k=1}^n x_{jk}^\star=x_{j1}^\star +\sum_{k=2}^n x^\star_{jk}=x_{j1}^\star +\sum_{k=2}^n x^\star_{kj}=x_{j1}^\star+t_j.
\end{eqnarray*}
Therefore $s_j\ge t_j$ and $t_1t_j\ge 0=s_1s_j$ for $j\ge 2$.  Hence the conditions \eqref{eqlowbdiagn1} hold.

Assume now that $s_1>0$.
Let $F$ be defined as above for $i=1$.  Our assumption is that $F=F^\star$ is a solution to the maximum dual problem.   Lemma \ref{compcondlem} yields the equalities
\eqref{compcond}.  Hence $x_{11}^\star=0$.  Next consider the second part of the equalities \eqref{compcond} for $i=1$ and $j\ge 2$:
\begin{equation*}
\frac{\sqrt{t_1}}{\sqrt{s_1}}x_{1j}^\star=\frac{\sqrt{s_1}}{\sqrt{t_1}}x_{j1}^\star=c_j\ge 0 \; \textrm{ for } j\ge 2.
\end{equation*}
Observe next that
\begin{eqnarray*}
s_1=\sum_{j=2}^n x_{1j}^\star =\frac{\sqrt{s_1}}{\sqrt{t_1}}\sum_{j=2}^n c_j \quad \Rightarrow \quad \sum_{j=2}^n c_j =\sqrt{s_1t_1}.
\end{eqnarray*}
Therefore
\begin{eqnarray*}
\sum_{j=2}^n \big( x_{1j}^\star+x_{j1}^\star -2\sqrt{x_{1j}^\star x_{j1}^\star} \big) =s_1+t_1-2\sum_{j=2}^n c_j=s_1+t_1-2\sqrt{s_1 t_1}=(\sqrt{s_1}-\sqrt{t_1})^2.
\end{eqnarray*}
Hence
\begin{equation*}
2f(X^\star)=(\sqrt{s_1}-\sqrt{t_1})^2+ \sum_{2\le j<k\le n}(\sqrt{x_{jk}}-\sqrt{x_{kj}})^2=(\sqrt{s_1}-\sqrt{t_1})^2.
\end{equation*}
Therefore the submatrix $Y=[x_{jk}^\star]_{j,k\ge 2}$ is a nonnegative symmetric matrix.  Observe next that 
\begin{align*}
& s_j=x^\star_{j1}+\sum_{k=2}^n x^\star_{jk}=\frac{\sqrt{t_1}}{\sqrt{s_1}} c_j+\sum_{k=2}^n x^\star_{jk},\\
& t_j\,=x^\star_{1j}+\sum_{k=2}^n x^\star_{kj}=\frac{\sqrt{s_1}}{\sqrt{t_1}} c_j+\sum_{k=2}^n x^\star_{kj}, \; \textrm{ for }j\ge 2.
\end{align*}
As $Y$ is symmetric we obtain that 
\begin{eqnarray*}
s_j-t_j=\frac{(t_1-s_1)c_j}{\sqrt{s_1 t_1}}\ge 0 \quad \Rightarrow \quad  c_j=\frac{(s_j-t_j)\sqrt{s_1t_1}}{t_1-s_1}.
\end{eqnarray*}
As 
\begin{eqnarray*}
s_j\ge x^\star_{j1}=\frac{\sqrt{t_1}}{\sqrt{s_1}}c_j=\frac{(s_j-t_j)t_1}{t_1-s_1}
\end{eqnarray*}
we deduce that $t_1t_j\ge s_1 s_j$.  Hence  conditions \eqref{eqlowbdiagn1} hold.

Assume now that the conditions \eqref{eqlowbdiagn1} hold.  To be specific we assume that $t_1\ge s_1$ and $s_j\ge t_j$ for $j\ge 2$.  If $s_j=t_j$ for $j\ge 2$ then $\bs=\bt$ and equality holds in \eqref{eqlowbdiagn}.  Hence we assume that $t_1>s_1$.  
Define $X=[x_{ij}]$ as follows:
\begin{eqnarray*}
x_{11}=0, \, x_{1j}=\frac{s_1(s_j-t_j)}{t_1-s_1},\, x_{j1}=\frac{t_1(s_j-t_j)}{t_1-s_1},\, x_{jk}=\frac{t_1t_j-s_1s_j}{t_1-s_1}\delta_{jk} \; \textrm{ for } j,k\ge 2.
\end{eqnarray*}
Then $X\in \Gamma^{cl}(\bs,\bt)$.  Furthermore $2f(X)=s_1+t_1-2\sqrt{s_1t_1}=(\sqrt{s_1}-\sqrt{t_1})^2$.  Therefore  $2\rT_{C^{Q}}^Q(\bs,\bt)\le (\sqrt{s_1}-\sqrt{t_1})^2$. On the other hand, inequality \eqref{eqlowbdiagn} yields that $2\rT_{C^{Q}}^Q(\diag(\bs),\diag(\bt))\ge  (\sqrt{s_1}-\sqrt{t_1})^2$.  Consequently, we conclude that \mbox{$\rT_{C^{Q}}^Q(\diag(\bs),\diag(\bt))= \frac{1}{2}(\sqrt{s_1}-\sqrt{t_1})^2$.} 

Assume that $n=2$.  Then $s_1+s_2=t_1+t_2=1$.  Assume for simplicity of exposition that $t_1t_2\ge s_1s_2$.   Then $s_i=\min(s_1,s_2)\le \min(t_1,t_2)$.   Hence $s_j=1-s_i\ge t_j=1-t_i$ and the first condition of \eqref{eqlowbdiagn1} holds.  Hence,  \eqref{eqlowbdiagn2} is holds.
\end{proof}
\subsection{Proof of Theorem \ref{lowbdT}}\label{subsec:prfthm5.1}
(a)  Recall 
 inequality \eqref{diagdecr1} and the fact that $\rT_{C^{Q}}^Q$ is unitarily invariant, $\rT_{C^{Q}}^Q( \rho^A, \rho^B)=\rT_{C^{Q}}^Q(U^\dagger \rho^A U, U^\dagger\rho^B U)$, for $U\in U(n)$.  
Use the inequality \eqref{eqlowbdiagn} with $U=\mathbb{I}_n$  to deduce
\begin{align*}
\rT_{C^{Q}}^Q( \rho^A, \rho^B) &= \rT_{C^{Q}}^Q(U^\dagger \rho^A U, U^\dagger\rho^B U)\ge \rT_{C^{Q}}^Q(\diag(U^\dagger \rho^A U), \diag(U^\dagger\rho^B U))\\
& \ge\frac{1}{2}\max_{i\in[n]}\bigg(\sqrt{(U^\dagger\rho^A U)_{ii}}-\sqrt{(U^\dagger\rho^B U)_{ii}}\bigg)^2.
\end{align*}
Take the maximum over $U\in \rU(n)$ to deduce \eqref{lowbdT0}.

\noindent
(b)
First observe that $F$ that is given in \eqref{dualQOT}  is of the form:
\begin{equation}\label{sigmaFform}
\begin{aligned}
&  \sigma^A=-\begin{bmatrix}a&b\\\bar b&c\end{bmatrix}, \quad \sigma^B=-\begin{bmatrix}e&f\\\bar f&g\end{bmatrix}, \qquad a,c,e,g\in\R, \; b,f\in\C,\\
& F \;\, = \; \begin{bmatrix}a+e&f&b&0\\\bar f&a+g+1/2&-1/2&b\\\bar b&-1/2&c+e+1/2&f\\0&\bar b&\bar f&c+g
\end{bmatrix}. 
\end{aligned}
\end{equation}

We now assume that $\rho^A, \rho^B$ are positive definite and non-isospectral.  Proposition \ref{exrank1} yields that $\Gamma^Q(\rho^A,\rho^B)$ does not contain a matrix of rank one.    Let $\rho^{AB}$ and $F$ be the matrices for which  \eqref{orthcond} holds.  Our assumptions yield that $\rank \rho^{AB}\ge 2$.  Proposition \ref{dualQOT} yields that
$\tr F\rho^{AB}=0$.  Hence $\rank F\le 4-2=2$.  Note that
the second and the third columns of $F$ are nonzero.  Hence $\rank F\ge 1$.

For $U\in\rU(2)$ we have the equalities
\begin{align*}
&& \rT_{C^Q}^Q(\rho^A,\rho^B) & = \rT_{C^Q}^Q(U^\dagger\rho^A U,U^\dagger\rho^B U)=\tr \big(\sigma^A \rho^B+ \sigma^B\rho^B\big) \\
&&& = \tr \big((U^\dagger\sigma^A U)( U^\dagger\rho^B U)+(U^\dagger\sigma^B U)(U^\dagger\rho^A U)\big)
\end{align*}
\begin{align*}
\underline{ F} = (U^\dagger\otimes U^\dagger)F(U\otimes U^\dagger)=C^{Q}-(U^\dagger\sigma^A U)\otimes \mathbb{I}_2 -\mathbb{I}_2\otimes (U^\dagger\sigma^B U)\ge 0.
\end{align*}
We now choose $V\in\rU(2)$ so that $V^\dagger\sigma^A V$ is a diagonal matrix. Let
\begin{equation*}\label{sigmaFform1}
\begin{aligned}
& \underline{\rho}^A=V^\dagger\rho^A V, \quad \underline{\rho}^B=V^\dagger\rho^B V, \\
& \underline{\sigma}^A=V^\dagger\sigma^A V=-\begin{bmatrix} \underline{a}&0\\0& \underline{c}\end{bmatrix}, \quad \underline{\sigma}^B=V^\dagger\sigma^B V=-\begin{bmatrix} \underline{e}& \underline {f}\\ \bar{\underline{f}} &\underline{g}\end{bmatrix}, \quad \underline{a},\underline {c},\underline {e},\underline {g}\in\R, \underline {f}\in\C,\\
& \underline{F}=\begin{bmatrix}\underline{a}+\underline{e}&\underline{f}&0&0\\\bar {\underline{f}}&\underline{a}+\underline{g}+1/2&-1/2&0\\0&-1/2&\underline{c}+\underline{e}
+1/2&\underline{f}\\0&\bar 0&\bar {\underline{f}}&\underline{c}+\underline{g}
\end{bmatrix}.
\end{aligned}
\end{equation*}
Clearly $\rank \underline{F}=\rank F\le 2$.
We claim that $\rank \underline{F}=2$.  Assume to the contrary that $\rank \underline{F}=1$.  As the third column is nonzero we deduce that the fourth column is a multiple of the third column.  Hence the fourth column is zero.  That is, $\underline{f}=\underline{c}+\underline{g}=0$.  Similarly $\underline{a}+\underline{e}=0$.   Next observe that we can replace $\sigma^A,\sigma^B$ by $\sigma^A-\underline{a} \mathbb{I}_2, \sigma^B+\underline{a} \mathbb{I}_2$ without affecting the supremum in~\eqref{dualQOT1}.  This is equivalent to the assumption that $\underline{a}=0$.  Hence $\underline{e}=0$ and 
$\underline{g}=-\underline{c}$.
As $\underline{F}$ is Hermitian and $\rank\underline F = 1$ we have the condition
\begin{eqnarray*}
0=(-\underline{c}+1/2)(\underline{c}+1/2)-1/4=-\underline{c}^2.
\end{eqnarray*}
Hence $\underline{c}=-\underline{g}=0$.  Thus 
we can assume that $\sigma^A=\sigma^B=0$.  Equality \eqref{maxsig12} yields that $\rT_{C^Q}^Q(\rho^A,\rho^B)=0$, which implies that $\rho^A=\rho^B$.  This contradicts our assumption that $\rho^A$ and $\rho^B$ are  not similar.  Hence $\rank \underline{F}=\rank F=2$.

We claim that either $\underline{x}=\underline{a}+\underline{e}$ or $\underline{z}=\underline{c}+\underline{g}$ are zero.  Assume to the contrary that $\underline{x},\underline{z}>0$.  (Recall that $\underline{F}> 0$.)  Let $\bc_1,\bc_2,\bc_3,\bc_4$ be the four columns of $\underline{F}$.  Clearly $\bc_1,\bc_4$ are linearly independent.  Hence $\bc_2=u\bc_1+v\bc_4$.  As the fourth coordinate of $\bc_2$ is zero we deduce that $v=0$.  Hence $\bc_2=u\bc_1$. This is impossible since  the third coordinate of $\bc_1$ is $0$ and the third coordinate of $\bc_2$ is $-1/2$.    Hence  either $\underline{x}=\underline{a}+\underline{e}$ or $\underline{z}=\underline{c}+\underline{g}$ are zero.
Suppose that $\underline{x}=0$.  As $\underline{F}$ is positive semidefinite  we deduce that the first row and column of  $\underline{F}$ is zero.  Hence $f=0$.
Similarly, if $\underline{z}=0$ we deduce that $f=0$.  Thus $\underline{\sigma}^A$ and $\underline{\sigma}^B$ are diagonal matrices.  
Therefore
\begin{eqnarray*}
\rT_{C^Q}^Q(\underline{\rho}^A,\underline{\rho}^B)=\tr\big(\underline{\sigma}^A\underline{\rho}^A+
\underline{\sigma}^B\underline{\rho}^B\big)=\tr\big(\underline{\sigma}^A
\diag({\underline{\rho}}^A)+
\underline{\sigma}^B\diag({\underline{\rho}}^B)\big).
\end{eqnarray*}
As $\underline{F}\ge 0$, the maximum dual characterization yields 
\begin{align*}
\tr\big(\underline{\sigma}^A
\diag({\underline{\rho}}^A)+
\underline{\sigma}^B\diag({\underline{\rho}}^B)\big) \le \rT_{C^{Q}}^Q(\diag({\underline{\rho}}^A),\diag({\underline{\rho}}^B)).
\end{align*}
Hence $\rT_{C^Q}^Q(\underline{\rho}^A,\underline{\rho}^B)\le \rT_{C^Q}^Q(\diag({\underline{\rho}}^A),\diag({\underline{\rho}}^B))$.  Compare that with \eqref{diagdecr1} to deduce the equalities  
\begin{equation*}\label{4QOTeq}
\rT_{C^Q}^Q(\rho^A,\rho^B)=\rT_{C^Q}^Q(\underline{\rho}^A,\underline{\rho}^B)= \rT_{C^Q}^Q(\diag({\underline{\rho}}^A),\diag({\underline{\rho}}^B)).
\end{equation*}
Use the last part of Theorem \ref{lowbdTdiagn} to deduce
\begin{align*}
\rT_{C^Q}^Q( \underline{\rho}^A,\underline{\rho}^B) &= \rT_{C^Q}^Q \big(\diag(\underline{\rho})^A,\diag({\underline{\rho}}^B) \big) \\
& =\frac{1}{2}\max\Big[ \big(\sqrt{\underline{\rho}^A_{11}}-
\sqrt{\underline{\rho}^B_{11}}\, \big)^2, \, \big(\sqrt{\underline{\rho}^A_{22}}-
\sqrt{\underline{\rho}^B_{22}}\, \big)^2\Big].
\end{align*}
The inequality \eqref{lowbdT0} yields \eqref{lowbdT0a}
for $\rho^A$ and $\rho^B$ positive definite and non-isospectral.  Clearly every pair $\rho^A,\rho^B\in\Omega_2$ can be approximated by $\hat\rho^A,\hat\rho^B\in\Omega_2$ which are positive definite and non-isospectral.  Use the continuity of $T_{C^Q}^Q(\rho^A,\rho^B)$ on $\Omega_2\times\Omega_2$ (Proposition \ref{honconv})  to deduce \eqref{lowbdT0a} in the general case.  
\section{The induced quantum Wasserstein-2 distance}\label{sec:metrics}
 The main theorem of this section is:

 \begin{theorem}\label{Wa2metthm}  Assume that $C\in\rS(\cH_n\otimes\cH_n)$ is positive semidefinite 
 and vanishes exactly on $\cH_S$, 
 the symmetric subspace of $\cH_n\otimes\cH_n$. 
  Then $\sqrt{\rT^Q_{C}}$ is a weak distance, and $\windd$ defined by \eqref{defQOTmet} is the maximum distance majorized by $\sqrt{\rT^Q_{C}}$. 
  For $n=2$ the equality $\windQ(\rho^A,\rho^B)=\sqrt{\rT^Q_{C^Q}(\rho^A,\rho^B)}$ holds.
 \end{theorem} 
 
Let $X$ be a set of points.
Assume that $D:X\times X\to\R_+(=[0,\infty))$.  Then $D(\cdot,\cdot)$ is called a distance on $X$ if it satisfies the following three properties:
 \begin{enumerate}[(a)]
 \item Symmetry:  $D(x,y)=D(y,x)$;
 \item Positivity: $D(x,y)\ge 0$, and equality holds if and only if $x=y$.
 \item Triangle inequality: $D(x,y)+D(y,z)\ge D(x,z)$.
 \end{enumerate}
 
 The function
  $D(\cdot,\cdot)$ is called a semidistance if it satisfies the above first two conditions.  A semidistance is called a weak distance if there exists a distance $D'(\cdot,\cdot)$ such that 
\begin{equation}\label{DmajD'}
D'(x,y)\le D(x,y) \textrm{ for all }x,y\in X.
\end{equation}
\begin{proposition}\label{inddist} 
Assume that $D$ is a weak distance on the space $X$ satisfying \eqref{DmajD'}, where $D'$ is a distance on $X$.  For each positive integer $N$ define the following function:
\begin{equation*}
D_N(x,y)=\inf_{\substack{z_1,\ldots,z_N\in X,\\z_0=x, \; z_{N+1}=y}} \; \sum_{i=0}^N D(z_i,z_{i+1}) \textrm{ for } x,y\in X.
\end{equation*}
Then
\begin{enumerate}[(a)]
\item For each $N$ the function $D_N(\cdot,\cdot)$ is a weak distance 
that satisfies the inequality \eqref{DmajD'}.
\item For each $x,y\in X$ and $N$ we have the inequalities $0\le D_{N+1}(x,y)\le D_N(x,y)\le D(x,y)$.
\item For each $M,N\ge 1$ we have the inequality
\begin{equation*}
D_M(x,u)+D_N(u,y)\ge D_{M+N+1}(x,y) \textrm{ for } x,y,u\in X.
\end{equation*}
\item Denote by $D_\infty(x,y)=\lim_{N\to\infty}D_N(x,y)$.  Then $D_\infty(x,y)$ is a distance, called the induced distance of $D$.
Furthermore, $D_\infty$ is the maximum distance $D'$ that satisfies \eqref{DmajD'}.
\end{enumerate}
\end{proposition}
\begin{proof}
(a) Clearly $D_N(x,y)\ge 0$.  As $D(x,y)=D(y,x)$ it follows that 
\begin{equation*}
\ D(z_0,z_1)+\cdots+D(z_N,z_{N+1})=D(z_{N+1},z_N)+\cdots+D(z_1,z_0).
\end{equation*}
Hence $D_N(x,y)=D_N(y,x)$.  Assume that $y=x$.  Choose $z_1=\cdots=z_N=x$.
As $D(x,x)=0$ we deduce that $\sum_{i=0}^N D(z_i,z_{i+1})=0$.  Hence $D_N(x,x)$=0.  
As $D'$ is a distance we deduce
\begin{equation*}
\sum_{i=0}^N D'(z_i,z_{i+1})\ge D'(z_0, z_{N+1})=D'(x,y).
\end{equation*}
Use \eqref{DmajD'} to deduce that
\begin{eqnarray*}
\sum_{i=0}^N D(z_i,z_{i+1})\ge \sum_{i=0}^N D'(z_i,z_{i+1})\ge D'(x,y).
\end{eqnarray*}
Hence $D_N$ satisfies the inequality \eqref{DmajD'}.  In particular, if $x\ne y$ then
$D_N(x,y)\ge D'(x,y)>0$.  Therefore $D_N$ is a weak distance.

\noindent
(b)  Assume that $z_1=\ldots = z_{N}=x, z_{N+1}=y$.  Then 
$\sum_{i=0}^N D(z_i,z_{i+1})=D(x,y)$.  Hence $D_N(x,y)\le D(x,y)$. Now let $z_{N+1}=z_{N+2}=y$.
Then
\begin{equation*}
\sum_{i=0}^N D(z_i,z_{i+1})=\sum_{i=0}^{N+1} D(z_i,z_{i+1}).
\end{equation*}
Hence $D_{N+1}(x,y)\le D_N(x,y)$.

\noindent
(c)  Choose $z_0 = x$, $z_{M + 1} = u$, $z_{M + N + 2} = y$, and $z_1,\ldots,z_{M+N+1}$ arbitrarily.  Then $\sum_{i=0}^{M+N+1} D(z_i,z_{i+1})\ge D_{M+N+1}(x,y)$.
Compare that with the definitions of $D_{M}(x,u)$ and $D_{N}(u,y)$ to deduce the inequality $D_M(x,u)+D_N(u,y)\ge D_{M+N+1}(x,y)$.

\noindent
(d) As $\{D_N(x,y)\}$ is a non-increasing sequence such that $D_N(x,y)\ge D'(x,y)$ we deduce that the limit $D_\infty(x,y)$ exists and $D(x,y)\ge D_\infty(x,y)\ge D'(x,y)$. Since $D_N(x,y)=D_N(y,x)$ it follows that $D_\infty(x,y)=D_\infty(y,x)$. Hence $D_\infty(x,y)\ge 0$ and equality holds if and only if $x=y$.  In the inequality $D_M(x,u)+D_N(u,x)\ge D_{M+N+1}(x,y)$ let $M=N\to\infty$ to deduce that $D_\infty$ satisfies the triangle inequality.  Hence $D_\infty$ is a distance.  The inequality  $D(x,y)\ge D_\infty(x,y)\ge D'(x,y)$ yields that $D_\infty$ is a maximum distance $D'$ that satisfies \eqref{DmajD'}.
\end{proof}
\begin{theorem}\label{kapTABprop}  Let $C\in\rS(\cH_n\otimes\cH_n)$.  Then $\rT^Q_C$ is a semidistance on $\Omega_n\times \Omega_n$ if and only if $C$ is positive semidefinite and $\ker(C)=\cH_S$.
 Furthermore,
for $\rho^A,\rho^B\in\Omega_n$ the following statements hold:
\begin{enumerate}[(a)]
\item $\rT_{C}^Q(\rho^A,\rho^B)=\rT_{C}^Q(\rho^B,\rho^A)$.
\item $\rT_{C}^Q(\rho^A,\rho^B)\ge 0$.
\item $\rT_{C}^Q(\rho^A,\rho^B)=0$ if and only if $\rho^A=\rho^B$.
\item$\rT_{C^Q}^Q(\rho^A,\rho^B)\le \frac{1}{2}(1-\tr \rho^A\rho^B)$.  
Furthermore
\begin{equation}\label{QOTrankone}
\rT_{C^Q}^Q(\rho^A,\rho^B)= \frac{1}{2}(1-\tr \rho^A\rho^B) \textrm{ if either }
\rho^A \textrm{ or } \rho^B \textrm{ is a pure state}.
\end{equation} 
\item $\sqrt{\rT_{C^Q}^Q(\rho^A,\rho^B)}$ is a distance on pure states.
\end{enumerate}
\end{theorem}
\begin{proof}  We first show the second part of the theorem.   Assume that $C$ is positive semidefinite and  vanishes exactly on symmetric matrices.

\noindent
(a)   As $S$ is an involution with the eigenspaces $\rS^2\C^n$ and $\rA^2\C^n$ corresponding to the eigenvalues $1$ and $-1$ respectively, and $C\rS^2\C^n=0$, it follows that $SC=CS=-C$.  Hence $SC S^\dagger=C$.   
The second equality in \eqref{Werid} yields that $S\Gamma^Q(\rho^A,\rho^B)S^\dagger=\Gamma^Q(\rho^B,\rho^A)$.  As $\tr C \rho^{AB}=\tr C S\rho^{AB}S^\dagger$ we deduce (a).

\noindent
(b)  Since $C\ge 0$, for any $\rho^{AB}\in \Omega_{n^2}$ we get that $\tr C\rho^{AB}\ge 0$.  This proves (b).

\noindent
(c) Suppose that $\rho^A=\rho^B=\rho$.   Consider the spectral  decomposition of $\rho$ given by \eqref{specdecrho}.
Then a purification of $\rho$ is
\begin{equation}\label{purifA}
R= \Big( \sum_{i=1}^n \sqrt{\lambda_i} |\x_i\rangle|\x_i\rangle \Big) \Big( \sum_{j=1}^n \sqrt{\lambda_i} \langle\x_j|\langle\x_j| \Big)\in\Omega_{n^2}.\end{equation}
Clearly $R\in \Gamma^Q(\rho,\rho)$.  As $X=\sum_{i=1}^n \sqrt{\lambda_i} |\x_i\rangle|\x_i\rangle$ is a symmetric matrix it follows that $C X=0$.  Hence $\tr CR=0$ and  $\rT_{C}^Q(\rho,\rho)=0$.

Assume now that $\rT_{C}(\rho^A,\rho^B)=0$.  Hence $\tr C\rho^{AB}=0$ for some $\rho^{AB}\in\Gamma^Q(\rho^A,\rho^B)$.  That is, the eigenvectors of $\rho^{AB}$  are symmetric matrices.  Therefore $\rho^{AB}=\sum_{j=1}^k p_j|\psi_j\rangle \langle \psi_j|$ 
where each $ |\psi_j\rangle $ is a symmetric matrix and $p_j>0$.   We claim that each $|\psi_j\rangle \langle \psi_j|$ is of the form \eqref{purifA}.  This is equivalent to the Autonne--Takagi  factorization theorem \cite[Corollary 4.4.4, part (c)]{HJ13}
that any symmetric $X\in\C^{n\times n}$ is of the form 
\begin{eqnarray*}
X=\sum_{i=1}^n d_i|\x_i\rangle|\x_i\rangle=UDU^\top, \quad D=\diag(\bd), \quad U\in \rU(n),
\end{eqnarray*}
where the columns of $U$ represent vectors,  $\x_1,\ldots,\x_n$.
Clearly $\tr_A |\psi_j\rangle \langle \psi_j|=\tr_B |\psi_j\rangle \langle \psi_j|$.
Hence $\rho^B=\tr_A \rho^{AB}=\tr_B \rho^{AB}=\rho^A$.  

\noindent
(d)  As $\rho^A\otimes \rho^B\in \Gamma^Q(\rho^A,\rho^B)$ it follows that $\rT_{C^Q}^Q(\rho^A,\rho^B)\le \tr C^{Q} (\rho^A\otimes\rho^B)$.  Clearly $\tr \I(\rho^A\otimes\rho^B)=1$.  The first part of \eqref{Werid} yields that $\tr S(\rho^A\otimes\rho^B)=\tr(\rho^A\rho^B)$.
Hence 
$\tr C^{Q}(\rho^A\otimes\rho^B)=\frac{1}{2}\big(1-\tr \rho^A\rho^B\big)$, and
$\rT_{C^{Q}}^Q(\rho^A,\rho^B)\le \frac{1}{2}\big(1-\tr \rho^A\rho^B\big)$. Assume that either $\rho^A$ or $\rho^B$ is a pure state.  Lemma \ref{rangecont}  yields that $\Gamma^Q(\rho^A \rho^B)=\{\rho^A\otimes\rho^B\}$.  Hence \eqref{QOTrankone} holds.

\noindent
(e) It is known that if $\rho^A,\rho^B$ are pure state then \cite{Ren}
\begin{equation}\label{psiden}
\begin{aligned}
& \sqrt{1-\tr \rho^A\rho^B}=\frac{1}{2}\|\rho^A-\rho^B\|_1, \\ 
& \rho^A=|\x\rangle\langle \x|,\; \rho^B=|\y\rangle\langle \y|,\quad \langle \x|\x\rangle=\langle \y|\y\rangle=1. 
\end{aligned}
\end{equation}
Note that  if one of the states is pure
then
$\sqrt{1-\tr \rho^A\rho^B}$ reduces to  the root infidelity \cite{GLN05,MPHUZ} ---  see also Eq. \eqref{Ifidelity}. 
We give a short proof for completeness.
By changing the orthonormal basis in $\cH_n$ we can assume that $n=2$ and 
\begin{equation*}
\rho^A=\begin{bmatrix}1&0\\0&0\end{bmatrix}, \quad \rho^B=\begin{bmatrix}b&c\\c&1-b\end{bmatrix},\quad 0\le b\le 1,\,0\le  c, \,c^2=b(1-b).
\end{equation*}
As $\tr(\rho^A-\rho^B)=0$ it follows that the two eigenvalues of $\rho^A-\rho^B$
are 
\begin{equation*}
\pm \sqrt{-\det(\rho^A-\rho^B)}=\pm \sqrt{(1-b)^2 +c^2}=\pm\sqrt{1-b}=\pm\sqrt{1-\tr\rho^A\rho^B}.
\end{equation*}
This proves \eqref{psiden}.  Hence 
$\frac{1}{2}\|\rho^A-\rho^B\|_1+\frac{1}{2}\|\rho^B-\rho^C\|_1\ge\frac{1}{2}\|\rho^A-\rho^C\|_1$.  Combine that with (d) to deduce (e).

We now show the first part of the theorem.  Suppose that $C$ is positive semidefinite and  vanishes exactly on symmetric matrices. Then parts (a)-(c) of the theorem
show that  $\rT_{C}^Q$ is a semidistance.  

Assume now that $C\in\rS(\cH_n\otimes\cH_n)$ and $\rT_C^Q$ is a semidistance.
For $n=1$ it is straightforward to see that $C=0$.  Assume that $n>1$.
As $\rT_C^Q(\rho^A,\rho^B)>0$ for $\rho^A\ne \rho^B\in\Omega_n$ it follows that $C\ne 0$.   Let $R\in\rS(\cH_n\otimes \cH_n)$ be nonzero and positive semidefinite. We claim that $\tr CR\ge 0$.  It is enough to assume that $\tr R=1$.  Set $\rho^A=\tr_B R, \rho^B=\tr_A R$.  Then $R\in \Gamma^Q(\rho^A,\rho^B)$.  Thus $0\le \rT_C^Q(\rho^A,\rho^B)\le \tr CR$.  Suppose that $C=\sum_{k=1}^{n^2}\mu_k|\psi_k\rangle\langle \psi_k|$, where $|\psi_1\rangle,\ldots,|\psi_{n^2}\rangle$ is an orthonormal basis for $\cH_n\otimes\cH_n$.
Choose rank-one $R_k=|\psi_k\rangle\langle \psi_k|\ge 0$.  Thus $\mu_k=\tr CR_k\ge 0$ for $k\in[n^2]$.  Hence $C\ge 0$.
Let $\rho=|\x\rangle \langle\x|$ be a pure state.  Lemma \ref{rangecont} yields that $\Gamma^Q(\rho,\rho)=\{\rho\otimes\rho\}$.  Hence 
$0=\rT_C^Q(\rho,\rho)=\tr C(\rho\otimes\rho)$.
Noting that $\rho\otimes \rho=(|\x\rangle|\x\rangle)(\langle \x|\langle \x|)$,  as $C$ is positive semidefinite we deduce that $C(|\x\rangle|\x\rangle)=0$.  So $C$ vanishes on all rank one symmetric matrices, hence $C\cH_S=0$.    

It is left to show that $C|Y\rangle \ne 0$ if $Y$ is a nonzero skew-symmetric matrix.  
Assume to the contrary that $C|Y\rangle =0$ for some nonzero skew-symmetric matrix $Y$.
Let $Z\in \rS^2\C^n$ be the unique symmetric matrix with zero diagonal  such that $X=Z+Y$ is a nonzero lower triangular matrix with zero diagonal.  Note that $C|X\rangle =0$.  Normalize $X$ such that $\tr X X^\dagger=1$.
Let $R=|X\rangle\langle X|$, $\rho^A=\tr_B R, \rho^B =\tr_A R\in\Omega_n$.
Clearly $\tr CR=0$.  Hence $0\le \rT_C^Q(\rho^A,\rho^B)\le \tr CR=0$.  As $\rT_C^Q$ is a semidistance we deduce that $\rho^A=\rho^B$.  We now contradict this equality.
Indeed, consider the equality \eqref{Xrhosigrel}.  As $X$ is lower triangular with zero diagonal its first row is zero.  Hence $\rho^A_{11}=0$.  Hence $\rho^B_{11}=0$.  Note that $\rho_{11}^B$ is the norm squared of the first column of $X$.  Hence the first column of $X$ is zero.  Therefore the second row of $X$ is zero.
Thus $\rho^A_{22}=0$, which yields that $\rho^B_{22}=0$.  Therefore the second column of $X$ is zero.  Repeat this argument to deduce that $X=0$,
which contradicts our assumption that $\tr X X^\dagger=1$. 
\end{proof}

We now give a very general distance on positive semidefinite matrices, inspired by our lower bound \eqref{lowbdT0} on  $\rT_{C^{Q}}^Q(\rho^A,\rho^B)$ ,  which is exact on qubit density matrices.
\begin{proposition}\label{metricdenmat}  Let $\nu:\R^n \to [0,\infty)$ be a norm.  Assume that $f:[0,\infty)\to [0,\infty)$ is a continuous,  strictly increasing function.  For $\rho^A,\rho^B$ positive semidefinite define
\begin{equation}\label{defDrhosigma}
\begin{aligned}
D(\rho^A,\rho^B)=
\max_{U\in\rU(n)}
\nu\Big( & \big( f((U^\dagger\rho^A U)_{11}),\ldots,f((U^\dagger\rho^A U)_{nn}) \big)^\top
\\
& - \big(f((U^\dagger\rho^B U)_{11}),\ldots,f((U^\dagger\rho^B U)_{nn}) \big)^\top\Big).
\end{aligned}
\end{equation}
Then $D(\rho^A,\rho^B)$ is a distance on positive semidefinite matrices.  In particular,
\begin{equation}\label{defD0rhosigma}
\begin{aligned}
D_0(\rho^A,\rho^B) & =\max_{U\in\rU(n),i\in[n]} \big|f((U^\dagger\rho^A U)_{ii})-f((U^\dagger\rho^B U)_{ii}) \big|\\
& =\max_{U\in\rU(n)} \big|f((U^\dagger\rho^A U)_{11})-f((U^\dagger\rho^B U)_{11}) \big|
\end{aligned}
\end{equation}
is a distance on positive semidefinite matrices.
\end{proposition}

Let us note that the distance \eqref{defD0rhosigma} admits an operational interpretation: It quantifies the maximal distinguishability between two given state, $\rho^A, \rho^B \in \Omega_n$, achievable through local unitary rotations followed by a projective measurement. 

\begin{proof}  By definition $D(\rho^A,\rho^B)=D(\rho^B,\rho^A)\ge 0$.
Assume that $D(\rho^A,\rho^B)=0$.  Then $f\bigl((U^\dagger\rho^A U)_{ii}\bigr)=f\bigl((U^\dagger\rho^B U\bigr)_{ii})$ for each $i\in[n]$ and $U\in U(n)$.  As $f$ is strictly increasing we deduce that $(U^\dagger\rho^A U)_{ii}=(U^\dagger\rho^B U)_{ii}$ for $i\in[n]$.  That is for each $U\in \rU(n)$ the diagonal entries of $U^\dagger(\rho^A-\rho^B) U$ are 0.  Choose a unitary $V$ so that $V^\dagger(\rho^A-\rho^B)V$ is diagonal.  Then $V^\dagger(\rho^A-\rho^B)V=0$.  Hence $\rho^A=\rho^B$.  It is left to show the triangle inequality.  

Denote by $\bbf(\rho)$ the vector $\bigl(f(\rho_{11}),\ldots,f(\rho_{nn})\bigr)^\top$.
Since $f$ is continuous there exists $V\in U(n)$ such that 
$D(\rho^A,\rho^B)=\nu\big(\bbf(V^\dagger\rho^A V)-\bbf(V^\dagger\rho^B V)\big)$.  Hence
\begin{align*}
D(\rho^A,\rho^B) & = \nu\big(\bbf(V^\dagger\rho^A V)-\bbf(V^\dagger\rho^B V)\big)\\
 & \le \nu\big(\bbf(V^\dagger\rho^A V)-\bbf(V^\dagger\rho^C V)\big)+\nu\big(\bbf(V^\dagger\rho^C V)-\bbf(V^\dagger\rho^B V)\big) \\
 & \le D(\rho^A,\rho^C)+D(\rho^C,\rho^B).
\end{align*}

To show that $D_0(\cdot,\cdot)$ is a distance we observe that $D_0(\rho^A,\rho^B)=D(\rho^A,\rho^B)$ where $\nu\big((x_1,\ldots,x_n)^\top\big)=\max_{i\in[n]}|x_i|$. To show the last equality of \eqref{defD0rhosigma} let $\rP_n\subset \rU(n)$ denote the group of permutation matrices.  Then
\begin{align*}
& \max_{i\in[n]} \big|f((U^\dagger\rho^A U)_{ii}) - f((U^\dagger\rho^B U)_{ii})\big| \\
& \qquad\qquad\qquad = \max_{P\in \rP_n} \big| f \big(((UP)^\dagger\rho^A (UP))_{11} \big)-f \big(((UP)^\dagger\rho^B (UP))_{11} \big)\big|. \qquad \qedhere
\end{align*}
\end{proof}
{\textsc{Proof of Theorem} \ref{Wa2metthm}.}
 As $C$ is positive semidefinite that vanishes exactly on symmetric matrices,
Theorem \ref{kapTABprop} yields that $\rT_C^Q$ is a semidistance on $\Omega_n$.
Next observe that $C\ge aC^Q$ for some $a>0$.
  Hence $\rT_C^Q(\rho^A,\rho^B)\ge a\rT_{C^Q}^Q(\rho^A,\rho^B)$.   
  Let $D_0(\rho^A,\rho^B)$ be 
the distance defined  in \eqref{defD0rhosigma}, with 
 $f(x)=\sqrt{x/2}$ for $x\ge 0$.  
  The inequality
  \eqref{lowbdT0} yields that $\sqrt{\rT_C^Q(\rho^A,\rho^B)}\ge \sqrt{a}D_0(\rho^A,\rho^B)$.  Hence $\sqrt{\rT_C^Q(\rho^A,\rho^B)}$ is a weak distance. Proposition \ref{inddist} shows that  $\sqrt{\rT_C^Q}$ yields the induced Wasserstein-2 distance given by \eqref{defQOTmet}, which is the maximum distance majorized by $\sqrt{\rT_C^Q}$.
  
Assume that $n=2$.  Then \eqref{lowbdT0a}  yields that $\sqrt{\rT^Q_{C^Q}(\rho^A,\rho^B)}= D_0(\rho^A,\rho^B)$. Hence, $\windQ(\rho^A,\rho^B)=\sqrt{\rT^Q_{C^Q}(\rho^A,\rho^B)}$.\qed

Observe that if $\sqrt{\rT_C^Q}$ is a weak distance, then so is $\big(\rT_C^Q \big)^{1/p}$ for any $p\geq 2$, which dominates the distance  $D_0^{2/p}$. Then, one can define $\mathcal{W}^Q_{C,p}$ in a similar way as in formula \eqref{defQOTmet}.

\section{Quantum optimal transport for $d$-partite systems}\label{sec:QOTdpar}
We now explain briefly how to state the quantum optimal transport problem for a $d$-partite system, where $d\ge 3$, similarly to what was done in \cite{FrV18,Fri20}.  
The main result of this section is Theorem \ref{kapTd}.

Let $\cH_{n_j}$ be a Hilbert space of dimension $n_j$ for $j\in[d]$.
We consider the $d$-partite tensor product space $\otimes_{j=1}^d \cH_{n_j}$.  
A product state in Dirac's notation is $\otimes_{i=1}^d |\x_i\rangle$.  Then
\begin{eqnarray*}
\langle \otimes_{i=1}^d \x_i,\otimes_{j=1}^d \y_j\rangle=(\otimes_{i=1}^d \langle \x_i|)
(\otimes_{j=1}^d |\y_j\rangle)=
\prod_{j=1}^d \langle \x_j| \y_j
\rangle.
\end{eqnarray*}
Consider the space $\rB(\otimes_{j=1}^d\cH_{n_j})$ of linear operators from $\otimes_{j=1}^d\cH_{n_j}$ to itself.  A rank-one product operator is of the form
$(\otimes_{i=1}^d |\x_i\rangle)(\otimes_{j=1}^d\langle\y_j|)$ and 
acts on a product state as follows:
\begin{eqnarray*}
(\otimes_{i=1}^d |\x_i\rangle)(\otimes_{j=1}^d\langle\y_j|)(\otimes_{k=1}^d |\z_k\rangle)=(\prod_{j=1}^d\langle \y_j| \z_j\rangle) (\otimes_{i=1}^d|\x_i\rangle).
\end{eqnarray*}
Given $\rho^{A_1,\ldots ,A_d}\in \rB(\otimes_{j=1}^d\cH_{n_j})$ one can define a $k$-partial trace on $k\in[d]$:
\begin{align*}
&\tr_k: \rB \big(\otimes_{j=1}^d\cH_{n_j} \big)\to \rB \big(\otimes_{j\in[d]\setminus\{k\}}\cH_{n_j} \big),\\
&\tr_k (\otimes_{i=1}^d |\x_i\rangle)(\otimes_{j=1}^d\langle\y_j|)=\langle \y_k|\x_k\rangle 
(\otimes_{i\in[d]\setminus\{k\}} |\x_i\rangle)(\otimes_{j\in[d]\setminus\{k\}}^d\langle \y_j|).
\end{align*}
We will denote $\tr_k\rho^{A_1,\ldots,A_d}$ by $\rho^{A_1,\ldots,A_{k-1},A_{k+1},\ldots,A_d}$.  Let $\rho^{A_k}\in\rB(\cH_{n_k})$ be the operator obtained from $\rho^{A_1,\ldots,A_d}$ by tracing out all but the $k$-th component.
Thus we have the map 
\begin{align*}
& \widetilde{\mathrm{Tr}}:  \rB \big(\otimes_{j=1}^d\cH_{n_j} \big)\to \oplus_{j=1}^d \rB \big(\cH_{n_j} \big),\\
& \widetilde{\mathrm{Tr}}(\rho^{A_1,\ldots,A_d})=(\rho^{A_1},\ldots,\rho^{A_d}).
\end{align*}
Let $N=\prod_{j=1}^d n_j$ and view the set of density matrices $\Omega_N$ as a subset of selfadjoint operators on $\cH_N=\otimes_{j=1}^d \cH_{n_j}$.
For $\rho^{A_i}\in\Omega_{n_i}, i\in[d]$ denote
\begin{eqnarray*}
\Gamma^{Q}(\rho^{A_1},\ldots,\rho^{A_d})=\{\rho^{A_1,\ldots,A_d}\in \Omega_N, 
\widetilde{\mathrm{Tr}} (\rho^{A_1,\ldots,A_d})=(\rho^{A_1},\ldots,\rho^{A_d})\}.
\end{eqnarray*}

Assume that  $C$ is a selfadjoint operator on $\cH_N$.  We define the quantum optimal transport as
\begin{equation}\label{defdQT}
\rT_{C}^Q(\rho^{A_1},\ldots,\rho^{A_d})=\min_{\rho^{A_1,\ldots,A_d}\in\Gamma^Q(\rho^{A_1},\ldots,\rho^{A_d})}\tr C\rho^{A_1,\ldots,A_d}.
\end{equation}

We now give an analog of
a result in \cite{Fri20}.  Assume that $d=2\ell\ge 4$, and
$n_1=\cdots=n_d=n$.  Then $\cH_n^{\otimes d}=\otimes^d\cH_n$.
We want to give a semidistance between two ordered $\ell$-tuples of density matrices $( \rho^{A_1},\cdots,\rho^{A_\ell}),(\rho^{A_{\ell+1}},\cdots,\rho^{A_{2\ell}})\in\Omega_n^\ell$.  We view $\cH_n^{\otimes (2\ell)}$ as bipartite states $\cH_n^{\otimes \ell}\otimes \cH_n^{\otimes \ell}$.  Let $S\in\rB(\cH_n^{\otimes (2\ell)})$ be the SWAP operator:
\begin{eqnarray*}
S(\otimes_{j=1}^{2\ell} |\x_j)\rangle=(\otimes_{j=1}^{\ell} |\x_{j+\ell}\rangle)\otimes (\otimes_{j=1}^\ell |\x_j\rangle).
\end{eqnarray*}
Denote by 
$C^{Q}=\frac{1}{2}(\I-S)$. 
Then $\rT_{C^{Q}}^Q(\rho^{A_1},\ldots,\rho^{A_{2\ell}})\ge 0$. Equality holds if and only if 
$(\rho^{A_1},\ldots,\rho^{A_\ell})=(\rho^{A_{1+\ell}},\ldots,\rho^{A_{2\ell}})$.  
Also 
\begin{equation*}
\rT_{C^{Q}}^Q(\rho^{A_1},\ldots,\rho^{A_{2\ell}})=\rT_{C^{Q}}^Q(\rho^{A_{1+\ell}},\ldots,\rho^{A_{2\ell}}, \rho^{A_1},\ldots,\rho^{A_{\ell}}).
\end{equation*}
Hence $\rT_{C^{Q}}^Q(\rho^{A_1},\ldots,\rho^{A_{2\ell}})$ is a semidistance on $\Omega_n^\ell$.  As in the case of $\ell=1$ we can show that $\sqrt{\rT_{C^{Q}}^Q(\rho^{A_1},\ldots,\rho^{A_{2\ell}})}$ is a weak distance.
Denote by 

\noindent
$W^Q_{C^Q}((\rho^{A_1},\ldots,\rho^{A_\ell}),(\rho^{A_{\ell+1}},\ldots,\rho^{A_{2\ell}}))$ the Wasserstein-2 distance on $\Omega_n^\ell$ induced by the weak distance $\sqrt{\rT_{C^{Q}}^Q(\rho^{A_1},\ldots,\rho^{A_{2\ell}})}$.

Let $\Sigma_\ell$ be the group of bijections $\pi:[\ell]\to[\ell]$.
Then  
\begin{eqnarray*}
\min_{\pi\in\Sigma_{\ell}}W_{C^{Q}}^Q \big( (\rho^{A_{\pi(1)}},\ldots,\rho^{A_{\pi(\ell)}}),  (\rho^{1+\ell},\ldots,\rho^{2\ell}) \big)
\end{eqnarray*}
gives a distance on unordered $\ell$-tuples of density matrices.
We call this distance the quantum Wasserstein-2 distance on the set of unordered $\ell$-tuples $\{\rho^{A_1},\ldots,\rho^{A_\ell}\}$.

On $\cH_n^{\otimes d}$ we define for two integers $1\le p<q\le d$ the SWAP operator $S_{pq}\in \rB(\cH_{n})^{\otimes d}$, which swaps $\x_p$ with $\x_q$ in the tensor product $|\x_1\rangle\otimes\cdots\otimes|\x_d\rangle$.  Note that $S_{pq}$ is unitary and involutive.  Hence $S_{pq}$ is selfadjoint with eigenvalues $\pm 1$.
The common invariant subspace of $\cH_n^{\otimes d}$ for all $S_{pq}$ is the
the subspace of symmetric tensors\
---``bosons''---,  denoted as $\rS^d\cH_n$. 
 Let $C^{B}\in \rS_+(\cH_n^{\otimes d})$ be the projection on the orthogonal complement of $\rS^d\cH_n$.  Note that  $C^{B}=C^Q$ for $d=2$.   We now have a partial analog of Theorem \ref{kapTABprop}:
\begin{theorem}\label{kapTd}
 Let $\rho^{A_1},\ldots,\rho^{A_d}\in\Omega_n$.  Then 
 \begin{enumerate}[(a)]
 \item
 $\rT_{C^{B}}^Q(\rho^{A_1},\ldots,\rho^{A_d})\ge 0$.  
 \item 
 $\rT_{C^{B}}^Q(\rho^{A_1},\ldots,\rho^{A_d})=0$ if and only if $\rho^{A_1}=\cdots=\rho^{A_d}$.
 \item
 Assume that at least $d-1$ out of $\rho^{A_1},\ldots,\rho^{A_d}$ are pure states.
 Then 
 \begin{equation*}
 \rT_{C^{B}}^Q(\rho^{A_1},\ldots,\rho^{A_d})=\tr C^{B}(\otimes_{j=1}^d\rho^{A_j}).
 \end{equation*}
 \end{enumerate}
 \end{theorem}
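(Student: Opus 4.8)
The plan is to mirror the structure of the proof of Theorem~\ref{kapTABprop}, adapting each ingredient from the bipartite ($d=2$) case to the $d$-partite case, where the SWAP operators $S_{pq}$ are replaced by the whole symmetric group action and $\cH_S$ is replaced by the boson subspace $\rS^d\cH_n$.

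Part~(a) is immediate: $C^B$ is a projection, hence positive semidefinite, so $\tr C^B \rho^{A_1,\ldots,A_d}\ge 0$ for every $\rho^{A_1,\ldots,A_d}\in\Gamma^Q(\rho^{A_1},\ldots,\rho^{A_d})$, and therefore $\rT_{C^B}^Q(\rho^{A_1},\ldots,\rho^{A_d})\ge 0$. For the ``if'' direction of part~(b), suppose $\rho^{A_1}=\cdots=\rho^{A_d}=\rho$ with spectral decomposition $\rho=\sum_{i}\lambda_i|\x_i\rangle\langle \x_i|$. The natural analog of the purification \eqref{purifA} is the rank-one operator $R=|\Psi\rangle\langle\Psi|$ with $|\Psi\rangle=\sum_i \sqrt{\lambda_i}\,|\x_i\rangle^{\otimes d}$. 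One checks using the definition of $\widetilde{\mathrm{Tr}}$ and the $k$-partial traces from Section~\ref{sec:QOTdpar} that $\rho^{A_k}=\rho$ for every $k$, so $R\in\Gamma^Q(\rho,\ldots,\rho)$; and since $|\Psi\rangle$ lies in $\rS^d\cH_n$ (it is a sum of symmetric product tensors $|\x_i\rangle^{\otimes d}$), we have $C^B|\Psi\rangle=0$, hence $\tr C^B R=0$ and $\rT_{C^B}^Q=0$. For the ``only if'' direction: if $\rT_{C^B}^Q=0$ then some $\rho^{A_1,\ldots,A_d}$ with $\tr C^B\rho^{A_1,\ldots,A_d}=0$ has all its eigenvectors in $\rS^d\cH_n$, so it is a convex combination $\sum_j p_j|\Psi_j\rangle\langle\Psi_j|$ of pure symmetric states. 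For each symmetric $|\Psi_j\rangle$ and each pair $p\ne q$ we have $S_{pq}|\Psi_j\rangle=|\Psi_j\rangle$, and the identity analogous to the second line of \eqref{Werid} for $S_{pq}$ gives $\tr_p(S_{pq}\,\cdot\,S_{pq})=\tr_q$; applied to $|\Psi_j\rangle\langle\Psi_j|$ this yields $\tr_p|\Psi_j\rangle\langle\Psi_j|=\tr_q|\Psi_j\rangle\langle\Psi_j|$, hence after tracing out the remaining factors $\rho^{A_p}_j=\rho^{A_q}_j$ for all $j$, and summing against $p_j$ gives $\rho^{A_p}=\rho^{A_q}$ for all $p,q$.

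Part~(c): assume without loss of generality that $\rho^{A_2},\ldots,\rho^{A_d}$ are pure. Then $\Gamma^Q(\rho^{A_1},\ldots,\rho^{A_d})$ is a singleton. This is the $d$-partite analog of Lemma~\ref{rangecont}: if $\rho^{A_2}=|\x_2\rangle\langle\x_2|$ is pure, then any $\rho^{A_1,\ldots,A_d}\in\Gamma^Q$ has, in the eigenbasis of $\rho^{A_2}$, all ``$2$-blocks'' off the direction $|\x_2\rangle$ equal to zero by the same positivity-plus-trace argument used in Proposition~\ref{redprop}, forcing $\rho^{A_1,\ldots,A_d}=\sigma\otimes|\x_2\rangle\langle\x_2|$ for some state $\sigma$ on the other factors; iterating over the pure marginals $\rho^{A_2},\ldots,\rho^{A_d}$ collapses $\rho^{A_1,\ldots,A_d}$ to $\otimes_{j=1}^d\rho^{A_j}$. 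Hence the minimum in \eqref{defdQT} is attained at the unique element $\otimes_{j=1}^d\rho^{A_j}$, giving $\rT_{C^B}^Q(\rho^{A_1},\ldots,\rho^{A_d})=\tr C^B(\otimes_{j=1}^d\rho^{A_j})$.

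I expect the main obstacle to be part~(b), specifically verifying cleanly that the candidate purification $R=|\Psi\rangle\langle\Psi|$ has \emph{all} marginals equal to $\rho$ (this requires being careful with the bookkeeping of the $k$-partial traces on a $d$-fold tensor product, since unlike the $d=2$ case there is no single neat formula like \eqref{Xrhosigrel}), and dually, extracting from ``all eigenvectors symmetric'' the equality of all $d$ marginals — the key technical point being the $d$-partite analog of the SWAP identities \eqref{Werid}, i.e.\ $\tr_p(S_{pq}\rho\, S_{pq})=\tr_q\rho$ and $S_{pq}|\Psi\rangle=|\Psi\rangle$ for symmetric $|\Psi\rangle$. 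Parts~(a) and~(c) are routine given Proposition~\ref{redprop} and the fact that $C^B\ge 0$; part~(b) is where the symmetric-subspace geometry does the real work.
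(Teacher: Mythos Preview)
Your proposal is correct and follows essentially the same approach as the paper. Two minor stylistic differences: in part~(b) ``only if'', the paper skips the decomposition into pure symmetric states and argues directly that $S_{pq}\rho^{A_1,\ldots,A_d}S_{pq}=\rho^{A_1,\ldots,A_d}$ (since its range is in $\rS^d\cH_n$), whence the marginals coincide; in part~(c), rather than iterating over the pure marginals, the paper bundles them into a single pure state $\rho^B=\otimes_{j=2}^d\rho^{A_j}$ on $\cH_n^{\otimes(d-1)}$ and applies Lemma~\ref{rangecont} once to the bipartition $A_1\,|\,B$ --- both shortcuts achieve the same conclusions with a bit less bookkeeping, and in particular your anticipated difficulties in part~(b) do not actually materialize.
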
 
 \begin{proof} (a) This follows from the fact that $\tr C^{B}\rho^{A_1,\ldots,A_d}\ge 0$.
 
 \noindent
 (b) Assume that $\rT_{C^{B}}^Q(\rho^{A_1},\ldots,\rho^{A_d})=\tr C^{B}\rho^{A_1,\ldots,A_d}=0$.  Hence all the eigenvectors of $\rho^{A_1,\ldots,A_d}$ corresponding to positive eigenvalues are symmetric tensors.  So $S_{pq} \rho^{A_1,\ldots,A_d} S_{pq}=\rho^{A_1,\ldots,A_d}$.  Therefore $\widetilde{\mathrm{Tr}}(\rho^{A_1,\ldots,A_d})=(\rho,\ldots,\rho)$.
 Thus $ \rho^{A_1}=\cdots=\rho^{A_d}=\rho$.  We now show that $\rT_{C^{B}}^Q(\rho,\ldots,\rho)=0$.
 Suppose that $\rho$ has the spectral decomposition \eqref{specdecrho}.  Let us take 
 a $d$-purification of $\rho$
 \begin{equation*}
 \rho^{\mathrm{pur},d}=\big(\sum_{i=1}^n \sqrt{\lambda_i}\otimes^d|\x_i\rangle \big) \big( \sum_{j=1}^n \sqrt{\lambda_j}\otimes^d\langle\x_j| \big).
 \end{equation*}
 Clearly we have $\rho^{\mathrm{pur},d}\in\Gamma^Q(\rho,\ldots,\rho)$.  As $\rho^{\mathrm{pur},d}$ is a pure state whose eigenvector corresponding to its positive eigenvalue is
 a symmetric tensor  we deduce that $\tr C^{B}\rho^{\mathrm{pur},d}=0$.
 
 \noindent (c) Assume for simplicity of the exposition that $\rho^{A_2},\ldots,\rho^{A_d}$ are pure states.  Then $\rho^B=\otimes_{j=2}^d \rho^{A_j}$ is a pure state.    Lemma \ref{rangecont} yields that $\Gamma^Q(\rho^{A_1},\rho^{B})=\{\rho^{A_1}\otimes \rho^{B}\}$.  Hence $\Gamma^{Q}(\rho^{A_1},\ldots,\rho^{A_d})=\{\otimes_{j=1}^d \rho^{A_j}\}$. This proves part (c) of the theorem.
 \end{proof}

The next question concerns the optimal technique
 to compute $\tr C^{B}(\otimes_{j=1}^d\rho^{A_j})$.
This problem is related to the permanent function. Assume first that each $\rho^{A_j}$ is a pure state $|\x_j\rangle\langle\x_j|$, where $\langle\x_j|\x_j\rangle=1$.  Then $\otimes_{j=1}^d \rho^{A_j}$ is a pure product state with the positive eigenvector $\otimes_{j=1}^d |\x_j\rangle$.  A symmetrization of  $\otimes_{j=1}^d |\x_j\rangle$ is the orthogonal projection on the subspace of symmetric tensors, given by
\begin{eqnarray*}
(\I-C^{B})(\otimes_{j=1}^d |\x_j\rangle)=\frac{1}{d!}\sum_{\pi\in\Sigma_d} \otimes_{j=1}^d |\x_{\pi(j)}\rangle.
\end{eqnarray*}
Hence
\begin{eqnarray*}
\big\|(\I-C^{B})(\otimes_{j=1}^d |\x_j\rangle)\big\|^2=\frac{1}{d!} \sum_{\pi\in\Pi_d}\prod_{j=1}^d \langle \x_j|\x_{\pi(j)}\rangle.
\end{eqnarray*}
Let $X=[\x_1\cdots \x_d]\in\C^{n\times d}$ be the matrix whose columns are the vectors $[\x_1,\ldots,\x_d]$.  The $G(\x_1,\ldots,\x_d)=X^\dagger X$ is the Gramian matrix $[\langle \x_i|\x_j\rangle]\in \rH_{d,+}$.  Note that since $\|\x_1\|=\cdots=\|\x_d\|=1$ the diagonal entries of $G(\x_1,\ldots,\x_d)$ are all 1, and $G(\x_1, \ldots, \x_d)$  is called a complex covariance matrix.
It now follows that $\|(\I-C^{B})\otimes_{j=1}^d|\x_j\rangle\|^2$ is $\frac{1}{d!}$ times the permanent of $G(\x_1,\ldots,\x_d)$, denoted as per$\, G(\x_1,\ldots,\x_d)$.
Hence
\begin{align*}
\tr C^{B}(\otimes_{i=1}^d|\x_i\rangle)(\otimes_{j=1}^d\langle\x_i|)=1-\frac{1}{d!}\textrm{per}\,G(\x_1,\ldots,\x_d),&& \|\x_1\|=\cdots=\|\x_d\|=1.
\end{align*}
\begin{lemma}\label{trTprodps}
Assume that $\rho^{A_1},\ldots,\rho^{A_d}\in\Omega_n$ have the following spectral decomposition:
\begin{eqnarray*}
\rho^{A_j}=\sum_{i=1}^n \lambda_{i,j}|\x_{i,j}\rangle\langle\x_{i,j}|, \quad j\in[d].
\end{eqnarray*}
Then
\begin{equation}\label{trTprodps1}
\tr C^{B}(\otimes_{j=1}^d\rho^{A_j})=1-\frac{1}{d!}
\sum_{i_1\ldots,i_d\in[n]}\prod_{j=1}^d\lambda_{i_j,j}\; \mathrm{per}\,G(\x_{i_1,1},\ldots,\x_{i_d,d}).
\end{equation}
\end{lemma}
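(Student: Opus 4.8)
The plan is to reduce the statement to the pure-product case already computed in the displayed formula immediately preceding the lemma, by expanding each $\rho^{A_j}$ in its spectral decomposition and using multilinearity of $C^B$ together with the tensor-product structure. First I would write out $\otimes_{j=1}^d\rho^{A_j}=\sum_{i_1,\dots,i_d\in[n]}\prod_{j=1}^d\lambda_{i_j,j}\,\bigl(\otimes_{j=1}^d|\x_{i_j,j}\rangle\langle\x_{i_j,j}|\bigr)$, which follows by distributing the Kronecker product over the sums in the spectral decompositions of the individual factors. Taking $\tr C^B(\cdot)$ of both sides and using linearity of the trace and of the operator $C^B$ gives
\begin{equation*}
\tr C^B(\otimes_{j=1}^d\rho^{A_j})=\sum_{i_1,\dots,i_d\in[n]}\prod_{j=1}^d\lambda_{i_j,j}\;\tr C^B\bigl(\otimes_{j=1}^d|\x_{i_j,j}\rangle\langle\x_{i_j,j}|\bigr).
\end{equation*}

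Next I would invoke the pure-product formula established just above the lemma, namely $\tr C^B(\otimes_{i=1}^d|\x_i\rangle)(\otimes_{j=1}^d\langle\x_j|)=1-\tfrac{1}{d!}\,\mathrm{per}\,G(\x_1,\dots,\x_d)$ valid for unit vectors. Since the eigenvectors $\x_{i_j,j}$ are all of unit norm, applying this with $\x_j\leftarrow\x_{i_j,j}$ yields
\begin{equation*}
\tr C^B\bigl(\otimes_{j=1}^d|\x_{i_j,j}\rangle\langle\x_{i_j,j}|\bigr)=1-\frac{1}{d!}\,\mathrm{per}\,G(\x_{i_1,1},\dots,\x_{i_d,d}).
\end{equation*}
Substituting this into the previous display and using that $\sum_{i_1,\dots,i_d}\prod_j\lambda_{i_j,j}=\prod_{j=1}^d\bigl(\sum_{i}\lambda_{i,j}\bigr)=\prod_{j=1}^d\tr\rho^{A_j}=1$ (because each $\rho^{A_j}$ is a density matrix), the constant terms collapse to $1$, and one is left precisely with \eqref{trTprodps1}.

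There is essentially no serious obstacle here; the only point requiring a word of care is the bookkeeping in the multilinear expansion — making sure the $\prod_{j=1}^d\lambda_{i_j,j}$ factors are collected correctly and that the normalization $\sum\prod\lambda=1$ is used to absorb the $d^n$ copies of the ``$1$'' into a single $1$. I would state explicitly that this normalization identity is what produces the lone $1$ on the right-hand side of \eqref{trTprodps1} rather than $n^d/d!$ or a similar spurious factor. No new machinery beyond the already-displayed pure-product computation and elementary properties of the trace, the Kronecker product, and density matrices is needed.
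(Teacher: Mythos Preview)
Your argument is correct and is precisely the multilinearity argument the paper invokes (the paper's own proof is the single sentence ``follows straightforwardly from the multilinearity of $\otimes_{j=1}^d\rho^{A_j}$''); you have simply written out the details, including the normalization $\sum_{i_1,\dots,i_d}\prod_j\lambda_{i_j,j}=1$ that collapses the constant terms.
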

The proof of this lemma follows straightforwardly from the multilinearity of $\otimes_{j=1}^d\rho^{A_j}$.

We now state the analog to part (d) of Theorem \ref{kapTABprop} , which is a corollary to the above lemma:
\begin{corollary}\label{QTupbdcase}
Let $\rho^{A_1},\ldots,\rho^{A_d}$ be density matrices with the spectral decomposition given by Lemma \ref{trTprodps}.  Then
\begin{eqnarray*}
\rT_{C^{B}}^Q(\rho^{A_1},\ldots,\rho^{A_d})\le 1- \frac{1}{d!}
\sum_{i_1\ldots,i_d\in[n]}\prod_{j=1}^d\lambda_{i_j,j}\; \mathrm{per}\,G(\x_{i_1,1},\ldots,\x_{i_d,d}).
\end{eqnarray*}
If at least $d-1$ density matrices are pure states then equality holds.
\end{corollary}

\emph{Acknowledgements:}
It is a pleasure to thank Rafa{\l}~Bistro{\'n}, 
John Calsamiglia, Matt Hoogsteder,  Tomasz Miller,
Wojciech~S{\l}omczy{\'n}ski and Andreas Winter
for numerous inspiring discussions and  helpful remarks.
Financial support by Simons collaboration grant for mathematicians, Narodowe Centrum Nauki 
under the Maestro grant number DEC-2015/18/A/ST2/00274 
and by the Foundation for Polish Science 
under the Team-Net project no. POIR.04.04.00-00-17C1/18-00
is gratefully acknowledged.

\appendix
\section{Basic properties of partial traces}\label{sec:partr}

In order to understand the partial traces on $ \rB(\cH_m\otimes \cH_n)$ it is convenient to view this space as a $4$-mode tensor space \cite{FGZ19} and use Dirac notation.  Denote by $\cH_m^\vee$ the space of linear operators on $\cH_m$, i.e., the dual space.  Then $\y^\vee=\langle \y|\in \cH_m^\vee$ acts on $\z\in\cH_m$ as follows: $\y^\vee (\z)=\langle \y,\z\rangle=\langle \y|\z\rangle$.  Hence a rank-one operator in $\rB(\cH_m)$ is of the form $\x\otimes \y^\vee=|\x\rangle\langle \y|$, where $(|\x\rangle \langle\y|)(\z)=\langle\y|\z\rangle |\x\rangle$.  
So $|\x\rangle\langle\y|$ can be viewed a matrix $\rho=\x\y^\dagger\in\C^{m\times m}$.
Assume that $V_1,V_2$ are linear transformations from $\cH_m$ to itself.  Then $V_1\otimes V_2$ is a linear transformation  
from $\cH_m\otimes \cH_m^\vee$ to itself, which acts on rank one operators as follows:
\begin{equation*}
(V_1\otimes V_2)( |\x\rangle\langle \y|)=
|V_1 \x\rangle\langle V_2\y|=V_1( |\x\rangle\langle \y|)V_2^\dagger, \quad \x,\y\in\cH_m.
\end{equation*}
Assume now that $W_1,W_2$ are linear transformations  from $\cH_n$ to itself. 
Then
\begin{equation*}
(V_1\otimes W_1)|\x\rangle|\bv\rangle= |V_1\x\rangle|W_1\bv\rangle, \quad \x\in\cH_m,\y\in\cH_n.
\end{equation*} 
A tensor product of two rank-one operators is identified  a 4-tensor:
\begin{equation}\label{tenprodiden}
|\x\rangle\langle \y|\otimes  |\bu\rangle\langle \bv|=|\x\rangle|  \bu\rangle\langle \y|\langle \bv|, \quad \x,\y\in\cH_m,\bu,\bv\in\cH_n.
\end{equation}
Thus
\begin{eqnarray*}
(|\x\rangle|  \bu\rangle\langle \y|\langle \bv|)(|\z\rangle|\bw\rangle)=
\langle \y|\z\rangle \langle \bv|\bw\rangle |\x\rangle| \bu\rangle, \quad \x,\y,\z\in\cH_m, \bu,\bv,\bw\in\cH_n.
\end{eqnarray*}
Observe next that  $V_1\otimes W_1\otimes V_2\otimes W_2$ is a linear transformation of $\rB(\cH_m\otimes\cH_n)$ to itself,  which acts on a rank-one product operator as follows:
\begin{align*}
(V_1\otimes W_1\otimes V_2\otimes W_2) (|\x\rangle|\bu\rangle\langle \y|\langle \bv|) &= |V_1 \x\rangle|W_1\bu\rangle\langle V_2\y|\langle W_2\bv|\\
& = (V_1\otimes W_1)(|\x\rangle|\bu\rangle\langle \y|\langle \bv|)(V_2^\dagger\otimes W_2^\dagger).
\end{align*}
(In the last equality we view $|\x\rangle|\bu\rangle\langle \y|\langle \bv|$ as an $(mn)\times (mn)$ matrix.)
As $\tr |\x\rangle\langle \y|=\langle \y|\x\rangle$ we deduce the following lemma:
\begin{lemma}\label{partrlem}  Let 
\begin{eqnarray*}
\x,\y\in\cH_m, \bu,\bv\in\cH_n, \quad V_1,V_2\in\rB(\cH_m), \; W_1,W_2\in\rB(\cH_n).
\end{eqnarray*}
Then
\begin{eqnarray*}
&&\tr_A |\x\rangle| \bu\rangle\langle \y|\langle \bv|=\langle \y|\x\rangle |\bu\rangle\langle \bv|,\\
&&\tr_B |\x\rangle| \bu\rangle\langle \y|\langle \bv|=\langle \bv|\bu\rangle |\x\rangle\langle \by|, \\
&&\tr_A (V_1\otimes W_1\otimes V_2\otimes W_2)(|\x\rangle| \bu\rangle\langle \y|\langle \bv|)=\langle V_2\y|V_1\x\rangle |W_1\bu\rangle\langle W_2\bv|,\\
&&\tr_B (V_1\otimes W_1\otimes V_2\otimes W_2)(|\x\rangle| \bu\rangle\langle \y|\langle \bv|)=\langle W_2\bv|W_1\bu\rangle |V_1\x\rangle\langle V_2\y|.
\end{eqnarray*}
In particular, if $V_1=V_2=V$ and $W_1=W_2=W$ are unitary then
\begin{align*}
& \tr_A (V\otimes W\otimes V\otimes W)(|\x\rangle| \bu\rangle\langle \y|\langle \bv|)
=\langle\y|\x\rangle |W\bu\rangle\langle W\bv|,\\
& \tr_B (V\otimes W\otimes V\otimes W)(|\x\rangle| \bu\rangle\langle \y|\langle \bv|)
=\langle\bv|\bu\rangle |V\x\rangle\langle V\y|.
\end{align*}
\end{lemma}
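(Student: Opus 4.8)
The plan is to reduce everything to the two basic partial-trace formulas for a rank-one product operator and then bootstrap the rest. The most economical route is via coordinates: writing $\x=\sum_i x_i|i\rangle$, $\y=\sum_j y_j|j\rangle$, $\bu=\sum_p u_p|p\rangle$, $\bv=\sum_q v_q|q\rangle$, the operator $R=|\x\rangle|\bu\rangle\langle\y|\langle\bv|$ has entries $r_{(i,p)(j,q)}=x_i\bar y_j u_p\bar v_q$. Plugging this into the definition \eqref{ptfrom} of the partial traces gives $\tr_A R=\big[\sum_i x_i\bar y_i u_p\bar v_q\big]_{p,q}=\big(\sum_i x_i\bar y_i\big)\bu\bv^\dagger=\langle\y|\x\rangle\,\bu\bv^\dagger$, which represents $\langle\y|\x\rangle|\bu\rangle\langle\bv|$, and similarly $\tr_B R=\big[\sum_p x_i\bar y_j u_p\bar v_p\big]_{i,j}=\langle\bv|\bu\rangle\,\x\y^\dagger$, representing $\langle\bv|\bu\rangle|\x\rangle\langle\y|$. (Equivalently, one may use \eqref{tenprodiden} to write $R=(|\x\rangle\langle\y|)\otimes(|\bu\rangle\langle\bv|)$, check from \eqref{ptfrom} that $\tr_A(P\otimes Q)=(\tr P)Q$ and $\tr_B(P\otimes Q)=(\tr Q)P$ for $P\in\rB(\cH_m),Q\in\rB(\cH_n)$, and use $\tr|\x\rangle\langle\y|=\langle\y|\x\rangle$.) This establishes the first two displayed identities of the lemma.

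For the identities involving $V_1\otimes W_1\otimes V_2\otimes W_2$, I would invoke the computation carried out in the paragraph immediately preceding the lemma, namely that
\[
(V_1\otimes W_1\otimes V_2\otimes W_2)(|\x\rangle|\bu\rangle\langle\y|\langle\bv|)=|V_1\x\rangle|W_1\bu\rangle\langle V_2\y|\langle W_2\bv|,
\]
which is again a rank-one product operator, now with $\x,\bu,\y,\bv$ replaced by $V_1\x,\,W_1\bu,\,V_2\y,\,W_2\bv$. Applying the two basic identities already proven to this operator yields at once
\[
\tr_A\big(V_1\otimes W_1\otimes V_2\otimes W_2\big)(|\x\rangle|\bu\rangle\langle\y|\langle\bv|)=\langle V_2\y|V_1\x\rangle\,|W_1\bu\rangle\langle W_2\bv|,
\]
and the analogous formula for $\tr_B$ with the roles of the two factors interchanged. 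The final unitary special case then follows by setting $V_1=V_2=V$ and $W_1=W_2=W$ with $V,W$ unitary, and simplifying $\langle V_2\y|V_1\x\rangle=\langle V\y|V\x\rangle=\langle\y|V^\dagger V|\x\rangle=\langle\y|\x\rangle$ and likewise $\langle W_2\bv|W_1\bu\rangle=\langle\bv|\bu\rangle$.

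I do not expect any genuine obstacle here: the lemma is a direct bookkeeping exercise in Dirac notation. The only point requiring care is the identification, via \eqref{tenprodiden}, of the 4-tensor $|\x\rangle|\bu\rangle\langle\y|\langle\bv|$ with the $(mn)\times(mn)$ matrix $(|\x\rangle\langle\y|)\otimes(|\bu\rangle\langle\bv|)$, so that $\tr_A$ and $\tr_B$ are seen to act on the correct pair of tensor slots; once that identification is in place, all four formulas (and the unitary corollary) drop out mechanically, and linearity extends them to arbitrary operators should that ever be needed, although the lemma only asserts the rank-one case.
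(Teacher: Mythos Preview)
Your proposal is correct and follows essentially the same approach as the paper. The paper's ``proof'' is in fact just the single sentence preceding the lemma statement, ``As $\tr |\x\rangle\langle \y|=\langle \y|\x\rangle$ we deduce the following lemma,'' relying on the identification \eqref{tenprodiden} and the already-displayed action of $V_1\otimes W_1\otimes V_2\otimes W_2$ on rank-one product operators; your write-up simply makes these steps explicit.
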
\label{ABUVcor}
\begin{corollary}\label{corA2}  Let $\rho^A\in\Omega_m,\rho^B\in\Omega_n$, $V\in\rB(\cH_m),W\in\rB(\cH_n)$ be unitary and $C\in\rS(\cH_m\otimes \cH_n)$.
Then
\begin{align*}
& \Gamma^Q(V\rho^AV^\dagger, W\rho^BW^\dagger)=(V\otimes W)\Gamma^Q(\rho^A,\rho^B)(V^\dagger\otimes W^\dagger),\\
& \rT_{C}^Q(\rho^A,\rho^B)=\rT_{(V\otimes W)C(V^\dagger\otimes W^\dagger)}( V\rho^AV^\dagger,W\rho^B W^\dagger).
\end{align*}
\end{corollary}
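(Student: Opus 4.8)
The plan is to reduce the statement to the partial-trace identities of Lemma~\ref{partrlem} together with the fact that $V\otimes W$ is unitary. First I would introduce the conjugation map $\Phi(\rho^{AB}) = (V\otimes W)\rho^{AB}(V^\dagger\otimes W^\dagger)$ and observe that, since $V\otimes W\in\rU(mn)$, $\Phi$ is a trace-preserving linear bijection of the cone of positive semidefinite operators on $\cH_m\otimes\cH_n$ onto itself which restricts to a bijection of $\Omega_{mn}$ onto itself, with inverse given by conjugation by $V^\dagger\otimes W^\dagger$.

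Next I would compute the marginals of $\Phi(\rho^{AB})$. By linearity it suffices to check the relevant identities on rank-one product operators $|\x\rangle|\bu\rangle\langle\y|\langle\bv|$, since these span $\rB(\cH_m\otimes\cH_n)$; for such an operator $\Phi$ acts as $V\otimes W\otimes V\otimes W$ in the $4$-mode tensor picture used in Appendix~\ref{sec:partr}, so the last two displayed identities of Lemma~\ref{partrlem} (with $V_1=V_2=V$ and $W_1=W_2=W$ unitary) give $\tr_B\Phi(|\x\rangle|\bu\rangle\langle\y|\langle\bv|) = \langle\bv|\bu\rangle\,|V\x\rangle\langle V\y| = V\big(\tr_B|\x\rangle|\bu\rangle\langle\y|\langle\bv|\big)V^\dagger$ and likewise $\tr_A\Phi(|\x\rangle|\bu\rangle\langle\y|\langle\bv|) = W\big(\tr_A|\x\rangle|\bu\rangle\langle\y|\langle\bv|\big)W^\dagger$. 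Extending by linearity, $\tr_B\Phi(\rho^{AB}) = V(\tr_B\rho^{AB})V^\dagger$ and $\tr_A\Phi(\rho^{AB}) = W(\tr_A\rho^{AB})W^\dagger$ for every $\rho^{AB}$. Hence $\rho^{AB}\in\Gamma^Q(\rho^A,\rho^B)$ if and only if $\Phi(\rho^{AB})\in\Gamma^Q(V\rho^AV^\dagger,W\rho^BW^\dagger)$, and since $\Phi$ is a bijection of $\Omega_{mn}$ this yields the first claimed equality $\Gamma^Q(V\rho^AV^\dagger,W\rho^BW^\dagger)=(V\otimes W)\Gamma^Q(\rho^A,\rho^B)(V^\dagger\otimes W^\dagger)$.

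For the second equality I would use cyclicity of the trace together with $V\otimes W$ being unitary: for any $\rho^{AB}\in\Gamma^Q(\rho^A,\rho^B)$,
\[
\tr C\rho^{AB} = \tr\big((V\otimes W)C(V^\dagger\otimes W^\dagger)\big)\Phi(\rho^{AB}).
\]
Taking the minimum over $\rho^{AB}\in\Gamma^Q(\rho^A,\rho^B)$ and using the bijection from the first part to let $\sigma^{AB}=\Phi(\rho^{AB})$ range over all of $\Gamma^Q(V\rho^AV^\dagger,W\rho^BW^\dagger)$, the left side is $\rT_C^Q(\rho^A,\rho^B)$ by definition~\eqref{defkapCAB} and the right side is $\rT_{(V\otimes W)C(V^\dagger\otimes W^\dagger)}^Q(V\rho^AV^\dagger,W\rho^BW^\dagger)$, which is exactly the asserted identity.

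There is no substantial obstacle here; the only point requiring mild care is the identification of the conjugation $\rho^{AB}\mapsto(V\otimes W)\rho^{AB}(V^\dagger\otimes W^\dagger)$ on $\rB(\cH_m\otimes\cH_n)$ with the operator $V\otimes W\otimes V\otimes W$ on the associated $4$-mode tensor space, together with the fact that rank-one product operators span that space, so that the identities of Lemma~\ref{partrlem} propagate by linearity. Note that this argument also re-derives the first three lines of Proposition~\ref{conjunit} in the special case $V=W=U$.
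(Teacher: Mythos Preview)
Your proof is correct and follows essentially the same approach as the paper: both compute the partial traces of $(V\otimes W)\rho^{AB}(V^\dagger\otimes W^\dagger)$ via the identities of Lemma~\ref{partrlem} (the paper does this in coordinates, you on rank-one product operators and by linearity---these are the same computation), establish the set equality by applying the argument to $V^\dagger,W^\dagger$ as well, and then deduce the transport identity from cyclicity of the trace.
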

\begin{proof} View $\rho^A\in \Omega_m$ as an element  in $\cH_m\otimes \cH_m^\vee$ to deduce $V\rho^AV^\dagger=(V\otimes V) \rho^A$.
Suppose that 
\begin{eqnarray*}
\rho^{AB}=\sum_{\substack{i,j\in[m]\\ p,q\in [n]}} r_{(i,p)(j,q)} |i\rangle |p\rangle\langle j| \langle q| \in\Gamma^Q( \rho^A, \rho^B).
\end{eqnarray*}
Let 
$\tilde\rho^{AB}=(V\otimes W\otimes V\otimes W) \rho^{AB}$.
Observe that
\begin{eqnarray*}
&&\tr_A \rho^{AB}=\sum_{p,q\in[n]}\Big( \sum_{i\in[m]} r_{(i,p)(i,q)} \Big)|p\rangle\langle q|= \rho^B,\\
&&\tr_A \tilde\rho^{AB}=\sum_{p,q\in[n]} \Big( \sum_{i\in[m]} r_{(i,p)(i,q)} \Big) \big( \langle q|W^\dagger \big)\big(W|p\rangle\big) =W\rho_{B}W^\dagger.
\end{eqnarray*}
Similarly $\tr_B \tilde\rho^{AB}=V \rho^AV^\dagger$.  Hence 
\begin{equation*}
(V\otimes W\otimes V\otimes W)\Gamma^Q( \rho^A, \rho^B)\subseteq \Gamma^Q(V \rho^AV^\dagger, W \rho^BW^\dagger).  
\end{equation*}
and 
\begin{equation*}
(V^\dagger\otimes W^\dagger\otimes V^\dagger\otimes W^\dagger)\Gamma^Q(V\rho^AV^\dagger,W\rho^BW^\dagger)\subseteq \Gamma^Q(\rho^A, \rho^B).  
\end{equation*}
Hence we deduce the first part of the corollary.  The second part of the corollary follows from the identity 
\begin{equation*}
\tr C\rho^{AB}=\tr (V\otimes  W) C(V^\dagger\otimes W^\dagger)(V\otimes W) \rho^{AB}(V^\dagger\otimes W^\dagger).\qedhere
\end{equation*}
\end{proof}
The following result appeared in the literature \cite{FGZ19} 
and we state it here for completeness.
For $\rho^A\in \rB(\cH_m)$ denote by range$\, \rho^A\subseteq \cH_m$ the range of $\rho^A$.
\begin{lemma}\label{rangecont}  Let $\rho^A\in\Omega_m,\rho^B\in \Omega_n$. Then
\begin{equation*}
\Gamma^Q(\rho^A,\rho^B)\subseteq \rB(\mathrm{range}\,\rho^A)\otimes  \rB(\mathrm{range}\,\rho^B).
\end{equation*}
In particular if either $\rho^A$ or  $\rho^B$ is a pure state then $\Gamma^Q( \rho^A, \rho^B)=\{ \rho^A\otimes  \rho^B\}$.
\end{lemma}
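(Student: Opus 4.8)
The plan is to establish the range inclusion first and then read off the pure-state statement. Fix $\rho^{AB}\in\Gamma^Q(\rho^A,\rho^B)$ and set $m'=\dim\range\rho^A$, $n'=\dim\range\rho^B$. First I would choose an orthonormal basis of $\cH_m$ consisting of eigenvectors of $\rho^A$, ordered so that the first $m'$ of them span $\range\rho^A$ while the last $m-m'$ correspond to the eigenvalue $0$. Writing $\rho^{AB}=[r_{(i,p)(j,q)}]$ in this basis, the block $R_{ii}=[r_{(i,p)(i,q)}]_{p,q\in[n]}$ is a principal submatrix of the positive semidefinite matrix $\rho^{AB}$, hence positive semidefinite, and by \eqref{ptfrom} its trace equals $(\tr_B\rho^{AB})_{ii}=\rho^A_{ii}$. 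For $i>m'$ this trace is $0$, so $R_{ii}=0$, and in particular every diagonal entry $r_{(i,p)(i,p)}$ vanishes. Since a zero diagonal entry of a positive semidefinite matrix forces its whole row and column to vanish, one gets $r_{(i,p)(j,q)}=0$ whenever $i>m'$, and by Hermiticity also whenever $j>m'$. This is exactly the coordinate-free statement $\range\rho^{AB}\subseteq\range\rho^A\otimes\cH_n$.

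Next I would run the same argument with the two subsystems exchanged: choosing a basis of $\cH_n$ diagonalizing $\rho^B$ and using $\tr_A\rho^{AB}=\rho^B$ together with positive semidefiniteness of the blocks $[r_{(i,p)(i,q)}]_{i}$, one obtains $\range\rho^{AB}\subseteq\cH_m\otimes\range\rho^B$. Then I would invoke the elementary fact that for subspaces $V'\subseteq V$ and $W'\subseteq W$ one has $(V'\otimes W)\cap(V\otimes W')=V'\otimes W'$, which gives $\range\rho^{AB}\subseteq\range\rho^A\otimes\range\rho^B$. Because $\rho^{AB}$ is selfadjoint, $\ker\rho^{AB}=(\range\rho^{AB})^\perp\supseteq(\range\rho^A\otimes\range\rho^B)^\perp$, so $\rho^{AB}$ is supported on $\range\rho^A\otimes\range\rho^B$ and hence, under the canonical identification $\rB(V\otimes W)=\rB(V)\otimes\rB(W)$, lies in $\rB(\range\rho^A)\otimes\rB(\range\rho^B)$. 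This proves the inclusion.

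For the ``in particular'' part, suppose $\rho^A$ is a pure state. Then $\range\rho^A$ is one-dimensional, so $\rB(\range\rho^A)$ is the one-dimensional space spanned by $\rho^A$; by the inclusion just proved, every $\rho^{AB}\in\Gamma^Q(\rho^A,\rho^B)$ is of the form $\rho^A\otimes\sigma$ with $\sigma\in\rB(\range\rho^B)$. Applying $\tr_A$ forces $\sigma=\rho^B$, so $\rho^{AB}=\rho^A\otimes\rho^B$; since $\rho^A\otimes\rho^B$ is manifestly in $\Gamma^Q(\rho^A,\rho^B)$, this gives $\Gamma^Q(\rho^A,\rho^B)=\{\rho^A\otimes\rho^B\}$, and the case $\rho^B$ pure is symmetric.

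I do not expect a genuine obstacle: the argument is a standard support-propagation computation. The only points needing (routine) care are the lemma that a vanishing diagonal entry of a positive semidefinite matrix annihilates the corresponding row and column, the subspace identity $(V'\otimes W)\cap(V\otimes W')=V'\otimes W'$, and — perhaps the most easily overlooked step — the observation that the two partial conclusions $\range\rho^{AB}\subseteq\range\rho^A\otimes\cH_n$ and $\range\rho^{AB}\subseteq\cH_m\otimes\range\rho^B$ are basis-independent and may therefore be combined even though they were derived in two different adapted bases.
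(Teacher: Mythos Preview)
Your proof is correct and follows essentially the same approach as the paper: diagonalize $\rho^A$, use positive semidefiniteness together with the partial-trace condition to kill the rows and columns indexed by $\ker\rho^A$, then repeat for $\rho^B$. You block by the $A$-index and use $\tr R_{ii}=\rho^A_{ii}=0$, while the paper blocks by the $B$-index and uses $\rho^A\ge R_{pp}$, but this is only a cosmetic difference; your treatment of the intersection step and the pure-state case is more explicit than the paper's, which simply asserts the conclusion.
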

\begin{proof}  It is enough to show that $\Gamma^Q(\rho^A,\rho^B)\subset \rB(\textrm{range}\,\rho^A)\otimes  \rB(\cH_n)$. To show this condition we can assume that range$\,\rho^A$ is a nonzero strict subspace of $\cH_m$.  By choosing a corresponding orthonormal basis consisting of eigenvectors of $\rho^A$ we can assume that $\rho^A$ is a diagonal matrix whose first $1\le \ell <m$ diagonal entries are positive, and whose last $n-\ell$ diagonal entries are zero.  Write down  $\rho^{AB}$ as a block matrix $[R_{pq}] \in\C^{(mn)\times (mn)}$, were $R_{pq}\in\C^{m\times m}, p,q\in[n]$.  Then $\tr_B \rho^{AB}=\sum_{p=1}^n R_{pp}= \rho^A$.  As $R_{pp}\ge  0$ we deduce that $ \rho^A=[a_{ij}]\ge R_{pp}\ge 0$.  As $a_{ii}=0$ for $i>\ell$ it follows that the $(i,i)$ entry of each $R_{pp}$ is zero. As $\rho^{AB}$ positive semidefinite it follows that the $((p-1)n+i)$th row and column of $\rho^{AB}$ are zero.  This proves $\Gamma^Q(\rho^A,\rho^B)\subseteq \rB(\textrm{range}\,\rho^A)\otimes  \rB(\cH_n)$.  Apply the same argument for $\rho^B$ to deduce $\Gamma^Q(\rho^A,\rho^B)\subseteq \rB(\textrm{range}\,\rho^A)\otimes  \rB(\textrm{range}\, \rho^B)$.  

Assume that $ \rho^A=|1\rangle\langle1|$ and $\rho^{AB}\in\Gamma^Q(\rho^A,\rho^B)$.   Then $\rho^{AB}=\rho^A\otimes \rho^B$.
\end{proof}

More information concerning the partial trace and its properties
can be found in a recent work \cite{FKV18}. 

The following results are used in the proof of Proposition \ref{swapfidprop}:
\begin{lemma}\label{ptcombS}  Denote by $S_N$ the SWAP operator on $\cH_{N^2}:=\cH_N\otimes \cH_N$, and by $S_{n,m}$ and $R_{n,m}$ the following SWAP operators on $\cH_{(nm)^2}:=\cH_n\otimes \cH_m\otimes \cH_n\otimes \cH_m$:
\begin{equation*}
S_{n,m}(|\bx\rangle|\bu\rangle|\by\rangle|\bv\rangle)=|\by\rangle|\bv\rangle|\bx\rangle|\bu\rangle,\,
R_{n,m}(|\bx\rangle|\bu\rangle|\by\rangle|\bv\rangle)=|\bx\rangle|\by\rangle|\bu\rangle|\bv\rangle.
\end{equation*}
\begin{enumerate}[(a)]
\item Assume that $|i\rangle$, with $i\in[N]$, is an orthonormal basis in $\cH_N$.  Suppose that  
$$\rho=\sum_{i,j,p,q\in[N]} \rho_{(i,p)(j,q)}|i\rangle|p\rangle \langle j|\langle q|\in \rB(\cH_N\otimes \cH_N).$$
Then 
$\tr S_N\rho=\sum_{i,p\in[N]}  \rho_{(p,i)(i,p)}$.
\item Assume that 
\begin{align*}
& \rho^{AB}\in\rB(\cH_n\otimes\cH_n), && \tr_B\rho^{AB}=\rho^A \in \rB(\cH_n), && \tr_A\rho^{AB}=\rho^B\in \rB(\cH_n), \\
& \sigma^{CD}\!\in\rB(\cH_m\otimes\cH_m), && \tr_D\sigma^{CD}\!=\sigma^C \in \rB(\cH_m), && \tr_C\sigma^{CD}\!=\sigma^D\in \rB(\cH_m).
\end{align*}
Then $\tau^{ACBD}:=R_{n,m}(\rho^{AB}\otimes\sigma^{CD})R_{n,m}$ is in $\rB(\cH_{(nm)^2})$.   Furthermore
\begin{equation}\label{ptcombS1} 
\begin{gathered}
\tr_{BD} \tau^{ACBD}=\rho^A\otimes\sigma^C, \quad \tr_{AC} \tau^{ACBD}=\rho^B\otimes\sigma^D,\\
\tr S_{n,m}\tau^{ACBD}=\big(\tr S_n\rho^{AB}\big)\big(\tr S_m \sigma^{CD}).
\end{gathered}
\end{equation}
\end{enumerate}
\end{lemma}

\begin{proof}
(a) View $S_N$ and $\rho$ as $N^2\times N^2$ matrices with entries indexed by the row $(i,p)$ and the column $(j,q)$.  Observe that $S_N$ is a symmetric permutation matrix.  Then  $(S_N\rho)_{(i,p),(j,q)}=\rho_{(p,i)(j,q)}$.  The trace of $S_N\rho$ is obtained by summation on the entries $p=q$ and $i=j$.

\noindent
Clearly,  $\tau^{ACBD}\in \rB(\cH_{(nm)^2})$.  Assume that 
\begin{align*}
\rho^{AB} & =\sum_{i_A,i_B,j_A,j_B \in [n]} \rho_{(i_A,i_B)(j_A,j_B)}|i_A\rangle|i_B\rangle\langle j_A|\langle j_B|,\\
\sigma^{CD} & =\sum_{p_C,p_D,q_C,q_D \in [m]} \sigma_{(p_C,p_D)(q_C,q_D)}|p_C\rangle|p_D\rangle\langle q_C|\langle q_D|.
\end{align*}
Then
\begin{align*}
\tau^{ACBD} = \!\!\!\!\! \sum_{\substack{i_A,i_B,j_A,j_B \in [n] \\ p_C,p_D,q_C,q_D \in [m]}} \!\!\!\!\!
\rho_{(i_A,i_B)(j_A,j_B)} \sigma_{(p_C,p_D)(q_C,q_D)}
|i_A\rangle |p_C\rangle |i_B\rangle |p_D\rangle
\langle j_A|\langle q_C| \langle j_B| \langle q_D|.
\end{align*}
Observe next that $\tr_{BD}\tau^{ACBD}$ is obtained when we sum on $i_B=j_B$ and $p_D=q_D$.  Hence 
\begin{align*}
\tr_{BD}\tau^{ACBD} & = \!\! \sum_{\substack{i_A,j_A \in [n] \\ p_C,q_C \in [m]}}
(\sum_{i_B=1}^n \rho_{(i_A, i_B)(j_A,i_B)})(\sum_{p_D=1}^m \sigma_{(p_C,p_D)(q_C,p_D)})|i_A\rangle |p_C\rangle \langle j_A|\langle q_C|\\
& = \!\! \sum_{\substack{i_A,j_A \in [n] \\ p_C,q_C \in [m]}} \rho^A_{i_Aj_A}\sigma^C_{p_Cq_C}|i_A\rangle |p_C\rangle \langle j_A|\langle q_C|=\rho^A\otimes\sigma^C.
\end{align*}
Similarly $\tr_{AC}\tau^{ACBD}=\rho^B\otimes\sigma^D$.
This proves the first line in \eqref{ptcombS1}. 

We now use (a) to compute $\tr S_{n,m}\tau^{ACBD}$:
\begin{align*}
\tr S_{n,m}\tau^{ACBD}= \!\! \sum_{\substack{i_A,i_B \in [n] \\ p_C,p_D \in [m]}} \rho_{(i_B,i_A)(i_A,i_B)}\sigma_{(p_D,p_C)(p_C,p_D)}=
(\tr S_n\rho^{AB})(\tr S_m\sigma^{CD}).
\end{align*}
This proves the second line in \eqref{ptcombS1}. 
\end{proof}
\section{Quantum states of a single qubit system}\label{apb:qubits}
In this Appendix  we discuss additional properties of the quantum optimal transport for qubits.
Subsection~\ref{subsec:Blochb}
provides (Theorem \ref{thmC3}) a closed formula for $\rT_{C^Q}^Q(\rho^A,\rho^B)$ in terms of solutions of the trigonometric equation \eqref{Phieq}.  Lemma \ref{6sollem} shows that this trigonometric equation is equivalent to a polynomial equation of degree
at most $6$.  
Subsection \ref{subsec:isospectral} gives a nice closed formula for the value of QOT for two isospectral qubit density matrices. In Subsection \ref{subsec:nonexisF} we present a simple example where the supremum of the dual SDP problem to QOT is not achieved. 

\subsection{A semi-analytic formula for the single-qubit optimal transport}
\label{subsec:Blochb}
We begin by introducing a convenient notation for qubits in the $y=0$ section of the Bloch ball $\Omega_2$ -- see \cite[Section 5.2]{BZ17}. Let $O$ denote the orthogonal
rotation matrix,
\begin{align*}
O(\theta)=\begin{bmatrix}\cos(\theta/2)&-\sin(\theta/2)\\\sin(\theta/2)&\cos(\theta/2)
\end{bmatrix}, \quad \text{for } \theta \in [0,2\pi),
\end{align*}
and define, for $r\in [0,1]$,
\begin{align*}
\rho(r,\theta) & = O(\theta) \begin{bmatrix}r&0\\0&1-r\end{bmatrix}
O(\theta)^\top . 
\end{align*}
Because of unitary invariance \eqref{CQinvQOT}, the quantum transport problem between two arbitrary qubits $\rho^A, \rho^B \in \Omega_2$ can be reduced to the case $\rho^A = \rho(s,0)$ and $\rho^B = \rho(r,\theta)$, with three parameters, $s,r \in [0,1]$ and $\theta \in [0,2\pi)$. The parameter $\theta$ is the angle between the Bloch vectors associated with $\rho^A$ and $\rho^B$. With such a parametrization we can further simplify the single-qubit transport problem.


Observe first that if $s \in \{0,1\}$ then $\rho^A$ is pure, and if $r \in \{0,1\}$ then $\rho^B$ is pure. In any such case an explicit solution of the qubit transport problem is given \eqref{QOTrankone}.
\begin{theorem}\label{thmC3}
Let $\rho^A = \rho(s,0), \rho^B = \rho(r,\theta)$ and assume that $0<r,s<1$. Then
\begin{equation}\label{defT0}
\rT^Q_{C^Q}(\rho^A,\rho^B) = 
\max_{\phi\in \Phi(s,r,\theta)} \frac{1}{4}\Big(\sqrt{1+ (2s-1)\cos\phi} - \sqrt{1+(2r-1)\cos(\theta+\phi)}\Big)^2,\notag
\end{equation}
where $\Phi(s,r,\theta)$ is the set of all $\phi\in[0,2\pi)$ satisfying the equation
\begin{equation}\label{Phieq}
\frac{(2s-1)^2\sin^2\phi}{1+(2s-1)\cos\phi}=\frac{(2r-1)^2\sin^2(\theta+\phi)}{1+(2r-1)\cos(\theta+\phi)}.
\end{equation}
\end{theorem}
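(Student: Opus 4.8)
The plan is to make Theorem~\ref{mtqub} explicit for $\rho^A=\rho(s,0)$ and $\rho^B=\rho(r,\theta)$. By that theorem,
\[
\rT^Q_{C^Q}(\rho^A,\rho^B)=\tfrac12\max_{U\in\rU(2)}\Big(\sqrt{(U^\dagger\rho^AU)_{11}}-\sqrt{(U^\dagger\rho^BU)_{11}}\Big)^2 .
\]
Writing the first column of $U$ as a unit vector $|v\rangle=(\cos(\beta/2),e^{\bi\gamma}\sin(\beta/2))^\top$ with $\beta\in[0,\pi]$ and $\gamma\in[0,2\pi)$ — which exhausts all unit vectors of $\C^2$ up to an irrelevant global phase — we have $(U^\dagger\rho U)_{11}=\langle v|\rho|v\rangle$. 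A direct computation from $\rho(s,0)=\diag(s,1-s)$ and $\rho(r,\theta)=O(\theta)\diag(r,1-r)O(\theta)^\top$ (using $\rho_{11}+\rho_{22}=1$) gives
\[
\langle v|\rho^A|v\rangle=\tfrac12\big(1+(2s-1)\cos\beta\big),\qquad
\langle v|\rho^B|v\rangle=\tfrac12\big(1+(2r-1)(\cos\theta\cos\beta+\sin\theta\sin\beta\cos\gamma)\big).
\]

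Next I would eliminate the variable $\gamma$. For fixed $\beta$ the first expression is constant while the second is affine in $\cos\gamma\in[-1,1]$; since $t\mapsto(\sqrt a-\sqrt t)^2$ has positive second derivative $\tfrac12\sqrt a\,t^{-3/2}$ and is therefore convex, the objective attains its maximum over the interval swept out by $\cos\gamma$ at one of the endpoints $\cos\gamma=\pm1$. At $\cos\gamma=\pm1$ one gets $\langle v|\rho^B|v\rangle=\tfrac12(1+(2r-1)\cos(\beta\mp\theta))$, and letting $\phi$ run over all of $[0,2\pi)$ these two one-parameter families (as $\beta$ ranges over $[0,\pi]$) are exactly the ones described by $\langle v|\rho^A|v\rangle=\tfrac12(1+(2s-1)\cos\phi)$ and $\langle v|\rho^B|v\rangle=\tfrac12(1+(2r-1)\cos(\theta+\phi))$ (take $\beta=\phi$ for $\phi\in[0,\pi]$ and $\beta=2\pi-\phi$ for $\phi\in[\pi,2\pi)$). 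Hence
\[
\rT^Q_{C^Q}(\rho^A,\rho^B)=\tfrac14\max_{\phi\in[0,2\pi)}h(\phi),\qquad
h(\phi)=\Big(\sqrt{1+(2s-1)\cos\phi}-\sqrt{1+(2r-1)\cos(\theta+\phi)}\Big)^2 .
\]
The hypothesis $0<r,s<1$ forces $|2s-1|,|2r-1|<1$, so both radicands are bounded below by a positive constant; thus $h$ is a smooth $2\pi$-periodic function and attains its maximum at a critical point.

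Finally I would identify the relevant critical points with $\Phi(s,r,\theta)$. Put $a(\phi)=1+(2s-1)\cos\phi$ and $b(\phi)=1+(2r-1)\cos(\theta+\phi)$, so $h=a+b-2\sqrt{ab}$, and a short differentiation yields the factorization
\[
h'(\phi)=\big(\sqrt a-\sqrt b\big)\Big(\tfrac{a'}{\sqrt a}-\tfrac{b'}{\sqrt b}\Big),\qquad a'=-(2s-1)\sin\phi,\quad b'=-(2r-1)\sin(\theta+\phi).
\]
If $\phi^\star$ is a global maximizer of $h$ with $h(\phi^\star)>0$, then $\sqrt a\ne\sqrt b$ at $\phi^\star$, so $h'(\phi^\star)=0$ forces $a'/\sqrt a=b'/\sqrt b$; squaring this identity gives precisely equation~\eqref{Phieq}, whence $\phi^\star\in\Phi(s,r,\theta)$. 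Since $\max_{\phi\in\Phi}h\le\max_{\phi\in[0,2\pi)}h$ trivially, the two maxima agree, which is the asserted formula. The degenerate possibility $\max_\phi h=0$ occurs only when $a\equiv b$, i.e.\ $\rho^A=\rho^B$; then both sides of the claimed identity vanish and $\Phi(s,r,\theta)=[0,2\pi)$ because $a\equiv b$ implies $a'\equiv b'$. The main obstacle is the reduction from the maximization over all of $\rU(2)$ to a single-parameter problem: because $t\mapsto(\sqrt a-\sqrt t)^2$ is not monotone one must use its convexity to push the optimum in $\gamma$ to $\cos\gamma=\pm1$ and then check that the two resulting one-parameter families are exactly the ones appearing in the statement; once the factorization of $h'$ is noticed, the critical-point analysis is routine.
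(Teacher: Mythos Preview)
Your proof is correct and follows the same overall strategy as the paper: invoke Theorem~\ref{mtqub}, parametrize the relevant unit vector by two angles, eliminate one of them to reduce to a single real variable $\phi$, and then identify the maximizer with a solution of~\eqref{Phieq}.

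The execution differs in two places, and in both your argument is tidier. For the elimination step, the paper writes $U$ via Euler angles and sets $\partial_\alpha f(r,\theta;\alpha,\phi)=0$ to pick out $\alpha\in\{0,\pi/2\}$; strictly speaking this locates critical points of $f$ rather than of the objective $(\sqrt{f_A}-\sqrt{f_B})^2$, and one has to observe (as you do implicitly) that $f_A$ is independent of $\alpha$ to conclude. Your convexity argument---$t\mapsto(\sqrt a-\sqrt t)^2$ is convex, so the maximum over the segment traced by $\cos\gamma$ lies at an endpoint---settles this directly. For the final step, the paper asserts that $\partial_\phi g=0$ is ``equivalent'' to~\eqref{Phieq}, which is not quite literal (squaring $a'/\sqrt a=b'/\sqrt b$ adds the branch $a'/\sqrt a=-b'/\sqrt b$, and critical points with $a=b$ need not satisfy~\eqref{Phieq}); your factorization $h'=(\sqrt a-\sqrt b)(a'/\sqrt a-b'/\sqrt b)$ makes clear that any maximizer with $h>0$ lands in $\Phi(s,r,\theta)$, which is exactly what is needed. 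Your explicit treatment of the degenerate case $h\equiv0\Leftrightarrow\rho^A=\rho^B$ is also a welcome addition.
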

\begin{proof}
A unitary $2 \times 2$ matrix $U$ can be parametrized, up to a global phase, with three angles $\alpha, \beta, \phi \in [0,2\pi)$,
\begin{align*}
U = \begin{bmatrix} e^{\bi \alpha} & 0\\ 0 & e^{-\bi \alpha} \end{bmatrix} O(\phi) \begin{bmatrix} e^{\bi \beta} & 0\\ 0 & e^{-\bi \beta} \end{bmatrix}.
\end{align*}
Thus, setting $f(r,\theta;\alpha,\phi) = (U^{\dagger}\rho(r,\theta) U)_{11}$, we have
\begin{align*}
f(r,\theta;\alpha,\phi) =  \frac{1}{2} \Big( 1+ (2 r-1) \big( \cos (\theta ) \cos (\phi ) + \cos (2 \alpha ) \sin (\theta ) \sin (\phi ) \big) \Big).
\end{align*}
This quantity does not depend on the parameter $\beta$, so we can set $\beta = 0$. Note also that $f(s,0;\alpha,\phi)$ does not depend on $\alpha$. With $\rho^A = \rho(s,0), \rho^B = \rho(r,\theta)$, Theorem \ref{lowbdT}  yields
\begin{align*}
\rT^Q_{C^Q}(\rho^A,\rho^B) = \frac12\max_{\alpha,\phi\in [0,2\pi)} \Big( \sqrt{f(s,0;0,\phi)} - \sqrt{f(r,\theta;\alpha,\phi)} \Big)^2.
\end{align*}
Now, note that the equation $\partial_\alpha f(r,\theta;\alpha,\phi) = 0$ yields the extreme points $\alpha_0 = k \pi /2$, with $k \in \mathbb{Z}$. Since $f(r,\theta;\alpha + \pi,\phi) = f(r,\theta;\alpha,\phi)$ we can take just $\alpha_0 \in \{0,\pi/2\}$. Consequently,
\begin{align*}
\rT^Q_{C^Q}(\rho^A,\rho^B) = 
\max_{\phi \in [0,2\pi)} \{ g_-(s,r,\theta;\phi), g_+(s,r,\theta;\phi) \},
\end{align*}
where we introduce the auxilliary functions
\begin{align}\label{g}
 g_\pm(s,r,\theta;\phi) = \frac{1}{4}\Big(\sqrt{1+ (2s-1)\cos\phi} - \sqrt{1+(2r-1)\cos(\theta\pm\phi)}\Big)^2.
\end{align}
But since $g_-(s,r,\theta;2\pi - \phi) = g_+(s,r,\theta;\phi)$ we can actually drop the $\pm$ index in the above formula. In conclusion, we have shown that it is sufficient to take $U = O(\phi)$ for $\phi \in [0,2\pi)$ in Formula \eqref{lowbdT0a}.

Finally, it is straightforward to show that the equation $\partial_\phi g(s,r,\theta;\phi) = 0$ is equivalent to \eqref{Phieq}. Hence, $\Phi(s,r,\theta)$ is the set of extreme points, and \eqref{defT0} follows.
\end{proof}


%
%
\begin{lemma}\label{6sollem}  The equation \eqref{Phieq} has at most six solutions $\phi\in[0,2\pi)$ for given $r,s\in(0,1), \theta\in[0,2\pi)$.  Moreover there is an open set of $s,r\in (0,1),\theta\in[0,2\pi)$ where there are exactly six 
 distinct solutions.
\end{lemma}
\begin{proof}
Write $z=e^{\bi\phi}, \zeta=e^{\bi\theta}$.  Then
\begin{align*}
& 2\cos\phi =z+\frac{1}{z}, &&  2\bi\sin \phi=z-\frac{1}{z}, \\
& 2\cos(\theta+\phi)=\zeta z+\frac{1}{\zeta z}, && 2\bi\sin(\theta+\phi)=\zeta z-\frac{1}{\zeta z}.
\end{align*}
Thus \eqref{Phieq} is equivalent to
\begin{multline}
(1-2 r)^2 \left[ (2 s-1) \left(z^2+1\right)+2 z\right] \left(\zeta ^2 z^2-1\right)^2 \label{z6}\\
 -\zeta  (1-2 s)^2 \left(z^2-1\right)^2 \left[ (2 r-1) \left(\zeta ^2 z^2+1\right)+2 \zeta  z\right] = 0.
\end{multline}
This a 6th order polynomial equation in the variable $z$, so it has at most 6 real solutions.  Since we must have $\vert z \vert = 1$, not every complex root of \eqref{z6} will yield a real solution to the original  \eqref{Phieq}. Nevertheless, it can be shown  that there exist open sets in the parameter space $s,r \in (0,1)$, $\theta \in [0,2\pi)$ on which \eqref{Phieq} does have 6 distinct solutions. 

Observe that if $\theta=0$ and $s,r\in(0,1)$ and $s\ne r$  then two  solutions to the equality \eqref{Phieq} are $\phi\in\{0,\pi\}$, which means that $z=\pm 1$.
In this case the equality \eqref{Phieq} is
\begin{equation*}
\sin^2\phi \: \bigg(\frac{(2s-1)^2}{1+(2s-1)\cos\phi}-\frac{(2r-1)^2}{1+(2r-1)\cos(\phi)}\bigg)=0.
\end{equation*}
As $\sin^2\phi=-(1/4)z^{-2}(z^2-1)^2$ we see that $z=\pm 1$ is a double root.

Another solution $\phi\notin\{0,\pi\}$ is given by
\begin{equation*}
\cos\phi=\frac{(2s-1)^2-(2r-1)^2}{(2r-1)^2(2s-1)-(2r-1)(2s-1)^2}=\frac{2(1-r-s)}{(2r-1)(2s-1)}.
\end{equation*}
Assume that $r+s=1$.  Then $\cos\phi=0$, so $\phi\in\{\pi/2, 3\pi/2\}$.  Thus if $r+s$ is close to $1$ we have that $\phi$ has two values close to $\pi/2$ and $3\pi/2$ respectively.  Hence in this case we have $6$ solutions counting with multiplicities.

We now take a small $|\theta|>0$.  The two simple solutions $\phi$ are close to $\pi/2$ and $3\pi/2$.  We now need to show that the double roots $\pm 1$ split to two pairs of solutions on the unit disc: one pair close to $1$ and the other pair close to $-1$.
Let us consider the pair close to $1$, i.e., $\phi$ close to zero.  Then the equation \eqref{Phieq} can be written in the form
\begin{multline*}
(2s-1)^2\big(1+(2r-1)\cos(\theta+\phi)\big)\sin^2\phi\\
- (2r-1)^2\big(1+(2s-1)\cos\phi\big)\sin^2(\theta+\phi)=0.
\end{multline*}
Replacing $\sin\phi, \sin(\theta+\phi)$ by $\phi, \theta+\phi$ respectively we see that the first term gives the equation:
$(2s-1)^2(2r)\phi^2-(2r-1)^2 2s(\theta+\phi)^2=0$. Then we obtain two possible Taylor series of $\phi$ in terms of $\theta$: 
\begin{align*}
\phi_1(\theta) & =\frac{(2r-1)\sqrt{s}\theta}{(2s-1)\sqrt{r}-(2r-1)\sqrt{s}} + \theta^2 E_1(\theta), \\
\phi_2(\theta) & =-\frac{(2r-1)\sqrt{s}\theta}{(2s-1)\sqrt{r}+(2r-1)\sqrt{s}}+\theta^2E_2(\theta).
\end{align*}
Use the implicit function theorem to show that $E_1(\theta)$ and $E_2(\theta)$ are analytic in $\theta$ in the neighborhood of $0$.
Hence in this case we have $6$ different solutions. 
\end{proof}

We have thus shown that the general solution of the quantum transport problem of a single qubit with cost matrix $C^Q = \tfrac{1}{2} \big(\I_{4} - S\big)$ is equivalent to solving a 6th degree polynomial equation with certain parameters. For some specific values of these parameters an explicit analytic solution can be given. This is discussed in the next subsection.

\subsection{Two isospectral density matrices of a single qubit}
\label{subsec:isospectral}
In view of unitary invariance \eqref{CQinvQOT} and the results of the previous section we can assume that  two isospectral qubits have the following form:
 $\rho^A = \rho(s,0)$ and $\rho^B = \rho(s,\theta)$ for some $s \in [0,1]$ and $\theta \in [0,2\pi)$.
\begin{theorem}\label{thmE1}
 For any $s \in [0,1]$ and $\theta \in [0,2\pi)$ we have
\begin{equation}\label{Tiso}
\rT^Q_{C^Q} \big(\rho(s,0),\rho(s,\theta) \big) = \Big( \tfrac{1}{2} -\sqrt{s(1-s)} \Big) \sin^2 (\theta/2).
\end{equation}
\end{theorem}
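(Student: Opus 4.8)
The plan is to feed the case $r=s$ into the semi‑analytic formula of Theorem~\ref{thmC3}, factor the resulting trigonometric condition \eqref{Phieq}, and evaluate the objective on the two resulting loci.

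First I would dispose of the boundary cases. If $s\in\{0,1\}$ then $\rho^A,\rho^B$ are pure, so \eqref{QOTrankone} gives $\rT^Q_{C^Q}(\rho^A,\rho^B)=\tfrac12(1-\tr\rho^A\rho^B)$; a one‑line computation with $\rho(s,0)$ and $\rho(s,\theta)$ gives $\tr\rho^A\rho^B=\cos^2(\theta/2)$, hence the value $\tfrac12\sin^2(\theta/2)$, which is the right‑hand side of \eqref{Tiso} since $\sqrt{s(1-s)}=0$. If $s=\tfrac12$ then $\rho(\tfrac12,0)=\rho(\tfrac12,\theta)=\tfrac12\I_2$, so both sides vanish. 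So assume $0<s<1$, $s\neq\tfrac12$, and put $a:=2s-1\in(-1,1)\setminus\{0\}$; note $1-a^2=4s(1-s)$, i.e.\ $\sqrt{1-a^2}=2\sqrt{s(1-s)}$.

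Now Theorem~\ref{thmC3} (with $r=s$) expresses $\rT^Q_{C^Q}(\rho(s,0),\rho(s,\theta))$ as the maximum of $g(s,s,\theta;\phi)=\tfrac14\big(\sqrt{1+a\cos\phi}-\sqrt{1+a\cos(\theta+\phi)}\big)^2$ over the set $\Phi(s,s,\theta)$ of solutions of \eqref{Phieq}. Since $|a|<1$ the denominators $1+a\cos\phi$, $1+a\cos(\theta+\phi)$ are strictly positive, so after cancelling the factor $a^2\neq0$ and clearing denominators, \eqref{Phieq} becomes equivalent to the product
\[
\big(\cos\phi - \cos(\theta+\phi)\big)\Big[\,a\big(\cos^2\psi + \cos^2(\theta/2)\big) + 2\cos\psi\cos(\theta/2)\,\Big] = 0, \qquad \psi := \phi + \tfrac{\theta}{2},
\]
which I would obtain from $\cos\phi+\cos(\theta+\phi)=2\cos\psi\cos(\theta/2)$, $\cos\phi-\cos(\theta+\phi)=2\sin\psi\sin(\theta/2)$, and the identity $\cos^2\psi\cos^2(\theta/2)-\sin^2\psi\sin^2(\theta/2)=\cos^2\psi+\cos^2(\theta/2)-1$. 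On the first locus ($\cos\phi=\cos(\theta+\phi)$) one has $g=0$. On the second locus, writing $p:=\cos\psi$, $c:=\cos(\theta/2)$, the quadratic $ap^2+2cp+ac^2=0$ gives $p=c(-1\pm\sqrt{1-a^2})/a$, and using $a(p^2+c^2)=-2pc$ one checks the cancellations $(1+a\cos\phi)+(1+a\cos(\theta+\phi))=2+2apc$ and $(1+a\cos\phi)(1+a\cos(\theta+\phi))=1-a^2$, so that $g=\tfrac12\big(1+apc-\sqrt{1-a^2}\big)=\tfrac12\big(1+c^2(-1\pm\sqrt{1-a^2})-\sqrt{1-a^2}\big)$; substituting $c^2=1-\sin^2(\theta/2)$ and $\sqrt{1-a^2}=2\sqrt{s(1-s)}$ yields $g_\pm=\tfrac12\sin^2(\theta/2)\pm\sqrt{s(1-s)}\cos^2(\theta/2)-\sqrt{s(1-s)}$, with $g_+=\big(\tfrac12-\sqrt{s(1-s)}\big)\sin^2(\theta/2)\ge g_-$.

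It remains to conclude $\max_{\phi\in\Phi(s,s,\theta)}g=g_+$. Since $\sqrt{s(1-s)}\le\tfrac12$ we have $g_+\ge0$, and $g_-\le g_+$, so every $\phi\in\Phi$ — lying on one of the two loci above — has $g\le g_+$; moreover the ``$+$'' root genuinely contributes a point of $\Phi$, because $p^2=c^2\,(1-\sqrt{1-a^2})/(1+\sqrt{1-a^2})\le c^2\le1$ (using $(1-\sqrt{1-a^2})(1+\sqrt{1-a^2})=a^2$), so there is an admissible $\psi\in[0,2\pi)$, hence $\phi\in[0,2\pi)$, realizing $g=g_+$. Therefore $\rT^Q_{C^Q}(\rho(s,0),\rho(s,\theta))=g_+=\big(\tfrac12-\sqrt{s(1-s)}\big)\sin^2(\theta/2)$. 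The only genuine subtlety is this last admissibility check (that the maximizing critical point has $|\cos\psi|\le1$) together with verifying no other critical point beats it; the trigonometric simplifications, although the bulk of the work, become routine once the substitution $\psi=\phi+\theta/2$ is made.
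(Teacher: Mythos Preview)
Your proof is correct and follows the same high-level strategy as the paper --- plug $r=s$ into Theorem~\ref{thmC3}, factor the critical-point equation \eqref{Phieq}, evaluate $g$ on each branch, and compare --- but your execution is genuinely different and more elementary. The paper passes to the complex substitution $z=e^{\bi\phi}$, $\zeta=e^{\bi\theta}$ and works with the degree-six polynomial \eqref{z6}; it then \emph{guesses} the target value \eqref{Tiso}, converts that guess into a polynomial constraint on $z$, and relies on \textsc{Mathematica} at two separate points to check that the resulting roots indeed satisfy \eqref{zeta4} and to evaluate $g$ on the remaining branch. Your real substitution $\psi=\phi+\theta/2$ exploits the $\phi\leftrightarrow-(\theta+\phi)$ symmetry directly, turning the factorization into the short computation $(\cos\phi-\cos(\theta+\phi))\big[a(p^2+c^2)+2pc\big]=0$ via $\cos\phi\cos(\theta+\phi)=p^2+c^2-1$; the product identity $(1+a\cos\phi)(1+a\cos(\theta+\phi))=1-a^2$ on the second locus then collapses $g$ without any guesswork or computer algebra. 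Your admissibility check $p_+^2=c^2(1-\sqrt{1-a^2})/(1+\sqrt{1-a^2})\le c^2\le1$ is exactly what is needed to ensure the maximizing branch is realized, and your handling of the degenerate cases $s\in\{0,1\}$ and $s=\tfrac12$ (where $a=0$ makes \eqref{Phieq} vacuous) is cleaner than folding them into the main argument. In short: same skeleton, but your version is self-contained and avoids the symbolic-computation steps the paper leans on.
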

\begin{proof}
Note first that if the states $\rho^A,\rho^B$ are pure, i.e. $s = 0$ or $s=1$, formula \eqref{Tiso} gives $\rT^Q_{C^Q} \big(\rho(s,0),\rho(s,\theta) \big) = \tfrac{1}{2} \sin^2 (\theta / 2)$, which agrees with \eqref{QOTrankone}. 

From now on we assume that that $\rho^A, \rho^B$ are not pure.  When $r = s$, \eqref{z6} simplifies to the following:
\begin{multline}
 (\zeta -1) (1-2 s)^2 \left(\zeta  z^2-1\right) \times  \\
\quad \times  \left[4 s (\zeta +1) \left(\zeta  z^2+1\right) z +(2 s-1) (z-1)^2 (\zeta  z-1)^2 \right]= 0. \label{zeta_iso}
\end{multline}

Eq.\ \eqref{zeta_iso} is satisfied when $z = \pm \zeta^{-1/2}$. This corresponds to $\phi_0 = -\theta/2$ or $\phi_0' = \pi - \theta/2$. Observe, however, that we have $g(s,s,\theta;\phi_0) = g(s,s,\theta;\phi_0') = 0$, so we can safely ignore $\phi_0, \phi_0' \in \Phi(s,s,\theta)$ in the maximum in  \eqref{defT0}.

Hence, we are left with a 4th order equation
\begin{align}\label{zeta4}
4 s (\zeta +1) \left(\zeta  z^2+1\right) z +(2 s-1) (z-1)^2 (\zeta  z-1)^2 = 0,
\end{align}
which
reads
\begin{equation}\label{phi4}
(2 s-1) \big[ 2 + \cos (\theta +2 \phi )+ \cos (\theta ) \big]  
+2 \big[ \cos (\theta +\phi )+ \cos (\phi ) \big] = 0.
\end{equation}
Now, observe that if $\phi$ satisfies  \eqref{phi4}, then so does $\phi' = -\phi - \theta$. This translates to the fact that if $z$ satisfies  \eqref{zeta4}, then so does $(z \zeta)^{-1}$. Furthermore, $g(s,s,\theta;\phi) = g(s,s,\theta;\phi')$. Hence, in the isospectral case we are effectively taking the maximum over just two values of $\phi$.

Let us now seek an angle $\phi_1 \in [0,2\pi)$ such that $g(s,s,\theta;\phi_1)$ equals the righthand side of  \eqref{Tiso}. The latter equation reads 
\begin{align*}
& \Big\{ (2 s-1) \big[\cos \left(\theta +\phi _1\right)+\cos \left(\phi _1\right)\big]
 -\big(2 \sqrt{s(1-s)}-1\big) \big(\cos (\theta )-1\big)+2\Big\}^2 \\
& \qquad\quad = 4 \big[(2 s-1) \cos \left(\phi _1\right)+1\big] \big[(2 s-1) \cos
   \left(\theta +\phi _1\right)+1\big].
\end{align*}
In terms of $z$ and $\zeta$, the above is equivalent to a 4th order polynomial equation in $z$, which can be recast in the following form:
\begin{align}\label{zeq}
\Big[ \zeta  (1-2 s) z^2+(\zeta +1) \big(2 \sqrt{s(1-s)}-1\big) z-2 s+1 \big]^2 = 0.
\end{align}
Hence,  \eqref{zeq} has two double roots:
\begin{multline*}
 z_1^{\pm} = \big[ 2 \zeta  (1-2 s) \big]^{-1} \bigg\{ (\zeta +1) \big( 1-2 \sqrt{s(1-s)} \, \big) \\
 \pm \sqrt{(\zeta +1)^2 \big(1-2 \sqrt{s(1-s)} \,\big)^2-4 \zeta  (1-2s)^2} \bigg\}.
\end{multline*}
Furthermore, one can check that $z_1^{-} = (\zeta z_1^{+})^{-1}$.

Now, it turns out that $z_1^{\pm}$ are also solutions to \eqref{zeta4}, as one can quickly verify using \textsc{Mathematica}~\cite{Mathematica}. We thus conclude that $\phi_1, \phi_1' \in \Phi(s,s,\theta)$.

We now divide the polynomial in \eqref{zeta4} by $(z-z_1^{+})(z-z_1^{-})$. We are left with the following quadratic equation
\begin{align*}
\zeta  \Big[ (2 s-1) \left(\zeta  z^2+1\right)+(\zeta +1) \big(2 \sqrt{(1-s) s}+1\big) z\Big] = 0.
\end{align*}
Its solutions are
\begin{multline*}
z_2^{\pm} = \big[ 2 \zeta  (1-2 s) \big]^{-1} \bigg\{ (\zeta +1) \big( 1+2 \sqrt{s(1-s)} \, \big) \\
\pm \sqrt{(\zeta +1)^2 \big(1+2 \sqrt{s(1-s)} \,\big)^2-4 \zeta  (1-2s)^2} \bigg\}.
\end{multline*}
Again, we have $z_2^{-} = (\zeta z_2^{+})^{-1}$, in agreement with the symmetry argument. Setting $z_2^+ \cv e^{\bi \phi_2}$ and $z_2^- \cv e^{\bi \phi_2'}$ we have $\phi_2, \phi_2' \in \Phi(s,s,\theta)$. Then we deduce
that
\begin{align*}
g(s,s,\theta;\phi_2) & = g(s,s,\theta;\phi_2')  = \tfrac{1}{4} \Big[ (1-6 \sqrt{(1-s) s} - \big(1+2 \sqrt{(1-s) s} \, \big) \cos (\theta ) \Big].
\end{align*}

Finally, we observe that
\begin{align*}
g(s,s,\theta;\phi_1) - g(s,s,\theta;\phi_2) = \sqrt{(1-s) s} \, \big(1+\cos (\theta ) \big) \geq 0.
\end{align*}
This shows that, for any $s \in (0,1)$, $\theta \in [0,2\pi)$,
\begin{align*}
\rT^Q_{C^Q} \big(\rho(s,0),\rho(s,\theta) \big) = g(s,s,\theta;\phi_1),
\end{align*}
and \eqref{Tiso} follows.
\end{proof}

Note that $g(s,s,\theta;\phi_2)$ can become negative for certain values of $s$ and $\theta$. This means that for such values
the set $\Phi$ of phases defined in Theorem \ref{thmC3}
reads,
 $\Phi(s,s,\theta) = \{\phi_0,\phi_0',\phi_1,\phi_1'\}$.

\subsection{An example where the supremum  \eqref{dualQOT1} is not achieved}\label{subsec:nonexisF}
Assume that $m=n=2$, $ C=C^Q$, $ \rho^A=\vert 0 \rangle \langle 0 \vert = \left[ \begin{smallmatrix}
1 & 0 \\ 0 & 0
\end{smallmatrix} \right]$ and $ \rho^B=\mathbb{I}_2/2$. 
Recall that in such a case, $\Gamma^Q(\rho^A,\rho^B)=\{\rho^A\otimes\rho^B\}$ and
\begin{eqnarray*}
 \rho^A\otimes \rho^B=\left[\begin{array}{rrrr}\frac{1}{2}&0&0&0\\0&\frac{1}{2}&0&0\\0&0&0&0\\0&0&0&0\end{array}\right].
\end{eqnarray*}
Hence $\rT^Q(\rho^A,\rho^B)=1/4$.
We can easily see that the supremum in~\eqref{dualQOT1} is not attained in this case.  Let $F$ be of the form \eqref{sigmaFform}.  Suppose that there exists $\sigma^A,\sigma^B\in\rS(\cH_2)$ such that $F\ge 0$ and $\rT_{C^Q}^Q(\rho^A,\rho^B)=\tr (\sigma^A\rho^A+\sigma^B\rho^B)$. As in the proof of Theorem \ref{dualQOT} we deduce that 
$\tr F( \rho^A\otimes \rho^B)=0$.  Hence the $(1,1)$ and $(2,2)$ entries of $F$ are zero.  Since $F\ge 0$ it follows that the first and the second row and column of $F$ are zero.
Observe next that the $(2,3)$ and $(3,2)$ entries of $F$ are $-1/2$. Hence such $\sigma^A,\sigma^B$ do not exist. 

Since $\rho^A$ is not positive definite and $\rho^B$ is positive definite, as pointed out in the proof of Proposition \ref{redprop}, one can replace $\rho^{A}$ by $\rho^{A'}=[1]\in\Omega_1$.    Then the dual problem for $\rho^{A'},\rho^B$ boils down to
\begin{align*}
& \sigma^{A'}=-a',\quad\sigma^B=\diag(-e,-g),\quad F=\diag(a'+e,a'+g+1/2)\ge 0,\\
& \max_{a'+e\ge 0, a'+g+1/2\ge 0} \Big( -a'-\frac{e+g}{2} \Big).
\end{align*}
Then the above maximum is $1/4$, achieved for $a'=-1/2 +t, e=1/2-t, g=-t$ for each $t\in\R$.

To summarize: the supremum of the dual problem to $\rT_{C^Q}^Q(\rho^A,\rho^B)$ is achieved at \emph{infinity},  while the supremum of the dual problem to $\rT_{C^Q}^Q(\rho^{A'},\rho^B)$ is achieved on an unbounded set.
\section{Diagonal states of a qutrit}\label{subsec:diagqut}
In this section we provide a closed formula for $\rT_{C^{Q}}^Q(\diag(\bs),\diag(\bt))$ for diagonal qutrits,
for many of pairs $\bs,\bt$.  
\begin{theorem}\label{diaggen3}  Let $\bs=(s_1,s_2,s_3)^\top,\bt=(t_1,t_2,t_3)^\top\in\R^3$ be  probability vectors.
Then the quantum optimal transport problem 
for diagonal qutrits 
is determined by the given formulas in the following cases:
\begin{enumerate}[(a)]
\item 
\begin{equation*}
\rT_{C^{Q}}^Q(\diag(\bs),\diag(\bt))=\frac{1}{2}\max_{p\in [3]}(\sqrt{s_p}-\sqrt{t_p})^2
\end{equation*}
if and only if the conditions \eqref{eqlowbdiagn1} hold for $n=3$.  
\item Suppose that 
 there exists a renaming of $1,2,3$ by $p,q,r$ 
 such that
\begin{equation}\label{condminb}
\begin{aligned}
& t_r\ge s_p+s_q \textrm{ and}\\
& \textrm{either } s_p\ge t_p>0, s_q\ge t_q>0 \; \textrm{ or } \; t_p\ge s_p>0, t_q\ge s_q >0.
\end{aligned}
\end{equation}
Then
\begin{equation}\label{QOTbchar}
\rT_{C^{Q}}^Q(\diag(\bs),\diag(\bt))=\frac{1}{2}\Big((\sqrt{s_p}-\sqrt{t_p})^2 +(\sqrt{s_q}-\sqrt{t_q})^2\Big).
\end{equation}
\item  Suppose that
there exists $\{p,q,r\}=\{1,2,3\}$ such that 
\begin{equation}\label{condmin1}
s_p>t_q>0, \quad t_p>s_q>0, \quad s_q+s_r\ge t_p,
\end{equation}
and
\begin{equation}\label{posa+b2}
\begin{aligned}
& 1 +\frac{\sqrt{t_q}}{\sqrt{s_q}}-\sqrt{\frac{s_p-t_q}{t_p-s_q}}\ge  0, \qquad 
1+\frac{\sqrt{s_q}}{\sqrt{t_q}}-\sqrt{\frac{t_p-s_q}{s_p-t_q}}\ge 0,\\
& \left(1+\frac{\sqrt{t_q}}{\sqrt{s_q}}-\sqrt{\frac{s_p-t_q}{t_p-s_q}} \,\right)
\left(1+\frac{\sqrt{s_q}}{\sqrt{t_q}}-\sqrt{\frac{t_p-s_q}{s_p-t_q}} \, \right)\ge 1,
\\
& \max \Big(\frac{s_q}{t_q},\frac{t_q}{s_q}\Big)\ge \max \Big(\frac{s_p-t_q}{t_p-s_q},\frac{t_p-s_q}{s_p-t_q} \Big).
\end{aligned}
\end{equation}
Then
\begin{equation}\label{QOPTd3char1}
\rT_{C^{Q}}^Q(\diag(\bs),\diag(\bt))= \frac{1}{2}\Big((\sqrt{s_q}-\sqrt{t_q})^2 +(\sqrt{s_p-t_q}-\sqrt{t_p-s_q})^2\Big).
\end{equation}
\item Assume that $\bs=(s_1,s_2,0)^\top,\bt=(t_1,t_2,t_3)^\top$ are probability vectors.  Then
\begin{equation}\label{QOTspqutritf}
\rT^Q_{C^Q}(\diag(\bs),\diag(\bt))=
\begin{cases}
\frac{1}{2}\big((\sqrt{t_1}-\sqrt{t_2})^2+t_3\big), & \textrm{ if } s_1\ge t_2 \textrm{ and } s_2\ge t_1,\\
\frac{1}{2}\big((\sqrt{t_1}-\sqrt{s_1})^2+t_3\big), & \textrm{ if } s_1< t_2 ,\\
\frac{1}{2}\big((\sqrt{t_2}-\sqrt{s_2})^2+t_3 \big), & \textrm{ if } s_2< t_1.
\end{cases}
\end{equation}
If $\bs=(s_1,s_2,s_3)^\top,\bt=(t_1,t_2,0)^\top$, then formula \eqref{QOTspqutritf} holds after the swapping $s_i \leftrightarrow t_i$.
\end{enumerate}
\end{theorem}
\begin{proof} (a) This follows from Theorem \ref{lowbdTdiagn}.

\noindent
(b)   Suppose that the condition \eqref{condminb} holds.  By relabeling the coordinates and interchanging $\bs$ and $\bt$ if needed we can assume the conditions \eqref{condminb} are satisfied with $p=1,q=2, r=3$:
\begin{equation*}
s_1\ge t_1>0,\quad s_2\ge t_2>0, \quad t_3\ge s_1+s_2.
\end{equation*} 
Hence 
\begin{equation}\label{Bopt}
X^\star=\begin{bmatrix}0&0&s_1\\0&0&s_2\\t_1&t_2&t_3-(s_1+s_2)\end{bmatrix}
\in\Gamma^{cl}(\bs,\bt).
\end{equation}
We claim that the conditions \eqref{condminb} yield that $X^\star$ is a minimizing matrix for 

\noindent
$\rT^Q_{C^Q}(\diag(\bs),\diag(\bt))$ as given in \eqref{clqotdiagdm0}.  To show that we use the complementary conditions in Lemma \ref{compcondlem}.
Let $R^\star\in \Gamma^Q(\diag(\bs),\diag(\bt))$ be the matrix induced by $X^\star$ of the form
described in part (a) of Lemma \ref{diaglemobs}.   That is, the diagonal entries of $R^\star$ 
are $R^\star_{(i,j)(i,j)}=x^\star_{ij}$ with additional nonnegative entries:
$R^\star_{(i,j)(j,i)}=\sqrt{x^\star_{ij}x^\star_{ji}}$ for $i\ne j$. 
Clearly, $R^\star $ is a direct sums of $3$ submatrices of order $1$ and $3$ of order $2$ as above.    Let  $F^\star$ be defined as in Lemma~\ref{compcondlem} with the following parameters:
\begin{equation}\label{ab3id1}
\begin{aligned}
& a_{1}^\star=\frac{1}{2}\Big(\frac{\sqrt{t_1}}{\sqrt{s_1}}-1\Big), \qquad b_{1}^\star=\frac{1}{2}\Big(\frac{\sqrt{s_1}}{\sqrt{t_1}}-1\Big),\\
& a_{2}^\star=\frac{1}{2}\Big(\frac{\sqrt{t_2}}{\sqrt{s_2}}-1\Big), \qquad b_{2}^\star=\frac{1}{2}\Big(\frac{\sqrt{s_2}}{\sqrt{t_2}}-1\Big),\\
& a_3^\star=b_3^\star=0.
\end{aligned}
\end{equation}
We claim that  the conditions \eqref{condminb} yield that $F^\star$ is positive semidefinite.  
We verify that the three blocks of size one and the three  blocks of size two 
of $F^\star$ are positive semidefinite.  The condition $a_i^\star+b_i^\star\ge 0$ for $i\in[3]$ is straightforward.  The conditions for $M_{12}^\star$ and $M_{13}^\star$ are straightforward.  We now show that $M_{12}^\star$ is positive semidefinite.
First note that as $s_1\ge t_1$ and $s_2\ge t_2$ we get that $b_1^\star\ge 0$ and $b_2^\star\ge 0$.  Clearly $a_1^\star>-1/2$ and $a_2^\star>-1/2$.  Hence the diagonal entries of  $M_{12}^\star$ are positive.  It is left to show that $\det M_{12}^\star\ge 0$.  Set $u=\sqrt{t_1}/{\sqrt{s_1}}\le 1$ and $v=\sqrt{s_2}/{\sqrt{t_2}}\ge 1$.
Then
\begin{align*}
2(a_1^\star+b_2^\star+1/2)=u+v-1, && 2(a_2^\star+b_1^\star +1/2)=1/u+1/v-1,
\end{align*}
\begin{align*}
4\det M_{12}^\star & =(u+v-1)(1/u+1/v-1)-1 \\
& =\big(1/(uv\big) \big)\big(u+v-1)(u+v-uv)-uv\big) \\
& =\big(1/(uv)\big) (u+v)(1-u)(v-1)\ge 0.
\end{align*}
We next observe that equalities \eqref{compcond} hold.  The first three equalities hold
as $x_{11}^\star=x_{22}^\star=(a_3^\star +b_3^\star)=0$.  The equality of $i=1,j=2$ holds as $x_{12}^\star=x_{21}^\star=0$.  The equalities for $i=1, j=3$ and $i=2,j=3$ follow from the following equalities:
\begin{align*}
x_{13}^\star(a_1^\star+b_3^\star +1/2)+x_{31}^\star(a_3^\star+b_1^\star +1/2)=
\tfrac{1}{2}\big(s_1\tfrac{\sqrt{t_1}}{\sqrt{s_1}}+t_1\tfrac{\sqrt{s_1}}{\sqrt{t_1}}\big)=\sqrt{s_1t_1}=\sqrt{x_{13}^\star x_{31}^\star},\\
x_{23}^\star(a_2^\star+b_3^\star +1/2)+x_{32}^\star(a_3^\star+b_2^\star +1/2)=
\tfrac{1}{2}\big(s_2\tfrac{\sqrt{t_2}}{\sqrt{s_2}}+t_2\tfrac{\sqrt{s_2}}{\sqrt{t_2}}\big)=\sqrt{s_2t_2}=\sqrt{x_{23}^\star x_{32}^\star}.
\end{align*}
Hence $\tr R^\star F^\star=0$ and $X^\star$ is a minimizing matrix.  Therefore \eqref{QOTbchar} holds for $p=1$, $q=2$.

\bigskip

\noindent
(c)  Suppose that the condition \eqref{condmin1} holds.  By relabeling the coordinates we can assume the conditions \eqref{condmin1} are satisfied with $p=1,q=2, r=3$:
\begin{equation*}
s_1>t_2,\quad t_1> s_2, \quad s_2+s_3-t_1\ge 0.
\end{equation*} 
Hence 
\begin{equation}\label{A2opt}
X^\star=\begin{bmatrix}0&t_2&s_1-t_2\\s_2&0&0\\t_1-s_2&0&s_2+s_3-t_1\end{bmatrix}\in\Gamma^{cl}(\bs,\bt).
\end{equation}
We claim that the conditions \eqref{posa+b2} yield that $X^\star$ is a minimizing matrix for 

\noindent
$\rT^Q_{C^Q}(\diag(\bs),\diag(\bt))$ as given in \eqref{clqotdiagdm0}.  To show this we use the complementary conditions in Lemma \ref{compcondlem}.
Let $R^\star\in \Gamma^Q(\diag(\bs),\diag(\bt))$ be the matrix induced by $X^\star$ of the form
described in part (a) of Lemma \ref{diaglemobs}.   
Recall that $R^\star $ is a direct sum of $3$ submatrices of order $1$ and $3$ of order $2$ as above.
 Let  $F^\star$ correspond to
\begin{equation}\label{abstarceq}
\begin{aligned}
a^\star_1=\frac{1}{2}\Big(\frac{\sqrt{t_1-s_2}}{\sqrt{s_1-t_2}}-1\Big), && a_2^\star= 
\frac{1}{2}\Big(\frac{\sqrt{t_2}}{\sqrt{s_2}}-\sqrt{\frac{s_1-t_2}{t_2-s_1}} \, \Big),&& a_3^\star=0,\\
b^\star_1=\frac{1}{2}\Big(\frac{\sqrt{s_1-t_2}}{\sqrt{t_1-s_2}}-1\Big), &&
b_2^\star=\frac{1}{2}\Big(\frac{\sqrt{s_2}}{\sqrt{t_2}}-\sqrt{\frac{t_1-s_2}{s_1-t_2}} \,\Big), && b_3^\star=0.
\end{aligned}
\end{equation}
We claim that  \eqref{posa+b2} yield that $F^\star$ is positive semidefinite.  
We verify that the three blocks of size one and the three blocks of size two matrices of $F^\star$ are positive semidefinite.  The condition $a_1^\star+b_1^\star\ge 0$ is straightforward.
To show the condition $a_2^\star+b_2^\star\ge 0$ we argue as follows.
Let 
\begin{equation*}
u=\frac{\sqrt{t_1}}{\sqrt{s_1}}, \quad v=\sqrt{\frac{s_1-t_2}{t_2-s_1}}.
\end{equation*}
Then $2(a_2^\star+b_2^\star)=u+1/u-(v+1/v)$.  The fourth condition of \eqref{posa+b2}
is $\max(u,1/u)\ge \max(v,1/v)$.  As $w+1/w$ increases on $[1,\infty)$ we deduce that 
$a_2^\star+b_2^\star\ge 0$.  Clearly $a_3^\star+b_3^\star=0$.  
We now show that the matrices \eqref{defMij} are positive semidefinite, where the last three inequalities follow from the first three inequalities of \eqref{posa+b2}:
\begin{align*}
& 2(a_1^\star+b_2^\star +1/2)=\frac{\sqrt{s_2}}{\sqrt{t_2}}>0, \qquad\qquad 2(a_2^\star+b_1^\star +1/2)=\frac{\sqrt{t_2}}{\sqrt{s_2}}>0,\\
& (a_1^\star+b_2^\star +1/2)(a_2^\star+b_1^\star +1/2)-1/4=0,\\
& 2(a_1^\star+b_3^\star +1/2)=\frac{\sqrt{t_1-s_2}}{\sqrt{s_1-t_2}}>0, \qquad
 2(a_3^\star+b_1^\star+1/2)=\frac{\sqrt{s_1-t_2}}{\sqrt{t_1-s_2}}>0,\\
& (a_1^\star+b_3^\star +1/2)(a_3^\star+b_1^\star +1/2)-1/4=0,\\
& 2(a_2^\star+b_3^\star +1/2)=\frac{\sqrt{s_2}}{\sqrt{t_2}}-\sqrt{\frac{t_1-s_2}{s_1-t_2}}+1\ge 0,\\
& 2(a_3^\star+b_2^\star +1/2)=\frac{\sqrt{t_2}}{\sqrt{s_2}}-\sqrt{\frac{s_1-t_2}{t_1-s_2}}+1\ge 0,\\
& (a_2^\star+b_3^\star +1/2)(a_3^\star+b_2^\star +1/2)-1/4\ge 0.
\end{align*}

Moreover, the conditions \eqref{compcond} hold: as $x_{11}^\star=x_{22}^\star= a_3^\star+b^\star_3=0$ the first three conditions of \eqref{compcond} hold, and as $x_{23}^\star=x_{32}^\star=0$ the second conditions of  \eqref{compcond} for $p=2,q=3$ trivially hold.  The other two conditions follow from the following equalities:
\begin{align*}
& x_{12}^\star(a_{1}^\star +b_{2}^\star +1/2) + x_{21}^\star(a_{2}^\star +b_{1}^\star+1/2) -\sqrt{x_{12}^\star x_{21}^\star} \\
& \hspace*{3.5cm }=
t_2\frac{\sqrt{s_2}}{2\sqrt{t_2}}+s_2\frac{\sqrt{t_2}}{2\sqrt{s_2}}-\sqrt{t_2 s_2}=0,\\
& x_{13}^\star(a_{1}^\star +b_{3}^\star +1/2) + x_{31}^\star(a_{3}^\star +b_{1}^\star+1/2) -\sqrt{x_{13}^\star x_{31}^\star}\\
& \hspace*{3.5cm }=
(s_1-t_2)\frac{\sqrt{t_1-s_2}}{2\sqrt{s_1-t_2}}+s_2\frac{\sqrt{t_2}}{2\sqrt{s_2}}-\sqrt{(s_1-t_2)(t_1- s_2)}=0.
\end{align*}
$\tr F^\star R^\star=0$.   Therefore
\begin{multline*}
\rT^Q_{C^Q}\big(\diag(\bs),\diag(\bt)\big)=\tr C^Q R^\star\\
= \frac{1}{2}\big(t_2+s_2+(s_1-t_2)+(t_1-s_2)\big)-\sqrt{t_2 s_2}-\sqrt{(s_1-t_2)(t_1-s_2)}.
\end{multline*}
This proves \eqref{QOPTd3char1}.

\bigskip

\noindent
(d)  Observe that the third row of every matrix in $\Gamma^{cl}(\bs,\bt)$ is a zero row.  Let $\bs'=(s_1,s_2)^\top$.  Thus $\Gamma^{cl}(\bs',\bt)$ is obtained from $\Gamma^{cl}(\bs,\bt)$ by deleting the third row in each matrix in $\Gamma^{cl}(\bs,\bt)$.    Proposition \ref{redprop} yields that 
\begin{equation*}
\rT_{C^Q}^Q(\diag(\bs),\diag(\bt))=\rT_{C^Q_{2,3}}^Q(\diag(\bs'),\diag(\bt)).
\end{equation*}
(See Lemma \ref{clqotdiagdm} for the definition of $C_{2, 3}^Q$.)   We use now the minimum characterization of $\rT_{C^Q_{2,3}}^Q(\diag(\bs'),\diag(\bt))$ given in \eqref{clqotdiagdm0}.  Assume that the minimum is achieved for $X^\star=[x_{il}^\star]\in\Gamma^{cl}(\bs',\bt), i\in[2],l\in[3]$.  We claim that either $x_{11}^\star=0$ or $x_{22}^\star=0$.

Let $Y=[x_{il}^\star], i,l\in[2]$.  Suppose first that $Y=0$.  Then $t_1=t_2=0$ and $t_3=1$.  So $\diag(\bt)$ is a rank-one matrix and $\tr \big( \diag(\bs)\diag(\bt) \big)=0$.  The equality  \eqref{QOTrankone} yields that $\rT_{C^Q}^Q \big(\diag(\bs),\diag(\bt) \big)=1$.  Clearly, $s_1\ge t_2=0, s_2\ge t_1=0$.  Hence \eqref{QOTspqutritf} holds.

Suppose second that $Y\ne 0$. Then $t_1+t_2$, the sum of the entries of $Y$, is positive.  Using continuity arguments it is enough to consider the case $t_1,t_2,t_3>0$. 
Denote by $\Gamma'$ the set of all matrices $X=[x_{il}] \in \Gamma^{cl}(\bs',\bt)$ such that $x_{i3}=x_{i3}^\star$ for $i=1,2$.   Let $f$ be defined by \eqref{deff(X)}.
Clearly $\min_{A\in\Gamma'} f(A)=f(Y)$.  We now translate this minimum to the minimum problem we studied above.

Let $Z=\frac{1}{t_1+t_2} Y$.  The vectors corresponding to the  row sums and the column sums $Z$ are the probabilty vectors $\hat \bs = (\hat s_1,\hat s_2)^\top$
 and 
$\hat \bt=\frac{1}{t_1+t_2}(t_1,t_2)^\top$ respectively.  Consider the minimum problem
$\min_{W\in \Gamma^{cl}(\hat\bs,\hat\bt)} f(W)$.  The
 proof of Lemma \ref{Pstalphaform}
 yields that this minimum is achieved at $W^\star$ which has at least one zero diagonal element.  Hence $Y$ has at least one zero diagonal element.  

Assume first that $Y$ has two zero diagonal elements.  Then $X^\star=\begin{bmatrix}0&t_2&s_1-t_2\\t_1&0&s_2-t_1\end{bmatrix}$.
This corresponds to the first case of  \eqref{QOTspqutritf}.  It is left to show that $X^\star$ is a minimizing matrix.  Using the continuity argument we may assume that  $s_1>t_2, s_2>t_1$.  Let $B\in \R^{2\times 3}$ be a nonzero matrix such that $X^\star+cB\in \Gamma^{cl}(\bs',\bt)$ for $c\in[0,\varepsilon]$ for some small positive $\varepsilon$.  Then $B=\begin{bmatrix}a&-b&-a+b\\-a&b&a-b\end{bmatrix}$, where $a,b\ge 0$ and $a^2+b^2>0$. It is clear that
$f(X^\star)<f(X+cB)$ for each $c\in (0,\varepsilon]$.  This proves the first case of \eqref{QOTspqutritf}.

Assume second that $x_{11}^\star=0$ and $x_{22}^\star>0$.  Observe that $x_{21}^\star=t_1>0$.
We claim that $x_{13}^\star=0$.  Indeed, suppose that it is not the case.  Let $B=\begin{bmatrix}0&1&-1\\0&-1&1\end{bmatrix}$.  Then $X^\star + cB\in\Gamma^{cl}(\bs',\bt)$ for $c\in[0,\varepsilon]$ for some positive $\varepsilon$.
Clearly $f(X^\star+cB)<f(X^\star)$ for $c\in(0,\varepsilon]$.  Thus contradicts the minimality of $X^\star$.  Hence $x_{13}^\star=0$.  Therefore $X^\star=\begin{bmatrix}0&s_1&0\\t_1&t_2-s_1&t_3\end{bmatrix}$.  This corresponds to the second case of \eqref{QOTspqutritf}.

The third case is when $x_{11}^\star >0$ and $x_{22}^\star=0$. We show, as in the second case, that $x_{23}^\star=0$.  Then $X^\star=\begin{bmatrix}t_1-s_2&t_2&t_3\\s_2&0&0\end{bmatrix}$.  This corresponds to the third case of \eqref{QOTspqutritf}.

The case $\bs=(s_1,s_2,s_3)^\top,\bt=(t_1,t_2,0)^\top$ is completely analogous, hence the proof is complete.
\end{proof}

Basing on the numerical studies we conjecture that the cases (a)--(d) exhaust the parameter space $\Pi_3 \times \Pi_3$. Nevertheless, we include for completeness an analysis of the quantum optimal transport $\rT^Q_{C^Q}(\diag(\bs),\diag(\bt))$ under the assumption that this is not the case. The employed techniques might prove useful when studying more general qutrit states or diagonal ququarts.

\begin{proposition}
Let $O\subset\Pi_3\times \Pi_3$ be the set of pairs $\bs,\bt$, which do not meet neither of conditions (a)--(d) from Theorem  \ref{diaggen3}. Suppose that $O$ is nonempty. Then
each minimizing $X^\star$ in the characterization \eqref{clqotdiagdm0} of $\rT^Q_{C^Q}(\diag(\bs),\diag(\bt))$ has zero diagonal. 
Let $O'\subset O$ be an open dense subset of $O$ such that for each $(\bs,\bt)\in O'$ and each triple $\{i,j,k\}=[3]$ the  inequalities $s_p\ne t_q$ and $s_p+s_q\ne t_r$ hold.  Assume that $(\bs,\bt)\in O'$.
The set of matrices in $\Gamma^{cl}(\bs,\bt)$ with zero diagonal is an interval spanned by two distinct extreme points $E_1,E_2$, which have exactly five positive off-diagonal elements.
Let $Z(u)=uE_1+(1-u)E_2$ for $u\in[0,1]$.  Then the minimum of the function $f(Z(u)),  u\in[0,1]$, where $f$ is defined by \eqref{deff(X)},  is attained at a unique point $u^\star\in(0,1)$.  The point $u^\star$ is the unique solution in the interval $(0,1)$ to a polynomial equation of degree at most $12$.  The matrix $X^\star=Z(u^\star)$ is the minimizing matrix for the second minimum problem in \eqref{clqotdiagdm0}, and $\rT_{C^{Q}}^Q(\diag(\bs),\diag(\bt))=f(X^\star)$.
\end{proposition}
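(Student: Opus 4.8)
The plan is to push everything, via the complementarity conditions of Lemma~\ref{compcondlem}, onto a single edge of the transportation polytope and then solve a one-variable problem. Note first that on $O$ all coordinates of $\bs,\bt$ are positive, for otherwise case~(d) of Theorem~\ref{diaggen3} applies. Recall from \eqref{clqotdiagdm0} that $\rT^Q_{C^Q}(\diag(\bs),\diag(\bt))=\min_{X\in\Gamma^{cl}(\bs,\bt)}f(X)$ with $f$ as in \eqref{deff(X)}; for $m=n=3$ we have $f(X)=\tfrac12\sum_{1\le i<j\le3}(\sqrt{x_{ij}}-\sqrt{x_{ji}})^2$, so $f$ is independent of the diagonal of $X$, and on the zero-diagonal slice $\Gamma_0=\{X\in\Gamma^{cl}(\bs,\bt):x_{11}=x_{22}=x_{33}=0\}$, where $\sum_{i\ne j}x_{ij}=1$, it reduces to $f(X)=\tfrac12-\sum_{1\le i<j\le3}\sqrt{x_{ij}x_{ji}}$.

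The zero-diagonal claim is the main obstacle, and I would prove it by contradiction. By the last assertion of Lemma~\ref{compcondlem} there is a minimizer with at most one nonzero diagonal entry; if such a minimizer $X^\star$ had exactly one, say $x^\star_{kk}>0$ and $x^\star_{ii}=x^\star_{jj}=0$, then substituting the marginal equations $\sum_\ell x^\star_{i\ell}=s_i$, $\sum_\ell x^\star_{\ell j}=t_j$ together with the extremality relations of Lemma~\ref{compcondlem}(a)--(c) into a finite case distinction on the support of the off-diagonal block would force $(\bs,\bt)$ into one of the inequality regimes \eqref{eqlowbdiagn1}, \eqref{condminb}, \eqref{condmin1}--\eqref{posa+b2}, \eqref{QOTspqutritf}; that is, into case (a), (b), (c) or (d) of Theorem~\ref{diaggen3}, contradicting $(\bs,\bt)\in O$. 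This is precisely the converse of the verifications in the proof of Theorem~\ref{diaggen3}, whose model minimizers \eqref{Bopt}, \eqref{A2opt} carry a single positive diagonal entry. Hence $\Gamma_0$ contains a minimizer; and since $f$ is convex its set of minimizers is convex, so any minimizer with a positive diagonal entry is joined by a segment of minimizers to a zero-diagonal one, and running the reductions of Lemma~\ref{compcondlem}(b) and the same case analysis along this segment shows such a configuration again forces one of (a)--(d). The technical heart of the proposition is exactly this complete enumeration of extremal configurations of $\Gamma^{cl}(\bs,\bt)$.

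Next I would determine $\Gamma_0$ on $O'$. The affine hull of $\Gamma^{cl}(\bs,\bt)$ has dimension $(3-1)^2=4$, and $x_{11},x_{22},x_{33}$ are linearly independent functionals on the space of matrices with vanishing row and column sums, so $\Gamma_0$ lies in a $1$-dimensional affine flat; being an intersection of facets of $\Gamma^{cl}(\bs,\bt)$ it is a face, hence a (possibly degenerate) segment. If $\Gamma_0$ were a single point it would be a vertex of $\Gamma^{cl}(\bs,\bt)$ whose support is a forest with at most four edges (two more off-diagonal entries vanish), forcing a proper splitting of $\bs,\bt$ into sub-sums, which the defining inequalities of $O'$ exclude. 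Thus $\Gamma_0=[E_1,E_2]$ with $E_1\ne E_2$, each $E_r$ a vertex of $\Gamma^{cl}(\bs,\bt)$; by genericity its support is a spanning tree of $K_{3,3}$, i.e.\ exactly five positive entries, all necessarily off-diagonal. The off-diagonal position vanishing at $E_1$ is distinct from the one vanishing at $E_2$ (otherwise $\Gamma_0$ would be $0$-dimensional), so every off-diagonal entry of $Z(u)=uE_1+(1-u)E_2$ is strictly positive for $u\in(0,1)$.

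Finally, on $(0,1)$ write $g(u)=f(Z(u))=\tfrac12-h(u)$ with $h(u)=\sum_{1\le i<j\le3}\sqrt{Q_{ij}(u)}$, where $Q_{ij}(u)=Z(u)_{ij}Z(u)_{ji}$ is quadratic in $u$ and positive on $(0,1)$. Concavity of $\sqrt{xy}$ makes $h$ concave, and strictly so because the term attached to the pair whose entry vanishes at an endpoint has non-proportional affine factors; the same observation gives $h'(0^+)=+\infty$, $h'(1^-)=-\infty$, hence $g'(0^+)=-\infty$, $g'(1^-)=+\infty$. Therefore $g$ attains its minimum at a unique $u^\star\in(0,1)$, so $X^\star=Z(u^\star)$ is the unique minimizer of $f$ on $\Gamma_0$, and by the previous step the unique minimizer on $\Gamma^{cl}(\bs,\bt)$, whence $\rT^Q_{C^Q}(\diag(\bs),\diag(\bt))=f(X^\star)$. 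The stationarity equation $h'(u)=0$ reads $Q_{12}'\sqrt{Q_{13}Q_{23}}+Q_{13}'\sqrt{Q_{12}Q_{23}}+Q_{23}'\sqrt{Q_{12}Q_{13}}=0$; isolating one radical and squaring twice removes all square roots and produces
\begin{equation*}
\bigl((Q_{12}')^{2}Q_{13}Q_{23}-(Q_{13}')^{2}Q_{12}Q_{23}-(Q_{23}')^{2}Q_{12}Q_{13}\bigr)^{2}=4\,(Q_{13}')^{2}(Q_{23}')^{2}\,Q_{12}^{2}\,Q_{13}Q_{23},
\end{equation*}
a polynomial equation in $u$ of degree at most $12$; $u^\star$ is the unique root of it in $(0,1)$ satisfying the original unsquared equation, uniqueness in $(0,1)$ being guaranteed by the strict convexity of $g$ together with its infinite one-sided slopes at $0$ and $1$, so that the roots introduced by squaring are rejected by a sign check.
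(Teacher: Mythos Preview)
Your skeleton is right, and the parts about $\Gamma_0$ being a one-dimensional face, the endpoint analysis, the strict convexity of $g$, and the reduction to a degree-$12$ polynomial are all essentially as in the paper. The genuine gap is in the zero-diagonal step, which you dismiss as ``precisely the converse of the verifications in the proof of Theorem~\ref{diaggen3}''. It is not.

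Specifically, suppose a minimizer $X^\star$ has exactly one positive diagonal entry, say $x_{33}^\star>0$, $x_{11}^\star=x_{22}^\star=0$. You implicitly assume $X^\star$ then matches one of the patterns \eqref{Bopt} or \eqref{A2opt}, i.e.\ that at least one off-diagonal pair $x_{ij}^\star=x_{ji}^\star=0$. But nothing so far excludes the case where \emph{all six} off-diagonal entries of $X^\star$ are positive, and this case is not a ``converse'' of anything in Theorem~\ref{diaggen3}. The paper handles it by invoking the second condition of \eqref{compcond1} for the two perturbations $(x_{12}^\star,x_{13}^\star,x_{32}^\star,x_{33}^\star)\to(x_{12}^\star-v,x_{13}^\star+v,x_{32}^\star+v,x_{33}^\star-v)$ and the analogous one through $x_{21}^\star$, obtaining $1+x=y+z$ and $1+1/x=1/y+1/z$ with $x=\sqrt{x_{21}^\star/x_{12}^\star}$, etc.; multiplying these gives $x+1/x=y/z+z/y$, hence $x=y/z$ or $x=z/y$, which forces $x_{23}^\star=x_{32}^\star$ (or the symmetric conclusion). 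One can then shift mass between $(x_{22}^\star,x_{23}^\star,x_{32}^\star,x_{33}^\star)$ without changing $f$, producing a minimizer with \emph{two} positive diagonal entries. That case, in turn, is not dispatched by Lemma~\ref{compcondlem}(b) alone: one must use (b) to pin down the dual variables $a_2^\star=a_3^\star=b_2^\star=b_3^\star=0$, then use $M_{12}^\star\ge0$ and the dual value to obtain $\rT^Q_{C^Q}\le\tfrac12(\sqrt{s_1}-\sqrt{t_1})^2$, contradicting $(\bs,\bt)\notin$ case~(a). Neither step is a reversal of the forward verifications.

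A secondary point: your extension from ``some minimizer has zero diagonal'' to ``every minimizer has zero diagonal'' via a segment argument is unnecessary and, as written, circular. Once the case analysis above is carried out, it applies directly to any minimizer (the dual $F^\star$ exists because $\bs,\bt>\0$, and complementary slackness holds for every primal optimum), so the quantifier is free.
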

\begin{proof}
Assume first that the set $O\subset \Pi_3\times \Pi_3$ is  nonempty and satisfies the conditions \emph{(i)-(iv)}.
Combine Theorem \ref{lowbdTdiagn} with part (a) of the theorem to deduce that if 
the conditions \eqref{eqlowbdiagn1} do hold for $n=3$ then 
\begin{equation}\label{negcond(a)}
\rT^Q_{C^Q}(\diag(\bs),\diag(\bt))>\max_{p\in[3]} \frac{1}{2}(\sqrt{s_p}-\sqrt{t_p})^2.
\end{equation}
In view of our assumption the above inequality holds.
We first observe that $s_p\ne t_p$ for each $p\in[3]$.  Assume to the contrary that $s_p=t_p$.  Without loss of generality we can assume that $s_3=t_3$.  Assume that in addition $s_q=t_q$ for some $q\in[2]$.  Then $\bs=\bt$ and 
\begin{equation*}
\rT_{C^{Q}}^Q(\diag(\bs),\diag(\bt))=\frac{1}{2}\max_{p\in [3]}(\sqrt{s_p}-\sqrt{t_p})^2=0
\end{equation*}
This contradicts \eqref{negcond(a)}. Hence there exists $q\in[2]$ such that $s_q>t_q$ for $q\in[2]$.  Without loss of generality we can assume that $s_2>t_2$, therefore $s_1<t_1$, as $s_1+s_2=t_1+t_2=1-s_3=1-t_3$.   Hence for $Y=\begin{bmatrix}s_1&0\\t_1-s_1&t_2\end{bmatrix}$ we have 
$X=Y\oplus [s_3]\in \Gamma^{cl}(\bs,\bt)$.  Recall that $\bs,\bt>\0$.
We replace $Y$ by $Y^\star=Y+u^\star\begin{bmatrix}-1&1\\1&-1\end{bmatrix}$ such that $u^\star>0, Y^\star\ge 0$ and one of the diagonal elements of $Y^\star$ is zero.  
By relabeling $\{1,2\}$ if necessary we can assume that $Y^\star=\begin{bmatrix}0& s_1\\t_1&t_2-s_1\end{bmatrix}$ So $t_2\ge s_1$ and $X^\star=Y^\star\oplus [s_3]\in\Gamma^{cl}(\bs,\bt)$.  The minimal characterization  \eqref {clqotdiagdm0} of $\rT^Q_{C^Q}(\diag(\bs),\diag(\bt))$ yields
\begin{eqnarray*}
\rT^Q_{C^Q}(\diag(\bs),\diag(\bt))\le f(X^\star)=\frac{1}{2}(\sqrt{s_1}-\sqrt{t_1})^2.
\end{eqnarray*}
This contradicts \eqref{negcond(a)}.  

As $\bs,\bt>\0$ there exists a maximizing matrix $F^\star$ to the dual problem of the form given by Lemma  \ref{compcondlem}.
Let $X^\star$ be the corresponding minimizing matrix. 
We claim that $X^\star$ has zero diagonal.  Assume first that $X^\star$ has a positive diagonal.  Then the arguments in part (b) of Lemma  \ref{compcondlem} yield that $X^\star$ is a symmetric matrix. Thus $\bs=\bt$, and this contradicts~\eqref{negcond(a)}.  

Assume second that $X^\star$ has two positive diagonal entries.  By renaming the indices we can assume that $x_{11}^\star =0$, $x_{22}^\star, x_{33}^\star>0$.
Part (b) of Lemma  \ref{compcondlem} and the arguments of its proof yield that we can assume that $a_2^\star=a_3^\star=b_2^\star=0$.  Let $u^\star=a_1^\star+1/2, v^\star=b_1^\star+1/2$.  As $M_{12}^\star$ is positive semidefinite we have the inequalities: $u^\star\ge 0, v^\star\ge 0, u^\star v^\star\ge 1/4$.  Hence $x^\star>0, y^\star>0$.   Recall that $F^\star$ is a maximizing matrix for the dual problem \eqref{dualQOT1}.  Hence
\begin{align*}
\rT^Q_{C^Q}\big(\diag(\bs),\diag(\bt)\big) & = -(u^\star -1/2)s_1-(v^\star-1/2)t_1 \\
& = -u^\star s_1-v^\star t_1+(s_1+t_1)/2 \\
& \le -u^\star s_1-t_1/(4u^\star) +(s_1+t_1)/2\\
& \leq -\sqrt{s_1t_1} +(s_1+t_1)/2=(\sqrt{s_1}-\sqrt{t_1})^2/2.
\end{align*}
This contradicts \eqref{negcond(a)}.  

We now assume that $X^\star$ has one positive diagonal entry.  Be renaming the indices $1,2,3$ we can assume that $x_{11}^\star=x_{22}^\star=0, x_{33}^\star>0$.
The conditions \eqref{compcond} yield that $a_3^\star+b_3^\star=0$.  Since we can choose $b_3^\star=0$ we assume that $a_3^\star=b_3^\star=0$.

Let us assume, case (A1), that $X^\star$ has six positive off-diagonal entries.  We first claim that either $x^\star_{13}=x^\star_{31}$ or $x^\star_{23}=x^\star_{32}$.  (Those are equivalent conditions if we interchange the indices $1$ and $2$.)  We deduce these conditions and an extra condition using the second conditions of \eqref{compcond1}.  First we consider $x^\star_{12}, x_{13}^\star, x_{32}^\star,x_{33}^\star$, that is $i=p=3$, $j=1, q=2$.  By replacing these entries by $x^\star_{12}-v, x_{13}^\star+v, x_{32}^\star+v,x_{33}^\star-v$
we obtain the equalities
\begin{align*}
1 +x=y+z, \qquad x=\frac{\sqrt{x_{21}^\star}}{\sqrt{x_{12}^\star}}, \quad y=\frac{\sqrt{x_{31}^\star}}{\sqrt{x_{13}^\star}}, \quad z=\frac{\sqrt{x_{23}^\star}}{\sqrt{x_{32}^\star}}.
\end{align*} 
Second we consider $x^\star_{21}, x_{23}^\star, x_{31}^\star,x_{33}^\star$.
By replacing these entries by $x^\star_{21}-v, x_{23}^\star+v, x_{31}^\star+v,x_{33}^\star-v$ we obtain the equality: 
\begin{equation*}
1+\frac{1}{x}=\frac{1}{z} +\frac{1}{y}.
\end{equation*}
Multiply the first and the second equality 
to deduce
\begin{eqnarray*}
x+\frac{1}{x}=u+\frac{1}{u} , \quad u=\frac{y}{z}\Rightarrow \textrm{ either } x=u \textrm{ or } x=\frac{1}{u}.
\end{eqnarray*}
Assume first that $x=u=y/z$. 
 Substitute that into the first equality to deduce that $z=1$, which implies that $x_{23}^\star=x_{32}^\star$.  Similarly, if $x=1/u$ we deduce that $y=1$, which implies that $x_{13}^\star=x_{31}^\star$.  Let us assume for simplicity of exposition that $x_{23}^\star=x_{32}^\star$.  Let $X(w)$ be obtained from $X^\star$ by replacing $x_{22}^\star=0,x_{23}^\star, x_{32}^\star, x_{33}^\star$ with $x_{22}^\star+w,x_{23}^\star-w, x_{32}^\star-w, x_{33}^\star+w$ for $0<w<x_{23}^\star$.  Then $X(w)$ is a minimizing matrix and has two positive diagonal entries.  This contradicts our assumption
 that $X^\star$ has only one positive diagonal entry.

We now consider the case (A2) that $x^\star_{ij}=0$ for some $i\ne j$. Part (a) of Lemma \ref{compcondlem}  yields that $x^\star_{ji}=0$.  
We claim that all four off-diagonal entries are positive.  Assume to the contrary that $x^\star_{pq}=0$ for some $p\ne q$ and $\{p,q\}\ne \{i,j\}$.  Then $x_{qp}^\star=0$.
As $\bs,\bt>\0$ we must have that $x^\star_{12}x^\star_{21}>0$ and all four other off-diagonal entries are zero.  But then $s_1=t_2, t_1=s_2, s_3=t_3$.   This is impossible since we showed that $s_3\ne t_3$.  
Hence $X^\star$ 
has exactly four positive off-diagonal entries. 

Let us assume first that  $x_{12}^\star=x_{21}^\star=0$.  Then $X^\star$ is of the form 
given by \eqref{Bopt}, where $t_3>s_1+s_2$.
We now recall again the conditions \eqref{compcond}.  As we already showed, we can assume that $a_3^\star=b_3^\star=0$.  As $x_{11}^\star=x_{22}^\star=0$ all of the first three conditions of  \eqref{compcond} hold.  As $x_{12}^\star=x_{21}^\star=0$ the second condition of  \eqref{compcond} holds trivially for $i=1,j=2$. The conditions for $i=1,j=3$ and $i=2, j=3$ are
\begin{align*}
&s_1(a_1^\star+1/2)+t_1(b_1^\star+1/2)=\sqrt{s_1t_1},\\
&s_2(a_2^\star+1/2)+t_2(b_2^\star+1/2)=\sqrt{s_2t_2}.
\end{align*}
We claim that \eqref{ab3id1} holds.  
Using the assumption that $\det M_{13}^\star\ge 1/4$ and the inequality of arithmetic and geometric means we deduce that $\det M_{13}^\star= 1/4$.  Hence 
\begin{align*}
& a_1^\star+1/2=u, \quad b_1^\star+1/2=1/(4u), \qquad \textrm{ for some }u>0, \\
& s_1u+t_1/(4u)t_1\ge \sqrt{s_1t_1}.
\end{align*}
Equality holds if and only if $u=\sqrt{t_1}/(2\sqrt{s_1})$.  This shows the first equality in 
\eqref{ab3id1}.   The second equality in \eqref{ab3id1} is deduced similarly.  We now show that the conditions \eqref{condminb} hold for $i=1,j=2,k=3$.  As $t_3>s_1+s_2$ the first condition of \eqref{ab3id1} holds.  We use the conditions that $M_{12}^\star $ is positive semidefinite.  Let $u=\sqrt{t_1}/{\sqrt{s_1}}, v=\sqrt{s_2}{\sqrt{t_2}}$.  
Then the arguments of the proof of part (b) yield
\begin{align*}
& 2(a_1^\star +b_2^\star+1)=u+v-1 >0, \quad 2(a_2^\star +b_1^\star+1)=(1/u+1/v-1)>0,\\
& 4\det M_{12}^\star =\big(1/(uv)\big)(1-u)(v-1).
\end{align*}
So either $u\ge 1$ and $v\le 1$, or $u\le 1$ and $v\ge 1$.  Hence \eqref{condminb} holds for $i=1,j=2,k=3$.  This contradicts our assumption that \eqref{condminb} does not hold. 

Let us assume second that  $x_{12}^\star>0, x_{21}^\star>0$.  Then either $x_{13}^\star=x_{31}^\star=0$ or $x_{23}^\star=x_{32}^\star=0$. By relabeling $1,2$ we can assume that $x_{23}^\star=x_{32}^\star=0$.  Hence $X^\star$ is of the form
\eqref{A2opt}, where $s_1>t_2>0, t_1>s_2>0, s_2+s_3>t_1$.  Hence the conditions
\eqref{condmin1} are satisfied with $i=1,j=2, k=3$.  We now obtain a contradiction by showing that the conditions \eqref{posa+b2} are satisfied.  This is done using the same arguments as in the previous case as follows.  
First observe that the second  nontrivial conditions of  \eqref{compcond} are:
\begin{align*}
& t_2(a_1^\star+b_2^\star +1/2)+s_2(a_2^\star +b_1^\star+1/2)=\sqrt{s_2t_2},\\
& (s_1-t_2)(a_1^\star+1/2)+t_1(b_1^\star+1/2)=\sqrt{(s_1-t_2)(t_1-s_2)}.
\end{align*}
As in the previous case we deduce that 
\begin{align*}
& a_1^\star+b_2^\star +1/2=\sqrt{s_2}/(2\sqrt{t_2}), && b_1^\star+a_2^\star +1/2=\sqrt{t_2}/(2\sqrt{s_2}),\\
& a_1^\star +1/2=\sqrt{t_1-s_2}/(2\sqrt{s_1-t_2}), && b_1^\star +1/2=\sqrt{s_1-t_2}/(2\sqrt{t_1-s_2}).
\end{align*}
 Hence \eqref{abstarceq} holds.  We now recall the proof of part (c) of the theorem. 
 We have thus shown that the minimizing matrix $X^\star$ has zero diagonal.

We now show that $O$ is an open set.  Clearly, the set of all pairs  of probability vectors  $O_1\subset \Pi_3\times \Pi_3$ such that at least one of them has a zero coordinate is a closed set. Let  $O_2, O_3, O_4\subset \Pi_3\times \Pi_3$ be the sets which satisfy the conditions (a), (b),(c) of the theorem respectively.   It it straightforward to show: $O_2$ is a closed set, and  Closure$(O_3)\subset (O_3\cup O_1)$.  We now show that Closure$(O_4)\subset O_4\cup O_1\cup O_2$.   Indeed, assume that we have a sequence $(\bs_l,\bt_l)\in O_4, l\in\N$ that converges to $(\bs,\bt)$.  It is enough to consider the case where $\bs,\bt>\0$.  Again we can assume for simplicity that each  $(\bs_l,\bt_l)$ satisfies the conditions 
\eqref{condmin1} and \eqref{posa+b2} for $i=1, j=2,k=3$.    Then we deduce that the limit of the minimizing matrices $X^\star_l$ is of the form \eqref{A2opt}.  Hence $\lim_{l\to\infty}X^\star_l=X^\star$, where $X^\star$ is of the form \eqref{A2opt}. 
Also $X^\star$ is a minimizing matrix for $\rT^Q_{C^Q}(\diag(\bs),\diag(\bt))$.  Recall that $s_2,t_2>0$.  If $s_1-t_2>0,t_1-s_2>0$ then $(\bs,\bt)\in O_4$.  So assume that $(s_1-t_2)(t_1-s_2)=0$.  As $X^\star$ is minimizes $\rT^Q_{C^Q}(\diag(\bs),\diag(\bt))$ and $\bs,\bt>\0$, part (a) of Lemma \ref{compcondlem} yields that $s_1=t_2, t_1=s_2$.  Hence $s_3=t_3$. As $X^\star$ is minimizes $\rT^Q_{C^Q}(\diag(\bs),\diag(\bt))$ we get that $\rT^Q_{C^Q}=\frac{1}{2}(\sqrt{s_2}-\sqrt{t_2})^2$.  Hence $(\bs,\bt)\in O_2$.  This shows that $O_1\cup O_2\cup O_3\cup O_4$ is a closed set.
Therefore $O=\Pi_3\times \Pi_3\setminus(O_1\cup O_2\cup O_3\cup O_4)$ is an  open set.  If $O$ is an empty set then proof of the theorem is concluded.

Assume that $O$ is a nonempty set.
Let $O'\subset O$ be an open dense subset of $O$ such that for each $(\bs,\bt)\in O'$ and each triple $\{p,q,r\}=[3]$ the  inequality $s_p\ne t_q$
and $s_p+s_q\ne t_r$ hold.  

Assume that $(\bs,\bt)\in O'$.  Let $\Gamma^{cl}_0(\bs,\bt)$ be the convex subset of $\Gamma^{cl}(\bs,\bt)$  of  matrices with zero diagonal.
We claim that any $X\in\Gamma^{cl}_0(\bs,\bt)$ has at least $5$ nonzero entries. 
Indeed, suppose that $X\in\Gamma^{cl}_0(\bs,\bt)$ has two zero 
off-diagonal entries.
As $\bs,\bt>0$ they cannot be in the same row or column. By relabeling the rows we can assume that the two zero elements are in the first and the second row. 
Suppose first that $x_{12}^\star=x_{23}^\star=0$.  Then $X=\begin{bmatrix}0&0&s_1\\
s_2&0&0\\t_1-s_2&t_2&0\end{bmatrix}$.  Thus $s_1=t_3$ which is impossible.  Assume now that $x_{12}^\star=x_{21}^\star=0$.  Then $s_1+s_2=t_3$ which is impossible.  All other choices also are impossible.  

We claim that $ \Gamma^{cl}_0(\bs,\bt)$ is spanned by two distinct extreme points $E_1,E_2$, which have exactly five positive off-diagonal elements.  Suppose first that there exists $X\in\Gamma^{cl}_0(\bs,\bt)$ which has six postive off-diagonal elements.
Let 
\begin{equation*}
B=\begin{bmatrix}0&1&-1\\-1&0&1\\1&-1&0\end{bmatrix}.
\end{equation*}
Then all matrices in $\Gamma^{cl}_0(\bs,\bt)$ are of the form  $X^\star+uB, u\in [u_1,u_2]$ for some $u_1<u_2$.  Consider the matrix $E_1=X^\star +u_1B$.  It has at least one zero off-diagonal entry
 hence we conclude 
that $E_1$ has exactly five off-diagonal positive elements.  Similarly $E_2=X+u_2B$ has five positive off-diagonal elements.  Assume now that $E\in\Gamma^{cl}_0(\bs,\bt)$ has 
five positive off-diagonal elements.
Hence there exits a small $u>0$ such that either $E+uB$ or $E-uB$ has six positive off-diagonal elements.  Hence $ \Gamma^{cl}_0(\bs,\bt)$ contains a matrix with six positive diagonal elements.  Therefore $ \Gamma^{cl}_0(\bs,\bt)$ is an interval spanned by $E_1\ne E_2\in  \Gamma^{cl}_0(\bs,\bt)$, where $E_1$ and $E_2$ have five positive off-diagonal elements.  Part (a) of Lemma \ref{compcondlem} yields that $X^\star$ has six positive off-diagonal elements.  Consider $E_1$ and assume that the $(1,2)$ entry of $E_1$ is zero.  Then
\begin{equation*}
E_1=\begin{bmatrix}0&0&s_1\\s_1+s_2-t_3&0&t_3-s_1\\s_3-t_2&t_2&0 \end{bmatrix}.
\end{equation*}

As $f(E_1+uB)$ is strictly convex on $[0,u_3]$, there exists a unique $u^\star\in (0, u_3)$ which satisfies the equation 
\begin{multline*}
-\frac{\sqrt{s_1+s_2-t_3-u}}{\sqrt{u}}+\frac{\sqrt{u}}{\sqrt{s_1+s_2-t_3-u}} -\frac{\sqrt{s_1-u}}{\sqrt{s_3-t_2+u}} \\
 +\frac{\sqrt{s_3-t_2+u}}{\sqrt{s_1-u}}-
\frac{\sqrt{t_2-u}}{\sqrt{t_3-s_1+u}}+\frac{\sqrt{t_3-s_1+u}}{\sqrt{t_2-u}}=0.
\end{multline*}
It is not difficult to show that the above equation is equivalent to a polynomial equation of degree at most $12$ in $u$.  Indeed, group the six terms into three groups, multiply by the common denominator, and pass the last group to the other side of the equality to obtain the equality:
\begin{align*}
&\sqrt{(s_1-u)(s_3-t_2+u)(t_3-s_1+u)(t_2-u)}(2u+t_3-s_1-s_2)\\
& \, +\sqrt{u(s_1+s_2-t_3-u)(t_3-s_1+u)(t_2-u)}(2u+t_3-s_1-s_2)(2u+s_3-s_1-t_2)\\
& \hspace*{1.85cm} =\sqrt{u(s_1+s_2-t_3-u)(s_3-t_2+u)(t_3-s_1+u)}(-2u+s_1+t_2-t_3).
\end{align*}
Raise this equality to the second power.  Put all polynomial terms of degree $6$ on the left hand side, and the one term with a square radical on the other side.
Raise to the second power to obtain a polynomial equation in $u$ of degree at most 12.   Hence $X^\star=E_1+u^\star B$.  This completes the proof of (e).
\end{proof}


%

  \bigskip

\bibliographystyle{plain}

\begin{thebibliography}{MMM}

\bibitem{AF17} J. Agredo and  F. Fagnola,
On quantum versions of the classical Wasserstein distance, 
{\sl Stochastics} {\bf 89} (2017), 910. 

\bibitem{AWR17} J. Altschuler, J.  Weed and P. Rigollet, Near-linear time approximation algorithms for optimal transport via Sinkhorn iteration,
\textit{NIPS'17: Proceedings of the 31st International Conference on Neural Information Processing Systems}, December 2017, pages 1961--1971.

\bibitem{ACB17} M. Arjovsky, S. Chintala and L. Bottou, Wasserstein Generative Adversarial Networks, \textit{Proceedings of the 34th International Conference on Machine Learning}, PMLR {\bf 70} (2017), 214. 

\bibitem{BZ17} I. Bengtsson and K. {\.Z}yczkowski, 
{\sl  Geometry of Quantum States}. 2 ed., Cambridge University Press, 
Cambridge 2017.

\bibitem{BGJ19} R. Bhatia, S. Gaubert and T. Jain, 
Matrix versions of the Hellinger distance,
\textit{Lett. Math. Phys.} {\bf 109} (2019), 1777--1804.

\bibitem{QML} J. Biamonte, P. Wittek, N. Pancotti, P. Rebentrost, N. Wiebe, N., and S. Lloyd, Quantum machine learning, {\sl Nature} \textbf{549} (2017), 195.

\bibitem{BV01} P. Biane and D. Voiculescu,
 A free probability analogue of the Wasserstein distance on the trace-state space,
{\sl Geom. Funct. Anal.} {\bf 11} (2001), 1125.

\bibitem{BGKL17} J. Bigot, R. Gouet, T. Klein, and A. L\'opez, Geodesic PCA in the Wasserstein space by convex PCA. 
\emph{Ann. Inst. H. Poincaré Probab. Statist.} {\bf 53} 
(2017), 1-26.

\bibitem{BEZ22} R. Bistro\'n, M. Eckstein, K. \.Zyczkowski, Monotonicity of the quantum 2-Wasserstein distance, 
{\sl preprint arXiv}:2204.07405 (2022).

\bibitem{BPPH11} N. Bonneel, M. van de Panne, S. Paris, and W. Heidrich. Displacement interpolation using Lagrangian mass transport.
 \emph{ACM Trans. Graph.} {\bf 30} 
 (2011), 158.

\bibitem{brandao2017quantum}
F.~G. S. L Brand\~{a}o and K. Svore, Quantum speed-ups for solving semidefinite programs, {\em 2017 IEEE 58th Annual
Symposium on Foundations of Computer Science (FOCS)}, pages 415--426.

\bibitem{BC94} S. L. Braunstein  and C. M.  Caves,
 Statistical distance and the geometry of quantum states, 
 {\sl Phys. Rev. Lett.} {\bf 72} (1994), 3439.
 
\bibitem{CGP20} E. Caglioti, F. Golse, and T. Paul,
 Quantum optimal transport is cheaper,
  {\sl J. Stat. Phys.}, {\bf 181} (2020), 149.

\bibitem{CM20} E. A. Carlen and J. Maas,
Non-commutative calculus, optimal transport and
functional inequalities in dissipative quantum systems,
{\sl  J. Stat. Phys.} {\bf 178} (2020), 319.

\bibitem{CHW19}  S. Chakrabarti, Y. Huang, T. Li, S. Feizi, and X. Wu, Quantum Wasserstein Generative Adversarial Networks, \textit{33rd Conference on Neural Information Processing Systems (NeurIPS 2019)}, Vancouver, Canada, arXiv:1911.00111.

\bibitem{CSVCC21} M. Cerezo, A. Sone, T. Volkoff, L. Cincio, and P. J. Coles, Cost-function-dependent barren plateaus in shallow quantum neural networks, {\sl Nat. Comm.} \textbf{12} (2021), 1791.

\bibitem{CGGT17} Y. Chen, W. Gangbo, T. T. Georgiou, and A. Tannenbaum,
On the matrix Monge-Kantorovich problem,
{\sl  Eur. J. Appl. Math.}  {\bf 31} (2020), 574.

 
 
\bibitem{CEFZ21} S. Cole, M.  Eckstein, S. Friedland and K. {\.Z}yczkowski, Quantum Optimal Transport, arXiv:2105.06922, May, 2021.

 \bibitem{CCPS}  W.J. Cook, W.H. Cunningham, W.R. Pulleyblank, A.Schrijver,
 \emph{Combinatorial Optimization}, Wiley, 1998.
 
\bibitem{Cut13}  M. Cuturi, Sinkhorn distances: Lightspeed computation of optimal transport, In C. J. C. Burges, L. Bottou, M. Welling, Z. Ghahramani, and K. Q. Weinberger, editors, Advances in Neural
Information Processing Systems 26, pages 2292-2300. Curran Associates, Inc., 2013.

\bibitem{DdK18} P.-L. Dallaire-Demers, and N. Killoran, Quantum generative adversarial networks, {\sl Phys. Rev. A} \textbf{98} (2018), 0122324.

\bibitem{DR20} N. Datta and C. Rouz{\'e},
Relating relative entropy, optimal transport and
Fisher information: A quantum HWI inequality. 
\emph{Ann. H. Poincar{\'e}} {\bf 21}  (2020), 2115.

\bibitem{PMTL20} G.  De Palma, M. Marvian, D. Trevisan and S. Lloyd, The Quantum Wasserstein Distance of Order 1, \textit{IEEE Transactions on Information Theory} (2021), doi: 10.1109/TIT.2021.3076442.

\bibitem{PT21} G.  De Palma and D. Trevisan,  Quantum Optimal Transport with Quantum
Channels,  Ann. Henri Poincar\'e {\bf 22} (2021), 3199-3234.

\bibitem {Duv20} R. Duvenhage, Quadratic Wasserstein metrics for von Neumann algebras via transport plans, arXiv:2012.03564;
  J. Operator Theory {\sl (to appear)}

 
\bibitem{FKV18} K. Filipiak, D. Klein and  E. Vojtkov{\'a},
 The properties of partial trace and block trace operators of partitioned matrices,
\emph{E. J. Linear Algebra} {\bf 33} (2018), 3--15.
 
\bibitem{FCCR18} R. Flamary, M. Cuturi, N. Courty, A. Rakotomamonjy, 
Wasserstein Discriminant Analysis, 
\emph{Machine Learning} {\bf 107} (2018), 1923--1945.

\bibitem{Frb16} S. Friedland, \emph{Matrices: Algebra, Analysis and Applications}, World Scientific, 596 pp., 2016, Singapore.


\bibitem{FriSDP} S. Friedland, Notes on semidefinite programming, Fall 2017,  http://homepages.math.uic.edu/~friedlan/SDPNov17.pdf

 \bibitem{Fri20} S. Friedland,  Tensor optimal transport, distance between sets of measures and tensor scaling, arXiv:2005.00945.
 
 \bibitem{FECZ} S. Friedland, M. Eckstein, S. Cole and K. \.Zyczkowski, Quantum Monge-Kantorovich problem and transport distance between density matrices, 
 arXiv:2102.07787, 2021.
 
 \bibitem{FGZ19} S. Friedland, J. Ge and L. Zhi, Quantum Strassen's theorem,   \emph{Infinite Dimensional Analysis, Quantum Probability and Related Topics}, 
{\bf 23} 
 (2020), 2050020 (29 pages).
 
 \bibitem{FrV18}  G. Friesecke and D.  V\"ogler,  Breaking the curse of dimension in multi-marginal Kantorovich optimal transport on finite state spaces, \emph{SIAM J. Math. Anal.} {\bf 50} (2018), no. 4, 3996-4019.




\bibitem{GLN05} A. Gilchrist,  N.K. Langford, and M.A. Nielsen, Distance measures to compare real and ideal quantum processes, 
\emph{Phys. Rev.} {\bf A71} (2005), 062310.

\bibitem{GMP16} F.  Golse,  C. Mouhot,  and T. Paul,  On the mean field and classical limits of quantum mechanics, \emph{Comm. Math. Phys} {\bf 343} (2016), 
 165-205.

\bibitem{GP18} F. Golse and T. Paul,
 Wave packets and the quadratic Monge-Kantorovich distance in quantum mechanics,
 {\sl  Comptes Rendus Math.} {\bf 356} (2018),  
 177-197.


\bibitem{Hit41} F.L. Hitchcock,  The distribution of a product from several sources to numerous localities, \emph{J. Math. Phys. Mass. Inst. Tech.} {\bf 20} (1941), 224--230.

\bibitem{HJ13} R.A.Horn and C.R. Johnson, 
\emph{Matrix analysis},
Second edition, Cambridge University Press, Cambridge, 2013. 

\bibitem{Hor96} M. Horodecki, P. Horodecki and R. Horodecki, Separability of mixed states: necessary and sufficient conditions,  \emph{Physics Letters} {\bf A 223} 
 (1996), 1--8.
 
\bibitem{Ikeda20}
K. Ikeda, Foundation of quantum optimal transport and applications,
\emph{Quantum Inform. Process.} {\bf 19} (2020), 25.

\bibitem{Jo94} R.~Jozsa,
Fidelity for mixed quantum states,
\emph{J.\ Mod.\ Opt.\ }\textbf{41} (1994), 2315--23.

 

\bibitem{Kan42} L.V. Kantorovich. 
On the translocation of masses. 
{\sl  Dokl. Akad. Nauk. USSR} {\bf 37} (1942), 199.

\bibitem{Kan48} L.V. Kantorovich, On a problem of Monge,
 {\sl Uspekhi Mat. Nauk.} {\bf 3} (1948), 225.

\bibitem{Kan60}   L.V. Kantorovich, Mathematical methods of organizing and planning production,\emph{Management Sci.} 6 (1959/60), 366--422.

\bibitem{KdPMLL21} B.T. Kiani, G. De Palma, M. Marvian, Z.-W. Liu, and S. Lloyd,
Learning quantum data with the quantum earth mover's distance, 
{\sl Quantum Sci. Technol.} \textbf{7} (2022), 045002.
 
 \bibitem{LYLW20} J. Liu, H. Yuan, X.-M. Lu, and X. Wang,
Quantum Fisher information matrix and multiparameter estimation,
{\sl J. Phys.} {\bf  A53} (2020), 023001.
 
 \bibitem{LG15} J. R. Lloyd and Z. Ghahramani, Statistical model criticism using kernel two sample tests, In \textit{Proceedings of the 28th International Conference on Neural Information Processing Systems, NIPS'15}, pages 829--837, Cambridge, MA, USA, 2015. MIT Press.

\bibitem{LW18} S. Lloyd, and C. Weedbrook, Quantum Generative Adversarial Learning, {\sl Phys. Rev. Lett.} \textbf{121}  (2018), 040502.

\bibitem{MBSBN18} J. R. McClean, S. Boixo, V. N. Smelyanskiy, R. Babbush, and H. Neven, Barren plateaus in quantum neural network training landscapes, {\sl Nat. Comm.} \textbf{9} (2018), 4812.

\bibitem{MPHUZ} J. A. Miszczak, Z. Puchala, P. Horodecki, A. Uhlmann, 
K.~{\.Z}yczkowski, Sub-- and super--fidelity as
bounds for quantum fidelity, 
\emph{Quantum Inf. Comp.} 9 (2009), 0103--0130. 

\bibitem{Mon81} G. Monge. M\'emoire sur la th\'eorie des d\'eblais et des remblais, Histoire de l’Acad\'emie Royale des Sciences de Paris, avec les M\'emoires de Math\'ematique et de Physique pour la m\^eme ann\'ee (1781), pages 666--704.
 
 \bibitem{MJ15} J. Mueller and T. Jaakkola, Principal differences analysis: Interpretable characterization of differences between distributions, In \textit{Proceedings of the 28th International Conference on Neural Information Processing Systems, NIPS'15}, pages 1702--1710, Cambridge, MA, USA, 2015. MIT Press.

 
 \bibitem{PZ16} V. M. Panaretos and Y. Zemel, Amplitude and phase variation of point processes, \textit{Ann. Statist.} {\bf 44} (2016), 771--812.
 

\bibitem{PT11} A. N. Pechen, and D.J. Tannor, Are there traps in quantum control
 landscapes?, {\sl Phys. Rev. Lett} \textbf{106} (2011), 120402.

\bibitem{Per96} A. Peres, Separability criterion for density matrices, 
\emph{Phys. Rev. Lett.} {\bf 77} (1996), 1413--1415. 

\bibitem{Ren} R. Renner, Quantum Information Theory, Exercise Sheet 9, http://edu.itp.phys.ethz.ch/hs15/QIT/ex09.pdf

\bibitem{Rie18} M.H. Riera, A transport approach to distances
in quantum systems, Bachelor's thesis for the degree in
Physics, Universitat Aut\`onoma de Barcelona, 2018.
 
 \bibitem{RTG00} Y. Rubner, C. Tomasi, and L. J. Guibas, The earth mover’s distance as a distance for image retrieval,
\emph{Int. J. Comput. Vision}, {\bf 40} 
 (2000), 99-121.
 

\bibitem{SGPCBNDG} J. Solomon, F. de Goes, G. Peyré, M. Cuturi, A. Butscher, A. Nguyen, T. Du, and L. Guibas,
Convolutional Wasserstein distances: Efficient optimal transportation on geometric domains, 
\emph{ACM Trans. Graph.} {\bf 34} 
 (2015), 66.
 
 \bibitem{SL11} R. Sandler and M. Lindenbaum, 
Nonnegative matrix factorization with earth mover’s distance distance for image analysis, 
\emph{IEEE Trans. Pattern Anal. Mach. Intell.}
{\bf 33} 
 (2011), 1590--1602.
 
 \bibitem{Sa17} D. {\v S}afr{\'a}nek,
Discontinuities of the quantum Fisher information and the Bures distance,
{\sl Phys. Rev.} {\bf  A 95} (2017), 052320.

\bibitem{SR04} G. J. Sz\'ekely and M. L. Rizzo. Testing for equal distributions in high dimension, \emph{Inter-Stat. (London)} {\bf 11} (2004), 1--16.
 
\bibitem{Uh76} A. Uhlmann,
The `transition probability' in the state space of a *-algebra,
{\sl Rep. Math. Phys.} {\bf 9} (1976), 273.


\bibitem{VB96} L. Vandenberghe and S. Boyd, Semidefinite programming, \textit{SIAM Rev.} {\bf 38}  (1996),
 49--95.

\bibitem{Vas69}  L.N. Vasershtein, Markov processes over denumerable products of spaces describing large system of automata, \emph{Problems Inform. Transmission}
{\bf  5} (1969), 
47--52. 


\bibitem{Vil09} C. Villani, {\sl Optimal transport, Old and new}, Grundlehren der Mathematischen Wissenschaften, 338, Springer-Verlag, Berlin, 2009.

\bibitem{WFCSSCC20} S. Wang, E. Fontana, M. Cerezo, K. Sharma, A. Sone, L. Cincio, and P.J. Coles, Noise-induced barren plateaus in variational quantum algorithms, {\sl Nature Communications} \textbf{12} (2021), 6961. 

\bibitem{WWW21} Z. Wang, Y. Wang and  B. Wu,
Genuine quantum chaos and physical distance between quantum states,
{\sl Phys. Rev.} {\bf  E 103} (2021), 042209.

\bibitem{Wer89} R.F. Werner, Quantum states with Einstein-Podolsky-Rosen correlations admitting a hidden-variable mode, 
\emph{Phys. Rev.} {\bf A 40} (1989), 4277.

\bibitem{Win16}  A. Winter, Tight uniform continuity bounds for quantum entropies: conditional entropy, relative entropy distance and energy constraints, 
\emph{Comm. Math. Phys.} {\bf 347} (2016),  
 291--313.


\bibitem{Mathematica} Wolfram Research{,} Inc., Mathematica{,} Version 12.2, Champaign, IL, USA, 2020, \url{https://www.wolfram.com/mathematica}

\bibitem{YZYY19} N.Yu, L. Zhou, S. Ying, M. Ying, Quantum Earth mover's distance,
No-go Quantum Kantorovich-Rubinstein theorem, and Quantum Marginal
Problem, arXiv:1803.02673.


 \bibitem {ZS98} K. {\.Z}yczkowski and  W. S{\l}omczy{\'n}ski,
 Monge distance between quantum states,
  {\sl J.Phys.} {  A 31}  (1998),  9095--9104. 

\bibitem{ZS01} K. {\.Z}yczkowski and W. S{\l}omczy{\'n}ski,
The Monge distance on the sphere and geometry of quantum states,
{\sl  J. Phys.} {A 34} (2001), 6689.

\end{thebibliography}

\end{document}